\newtheorem{thrm}{Theorem}
\newtheorem{lemm}{Lemma}
\newtheorem{deffn}{Definition}
\newtheorem{propo}{Proposition}
\newtheorem{assump}{Assumption}
\def \F {\mathbb{F}}
\def \R {\mathbb{R}}
\def \C {\mathbb{C}}
\def \ssconstant {\beta}
\def \diff {\mathrm{d}}
\def \statedim {d_X}
\def \controldim {d_U}
\def \noisedim {d_W}
\def \totaldim {d}
\def \policy {\boldsymbol{\pi}}
\def \optimalpolicy {\boldsymbol{\pi^\star}}
\def \Qmat {Q}
\def \Rmat {R}
\def \truth {A_\star,B_\star}
\def \stabradii {\rho}
\def \auxA {\widetilde{A}}
\def \auxQ {\widetilde{Q}}
\def \auxG {\widetilde{G}}
\def \auxB {\widetilde{B}}
\def \auxR {\widetilde{R}}
\def \auxP {\widetilde{P}}
\def \auxpolicy {\widetilde{\pi}}
\def \RiccUpperBound {\zeta}
\def \curve {\varphi}
\def \algoconstant {\boldsymbol{\mathrm{\omega}}_{\policy}}
\def \learnconstant {\boldsymbol{\mathrm{\omega}}_{\mathcal{E}}}
\def \regconstant {\boldsymbol{\mathrm{\omega}}_{\mathcal{R}}}
\def \variance {\sigma}
\newcommand{\Mnorm}[2]{{\left\vert\kern-0.35ex\left\vert\kern-0.35ex\left\vert #1 
		\right\vert\kern-0.35ex\right\vert\kern-0.35ex\right\vert}}
\newcommand{\norm}[2]{{\left\vert\kern-0.25ex\left\vert #1 
		\right\vert\kern-0.25ex\right\vert}}
\newcommand{\eigmax}[1]{\boldsymbol{\lambda}_{\max} \left( #1 \right)}
\newcommand{\eigmin}[1]{\boldsymbol{\lambda}_{\min} \left( #1 \right)}
\newcommand{\tr}[1]{\boldsymbol{\mathrm{tr}} \left( #1 \right)}
\newcommand{\PP}[1]{%
	\mathbb{P}{\ifthenelse{ \equal{#1}{} }{}{\left(#1\right)}}
}%
\newcommand{\E}[1]{\mathbb{E}\left[#1\right]}
\newcommand{\regret}[2]{\boldsymbol{\mathcal{R}}_{#2} \left(#1\right)}
\newcommand{\RiccSol}[1]{\mathcal{K}\left(#1\right)}
\newcommand{\Optgain}[1]{\mathcal{L}\left(#1\right)}
\newcommand{\Gainmat}[1]{L_{#1}}
\newcommand{\avecost}[2]{{\mathcal{J}}_{#1}}
\newcommand{\optavecost}[1]{\overline{\mathcal{J}}^\star }
\newcommand{\optdisccost}[1]{\mathcal{J}_\gamma^\star}
\newcommand{\instantcost}[2]{c_{#2}\left(#1\right)}
\newcommand{\order}[1]{ \mathcal{O} \left(#1\right)}
\newcommand{\orderlog}[1]{\widetilde{\mathcal{O}} \left(#1\right)}
\newcommand{\Amat}[1]{A_{#1}}
\newcommand{\Bmat}[1]{B_{#1}}
\newcommand{\CLmat}[1]{D_{#1}}
\newcommand{\estpara}[1]{{A}_{#1},{B}_{#1}}
\newcommand{\estA}[1]{{A}_{#1}}
\newcommand{\estB}[1]{{B}_{#1}}
\newcommand{\estD}[1]{{D}_{#1}}
\newcommand{\MatOpAve}[2]{\Phi_{#1}\left(#2\right)}
\newcommand{\empiricalcovmat}[1]{V_{#1}}
\newcommand{\filter}[1]{\mathcal{F}_{#1}}
\newcommand{\state}[1]{X_{#1}}
\newcommand{\statetwo}[1]{Y_{#1}}
\newcommand{\action}[1]{U_{#1}}
\newcommand{\itointeg}[4]{\int\limits_{#1}^{#2} {#3} \diff {#4}}
\newcommand*{\BM}[1]{
	W_{\ifthenelse{ \equal{#1}{} }{}{#1}}
}%
\newcommand{\BMcoeff}[1]{C}
\newcommand{\parameter}[1]{A_{#1},B_{#1}}
\newcommand{\normaldist}[2]{\mathcal{N} \left( #1, #2 \right)}
\newcommand{\sigfield}[1]{\sigma \left( #1 \right)}
\newcommand{\Learnerror}[2]{\mathcal{E}_{#1}\left( #2 \right)}
\newcommand{\mosteig}[1]{\boldsymbol{\overline{\lambda}} \left( #1 \right)}
\newcommand{\episodetime}[1]{\gamma^{#1}}
\newcommand{\paraspace}[1]{\mathcal{S}_{#1}}
\newcommand{\regterm}[1]{\boldsymbol{\alpha}_{#1}}
\newcommand{\randommatrix}[1]{\Theta_{#1}}
\newcommand{\erterm}[1]{\Delta_{#1}}
\newcommand{\project}[2]{\Pi_{#1}\left(#2\right)}
\newcommand{\trans}[1]{D}
\newcommand{\mult}[1]{\boldsymbol{\mu}_{#1}}
\newcommand{\diag}[1]{\mathrm{diag}\left(#1\right)}
\newcommand{\valuefunc}[2]{\mathcal{V}_{#1}\left(#2\right)}
\newcommand{\auxvaluefunc}[2]{\widetilde{\mathcal{V}}_{#1}\left(#2\right)}
\newcommand{\manifold}[1]{\mathcal{#1}}
\begin{document}
\title{\bf Reinforcement Learning Policies in Continuous-Time Linear Systems}
\author{Mohamad Kazem Shirani Faradonbeh, Mohamad Sadegh Shirani Faradonbeh} 
\maketitle

\begin{abstract}%
  Linear dynamical systems that obey stochastic differential equations are canonical models. While optimal control of known systems has a rich literature, the problem is technically hard under model uncertainty and there are hardly any results. We initiate study of this problem and aim to learn (and \emph{simultaneously} deploy) optimal actions for minimizing a quadratic cost function. Indeed, this work is the first that comprehensively addresses the crucial challenge of balancing exploration versus exploitation in continuous-time systems. We present online policies that learn optimal actions fast by carefully randomizing the parameter estimates, and establish their performance guarantees: a regret bound that grows with \emph{square-root of time} multiplied by the \emph{number of parameters}. Implementation of the policy for a flight-control task demonstrates its efficacy. Further, we prove sharp stability results for inexact system dynamics and tightly specify the infinitesimal regret caused by sub-optimal actions. To obtain the results, we conduct a novel eigenvalue-sensitivity analysis for matrix perturbation, establish upper-bounds for comparative ratios of stochastic integrals, and introduce the new method of policy differentiation. Our analysis sheds light on fundamental challenges in continuous-time reinforcement learning and suggests a useful cornerstone for similar problems.
\end{abstract}

\begin{IEEEkeywords}%
  Ito Process; {Regret Bounds}; {Stability Analysis}; 
  Exploration vs Exploitation; Randomized Estimates.
\end{IEEEkeywords}

\section{Introduction}

State-space models are widely-used for decision-making in dynamic environments. A popular such model is the one that represents the continuous-time dynamics of the environment by linear stochastic differential equations. In this setting, the multidimensional state of the system is driven by the control action and the Brownian noise, according to an Ito stochastic differential equation. The range of application areas is extensive, including chemistry, biology, finance, insurance, and engineering~\citep{gillespie2007stochastic,schmidli2007stochastic,pham2009continuous,lawrence2010learning}. 

In many applications, uncertainties about the true dynamics necessitate reinforcement learning policies that adaptively learn optimal actions. Unlike the continuous-time setting, reinforcement learning policies are extensively studied in discrete-time systems. The literature is rich and includes efficient algorithms that use optimism in the face of uncertainty, posterior sampling, or bootstrap~\citep{abbasi2011regret,abeille2018improved,ouyang2019posterior,faradonbeh2019applications,faradonbeh2020optimism,dean2020sample,faradonbeh2020adaptive, faradonbeh2020input}, and regret bounds are shown in the presence of domain knowledge and partial observations~\citep{cassel2020logarithmic,ziemann2020phase,ziemann2020regret,asghari2020regret,lale2020logarithmic}.

On the other hand, the existing literature for continuous-time systems is still immature, mainly due to technical difficulties that will be discussed shortly. Early papers focus on estimation after an infinitely long interaction with the environment~\citep{mandl1988consistency,mandl1989consistency,duncan1990adaptive,duncan1992least,duncan1999adaptive}. Recently, sub-optimal policies with linear regret bounds are proposed and consistency is shown for systems with full-rank input matrices~\citep{caines2019stochastic}. Ensuing papers study offline algorithms for computing the control actions according to a batch of data, using methods such as dynamic programming and entropy regularization
~\citep{doya2000reinforcement,rizvi2018output,wang2020reinforcement,basei2021logarithmic}. 

{However, \emph{online} reinforcement learning policies that can learn optimal actions from a \emph{single} state trajectory without imposing undue costs, are currently unavailable. The existing results are merely asymptotic, require restrictive assumptions, and provide linear regret bounds~\citep{duncan1999adaptive,caines2019stochastic}. The only exception is a recent paper that appeared after the first version of this work~\citep{shirani2022thompson}. A fundamental challenge of online policies is that they need to \emph{simultaneously} minimize the cost and estimate the unknown parameters. These two goals contradict and constitute the exploration-exploitation dilemma; accurate estimation is necessary for optimal decision-making, while sub-optimal actions are required for obtaining accurate estimates. This crux remains unsolved in continuous-time systems as conventional frameworks are incapable of relating exploring actions to estimation accuracy and optimal policies. In fact, the discrete-time analysis fails in Ito processes that the evolution is infinitesimal, and the signal is highly dominated by noise. 

{The main contributions of this paper can be summarized as follows.
	In Algorithm~\ref{algo1}, we propose an efficient online reinforcement learning policy based on randomized estimates of the unknown system matrices. Then, we establish the rates for the error in learning the dynamics matrices, and for the regret that the algorithm incurs. Algorithm~\ref{algo1} is easy to implement, yet it learns the optimal actions \emph{fast} so that its regret at time $T$ is $\order{{T}^{1/2} \log T}$ (Theorem~\ref{BoundsThm}). So, the per-unit-time sub-optimality caused by uncertain system parameters decays with time as $\orderlog{t^{-1/2}}$ under Algorithm~\ref{algo1}, which is the first efficiency result for online policies. Furthermore, we study stability of linear systems for inaccurate system matrices and establish the stabilizability margin (Theorem~\ref{StabThm}). Finally, a sharp regret expression is provided that fully captures sub-optimalities due to inaccuracies in approximating the optimal actions (Theorem~\ref{GeneralRegretThm}). The results provide both the average-case and worst-case analyses, the presented bounds are tight, and the technical assumptions are minimal.} 

To study online reinforcement learning policies, one needs to address the following technical challenges. First, sensitivity analysis of (complex) eigenvalues of matrices with perturbed entries is needed. Further, anti-concentration results on singular values of partially-random matrices are required. Finally, we need to accurately characterize the time-varying sub-optimalities in cost function in terms of model uncertainties. Thus, we develop multiple novel techniques for (i)~\emph{matrix-perturbation} analysis, (ii) spectral properties of \emph{random matrices}, (iii) comparative ratios of \emph{stochastic integrals}, and also (iv) introduce \emph{policy differentiation} to precisely capture the infinitesimal cost of sub-optimal actions. Note that (iii) and (iv) above do \emph{not} appear in discrete-time settings. The former two are different in differential and difference equations, such that the existing literature fails to properly address them in continuous-time systems. We also use various tools from Ito calculus and stochastic analysis, including Hamilton-Jacobi-Bellman partial differential  equations, Ito Isometry, as well as (dominated and martingale) convergence theorems for continuous-time stochastic processes~\citep{yong1999stochastic,oksendal2013stochastic,baldi2017stochastic}.

This paper is organized as follows. In Section~\ref{ProblemStatement}, we discuss the problem and preliminary material. Then, in Section~\ref{StabilitySection}, we study system stability when the control action is designed based on dynamics matrices \emph{other than} the true ones and establish stabilizability guarantees. Next, we examine effects of sub-optimal actions and the regret they cause, in Section~\ref{SubOptSection}. Section~\ref{AlgoSection} contains the randomized-estimates policy of Algorithm~\ref{algo1}, as well as its theoretical analysis showing efficiency. Experimental results are presented in Section~\ref{NumSection}, and the paper is concluded in Section~\ref{ConclusionSection}. 
Because of space limitations, all proofs are delegated to the appendices, as outlined on page \pageref{toc}.

The following notation will be used in this work.
The smallest (largest) eigenvalue of $A$, in magnitude, is $\eigmin{A}$ ($\eigmax{A}$). For $v \in \C^d$, its norm is defined as $\norm{v}{}^2 = {\sum_{i=1}^{d} \left| v_i \right|^2}$. Moreover, we write $\Mnorm{A}{}$ for the operator norm of matrices;   $\Mnorm{A}{} = \sup_{\norm{v}{2}=1} \norm{Av}{2}$,
and $A^\dagger$ for Moore-Penrose generalized inverse. 
The sigma-field generated by the stochastic process $\left\{ \statetwo{s} \right\}_{0 \leq s \leq t}$ is denoted by $\sigfield {Y_{0:t}}$. 
A multivariate normal distribution with mean $\mu$ and covariance matrix $\Sigma$ is shown by $\normaldist{\mu}{\Sigma}$. For $\lambda \in \C$, we use $\Re \left( \lambda \right) , \Im \left( \lambda \right)$ to denote the real and imaginary parts of $\lambda$, respectively. The symbol $\vee$ (resp., $\wedge$) is used to show the maximum (resp., minimum). Finally, $\order{\cdot}$ refers to the order of magnitude.

\section{Problem Statement} \label{ProblemStatement}
We study reinforcement learning policies for a multidimensional Ito stochastic differential equation with unknown drift matrices. That is, the state vector at time $t$ is $\state{t} \in \R^{\statedim}$, which follows
\begin{equation} \label{dynamics}
\diff \state{t} = \left( \Amat{\star} \state{t} + \Bmat{\star} \action{t} \right) \diff t + \BMcoeff{t} \diff \BM{t},
\end{equation}
the vector $\action{t} \in \R^{\controldim}$ is the control action at time $t$, and the disturbance $\left\{ \BM{t}\right\}_{t \geq 0}$ is a standard Brownian motion in a $\noisedim$ dimensional space. Technically, by fixing the probability space $(\Omega, \F, \PP{})$ which is completed by adding the null-sets of $\PP{}$, let all stochastic objects belong to this probability space, and let $\E{\cdot}$ be the expectation with respect to $\PP{}$ (unless otherwise explicitly stated). The Brownian motion $\left\{ \BM{t} \right\}_{t \geq 0}$ starts from the origin, and has continuous sample paths as well as independent normally distributed increments. That is, $\BM{0}=0$, for all $0 \leq t_1 \leq t_2 \leq t_3 \leq t_4$, the vectors $\BM{t_2}-\BM{t_1}$ and $\BM{t_4}-\BM{t_3}$ are statistically independent, and for all non-negative reals $s<t$, it holds that $\BM{t}-\BM{s} \sim \normaldist{0}{\left( t-s \right) I_{\noisedim} }$.
Furthermore, $\BMcoeff{}\in \R^{\statedim \times \noisedim}$ reflects the effect of $\BM{t}$ on the state evolution. 

We aim to design computationally tractable and provably efficient reinforcement learning policies for the system in~\eqref{dynamics}. The transition matrix $\Amat{\star}$, the input matrix $\Bmat{\star}$, and the noise-coefficient matrix $\BMcoeff{}$,  \emph{all are unknown}. The goal is to minimize the expected average cost 
\begin{equation*}
\avecost{\policy}{}=\limsup\limits_{T \to \infty} \frac{1}{T} ~ \E{\itointeg{0}{T}{\instantcost{\policy}{t} }{t}},
\end{equation*}
where $\instantcost{\policy}{t}= \state{t}^{\top} \Qmat \state{t}+ \action{t}^{\top} \Rmat \action{t}$
is the cost of policy $\policy$ at $t$, its value being determined by the positive definite matrices $\Qmat,\Rmat$ of proper dimensions, as explained below.  

The policy $\policy$ is non-anticipative closed-loop: At every time $t$, $\policy$ determines $\action{t}$ according to the information available at the time. More precisely, $\policy$ maps the state observations (i.e., $\state{s} $ for $s \in \left[0,t\right]$) and the previously taken actions (i.e., $ \action{s}$ for $s$ in the semi-open interval $\left[0,t\right[$) to the current control action $\action{t}$. This mapping can be stochastic or deterministic. Importantly, $\policy$ faces the fundamental exploration-exploitation dilemma for minimizing the expected average cost, because the dynamics matrices $\truth$ are unknown and need to be learned based on the state and action observations. The details of this dilemma is discussed in Section~\ref{AlgoSection}. We assume that $\Qmat, \Rmat$ are known to the policy, the rationale being that the decision-makers are aware of the objective they aim to achieve, while their uncertainty about the environment impedes them from deciding optimally. 

The benchmark for assessing reinforcement learning policies is the optimal policy $\optimalpolicy$ that designs $\action{t}$ \emph{having access} to $\truth$. To define $\optimalpolicy$, let $\MatOpAve{\truth}{\cdot} : \R^{\statedim \times \statedim} \to \R^{\statedim \times \statedim}$ be
\begin{equation*}
\MatOpAve{\truth}{M} = \Amat{\star}^{\top} M + M \Amat{\star} - M \Bmat{\star} \Rmat^{-1} \Bmat{\star}^{\top} M + \Qmat.
\end{equation*}
This function is vital for finding $\optimalpolicy$. To see the intuition, first note that an action $\action{t}$ directly influences the current cost $\instantcost{\policy}{t}$, and indirectly affects the future cost values according to~\eqref{dynamics}. So, to capture future consequences of decisions, $\MatOpAve{\truth}{\cdot}$ is employed~\citep{yong1999stochastic}. To proceed toward identifying $\optimalpolicy$, let the positive semidefinite matrix $M=\RiccSol{\truth}$ solve $\MatOpAve{\truth}{M}=0$. To investigate existence and uniqueness of $\RiccSol{\truth}$, we need the followings.
\begin{deffn} [Notations $\mosteig{\cdot}, \Learnerror{}{\cdot}$] \label{NotationDef}
	{Let $\mosteig{D}$ be the largest real-part of the eigenvalues of an arbitrary square matrix $D$:
		$\mosteig{D} = \max \left\{ \Re \left( \lambda \right): \det \left( D - \lambda I \right)=0 \right\}$.
		Further, for arbitrary matrices $\estA{} \in \R^{\statedim \times \statedim}$, $\estB{} \in \R^{\statedim \times \controldim}$, define $\Learnerror{}{\estpara{}} = \Mnorm{\estA{}-\Amat{\star}}{2} + \Mnorm{\estB{}-\Bmat{\star}}{2}$.
		So, $\Learnerror{}{\estpara{}}$ measures the deviation of $\estpara{}$ from the true dynamics matrices $\truth$.}
\end{deffn}
Note that unlike $\eigmax{\cdot}$ that considers only \emph{magnitudes} of the eigenvalues, $\mosteig{\cdot}$ reflects the signs of the eigenvalues as well, and so can be either positive, zero, or negative. However, they are related according to $\eigmax{e^{D}}=e^{\mosteig{D}}$.

We assume that the true dynamics matrices $\truth$ are stabilizable, in the following sense:
\begin{assump} [Stabilizability] \label{StabAssump} 
	There exists some $\Gainmat{} \in \R^{\controldim \times \statedim}$, such that $\mosteig{\Amat{\star}+\Bmat{\star}\Gainmat{}} <0$.
\end{assump}
Assumption~\ref{StabAssump} expresses that by applying $\action{t}=\Gainmat{}\state{t}$, the system can operate without any explosion. To see that, solve the differential equation~\eqref{dynamics} under the feedback policy $\action{t}=\Gainmat{}\state{t}$ to obtain
\begin{equation} \label{stateevol}
\state{t}=e^{\left(\Amat{\star}+\Bmat{\star}\Gainmat{}\right)t} \state{0}+ \itointeg{0}{t}{e^{\left(\Amat{\star}+\Bmat{\star}\Gainmat{}\right)(t-s)} \BMcoeff{}}{\BM{s}} .
\end{equation}
So, because of $\mosteig{\Amat{\star}+\Bmat{\star}\Gainmat{}} <0$, $\state{t}$ does not grow unbounded with $t$. Importantly, existence of a stabilizing matrix $\Gainmat{}$ is \emph{necessary} for the problem to be well-defined. Otherwise, state explosion renders the average cost infinite  for \emph{all} decision-making policies~\citep{chen1995linear,yong1999stochastic}. Now, recall that $\MatOpAve{\truth}{\RiccSol{\truth}}=0$, let $\Optgain{\truth}= - \Rmat^{-1} \Bmat{\star}^{\top}\RiccSol{\truth}$, and define the linear feedback policy 
\begin{equation} \label{OptimalPolicy} 
\optimalpolicy : ~~~~~~~~~~~~~~~ \action{t}=\Optgain{\Amat{\star},\Bmat{\star}} \state{t}, ~~~~~~\forall t \geq 0.
\end{equation}
We show that Assumption~\ref{StabAssump} suffices for unique existence of $\RiccSol{\truth}$ and for optimality of $\optimalpolicy$. 
\begin{thrm}[Optimal policy] \label{OptimalityProof}
	Under Assumption~\ref{StabAssump}, the matrix $\RiccSol{\truth}$ uniquely exists, and the policy $\optimalpolicy$ in~\eqref{OptimalPolicy} gives 
	$$\avecost{\optimalpolicy}{}= \inf\limits_{\policy} \avecost{\policy}{} = \tr{  \RiccSol{\truth} \BMcoeff{} \BMcoeff{}^{\top} }, ~~~~~~~~~~~~\text{and}~~~~~~~~~~ \mosteig{\Amat{\star}+\Bmat{\star}\Optgain{\truth}}<0.$$
\end{thrm}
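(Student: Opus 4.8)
The plan is to prove the theorem in two stages. First I would establish unique existence of a positive-definite \emph{stabilizing} solution $\RiccSol{\truth}$ of the algebraic Riccati equation $\MatOpAve{\truth}{M}=0$ (which simultaneously yields the closed-loop stability claim $\mosteig{\Amat{\star}+\Bmat{\star}\Optgain{\truth}}<0$). Then I would run a verification, or completion-of-squares, argument based on Ito's formula to show that $\optimalpolicy$ attains the infimum and to evaluate the optimal average cost.

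For the first stage I would obtain $\RiccSol{\truth}$ as a limit of the finite-horizon Riccati flow: let $P_\tau$ be the solution of the matrix Riccati differential equation $\dot P = -\MatOpAve{\truth}{P}$ run for duration $\tau$ from terminal condition $0$. Nonnegativity of the stage cost (from $\Qmat,\Rmat\succeq 0$) makes $P_\tau$ monotone nondecreasing in $\tau$, while Assumption~\ref{StabAssump} supplies a fixed linear policy $\action{t}=\Gainmat{}\state{t}$ whose finite-horizon cost is bounded uniformly in $\tau$, giving a uniform upper bound on $P_\tau$. The monotone bounded family converges to $\RiccSol{\truth}\succeq 0$ solving $\MatOpAve{\truth}{\RiccSol{\truth}}=0$. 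Positive-definiteness is then immediate: if $\RiccSol{\truth}x=0$, pre- and post-multiplying the Riccati equation by $x$ leaves only $x^\top \Qmat x=0$, so $x=0$ because $\Qmat\succ 0$. With $\RiccSol{\truth}\succ0$, the closed-loop matrix obeys the Lyapunov identity
\[
(\Amat{\star}+\Bmat{\star}\Optgain{\truth})^\top \RiccSol{\truth} + \RiccSol{\truth}(\Amat{\star}+\Bmat{\star}\Optgain{\truth}) = -\big(\Qmat + \RiccSol{\truth}\Bmat{\star}\Rmat^{-1}\Bmat{\star}^\top \RiccSol{\truth}\big)\prec 0,
\]
which forces $\mosteig{\Amat{\star}+\Bmat{\star}\Optgain{\truth}}<0$. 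The same identity shows every positive semidefinite solution is stabilizing, and a difference-of-solutions (Lyapunov) argument then gives uniqueness.

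For the second stage I set $V(x)=x^\top \RiccSol{\truth}x$ and apply Ito's formula to $V(\state{t})$ along~\eqref{dynamics} under an arbitrary admissible policy. Writing $\delta_t=\action{t}-\Optgain{\truth}\state{t}$, computing the controlled generator, and completing the square in $\action{t}$ while using $\MatOpAve{\truth}{\RiccSol{\truth}}=0$ produces the pointwise identity
\[
\instantcost{\policy}{t} = \tr{\RiccSol{\truth}\BMcoeff{}\BMcoeff{}^\top} + \delta_t^\top \Rmat \delta_t - \mathcal{G}V(\state{t}),
\]
where $\mathcal{G}V(\state{t})$ denotes the generator along~\eqref{dynamics} acting on $V$ and the suppressed $\diff\BM{t}$ term $2\state{t}^\top\RiccSol{\truth}\BMcoeff{}\diff\BM{t}$ is a local martingale. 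Integrating on $[0,T]$, taking expectations after localization to remove the martingale, and dividing by $T$ yields
\[
\frac1T\E{\itointeg{0}{T}{\instantcost{\policy}{t}}{t}} = \tr{\RiccSol{\truth}\BMcoeff{}\BMcoeff{}^\top} + \frac{V(\state{0})-\E{V(\state{T})}}{T} + \frac1T\E{\itointeg{0}{T}{\delta_t^\top \Rmat \delta_t}{t}}.
\]

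The conclusion follows by analyzing the limit in $T$. For the lower bound I restrict to policies with finite average cost (the bound being vacuous otherwise); since $\Qmat\succ0$, finiteness bounds $T^{-1}\E{\itointeg{0}{T}{\norm{\state{t}}{}^2}{t}}$, so along some subsequence $\E{V(\state{T_k})}/T_k\to0$, and dropping the nonnegative penalty $\delta_t^\top\Rmat\delta_t$ gives $\avecost{\policy}{}\ge \tr{\RiccSol{\truth}\BMcoeff{}\BMcoeff{}^\top}$. Under $\optimalpolicy$ the penalty vanishes identically and, because $\Amat{\star}+\Bmat{\star}\Optgain{\truth}$ is Hurwitz (stage one), $\state{t}$ is a stable Ornstein--Uhlenbeck process whose second moment converges; hence $\E{V(\state{T})}/T\to0$ and equality holds, giving $\avecost{\optimalpolicy}{}=\tr{\RiccSol{\truth}\BMcoeff{}\BMcoeff{}^\top}$. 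I expect the main obstacle to be the rigorous handling of the boundary term $\E{V(\state{T})}/T$ together with the martingale integrability for arbitrary admissible (possibly non-Markovian and non-stabilizing) policies: this is exactly where the noise-dominated, infinitesimal nature of~\eqref{dynamics} defeats a naive averaging argument, and a localization plus subsequence extraction (using $\Qmat\succ0$) is needed.
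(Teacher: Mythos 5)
Your proposal is correct, but it takes a genuinely different route from the paper's proof. The paper never invokes a verification theorem: it discretizes time by freezing the control on intervals of length $\epsilon$, solves the resulting discrete-time dynamic program, shows via Assumption~\ref{StabAssump} that the discrete Riccati iteration converges to the discrete algebraic Riccati solution, and then passes $\epsilon \to 0$ to recover the continuous Riccati equation $\MatOpAve{\truth}{M}=0$ and the Hamilton--Jacobi--Bellman equation; the stability claim $\mosteig{\Amat{\star}+\Bmat{\star}\Optgain{\truth}}<0$ is extracted from the integral representation \eqref{LyapInteg2} obtained by specializing to the noiseless system, and optimality over all non-anticipative policies is then argued by showing that any competitor with smaller cost has continuous action paths and can be approximated by frozen policies. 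Your two stages replace all of this with the classical continuous-time arguments: existence via the monotone, uniformly bounded finite-horizon Riccati flow; stability via the Lyapunov identity $\CLmat{\star}^{\top}\RiccSol{\truth}+\RiccSol{\truth}\CLmat{\star}=-\left(\Qmat+\RiccSol{\truth}\Bmat{\star}\Rmat^{-1}\Bmat{\star}^{\top}\RiccSol{\truth}\right)$; uniqueness via a Sylvester-equation argument; and optimality via Ito's formula with completion of squares. What your route buys is a self-contained treatment of arbitrary, possibly non-Markovian admissible policies---the completion-of-squares identity holds pathwise, so the approximation-by-frozen-policies step (the least rigorous part of the paper's argument) is not needed---together with the explicit penalty term $\delta_t^{\top}\Rmat\delta_t$, which anticipates the regret decomposition the paper only develops later in Theorem~\ref{GeneralRegretThm}. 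What the paper's route buys is its by-products, which are reused downstream: the HJB equation, the finite-horizon value function \eqref{OptProofEq9}, and the representations \eqref{LyapInteg2}--\eqref{LyapDifferential} cited in the proofs of Theorem~\ref{StabThm}, Theorem~\ref{GeneralRegretThm}, and the auxiliary lemmas.

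One step you should tighten: for the lower bound, localization by itself goes the wrong way---applying Fatou's lemma to the stopped identity yields an \emph{upper} bound on the expected cost, because the local-martingale term may have nonzero expectation in the limit. The clean fix is already implicit in your hypotheses: finite average cost forces $\E{\itointeg{0}{T}{\instantcost{\policy}{t}}{t}}<\infty$ for every finite $T$, and then $\Qmat \succ 0$, $\Rmat \succ 0$ make the state and action square-integrable on $[0,T]$, so the integrand $2\state{t}^{\top}\RiccSol{\truth}\BMcoeff{}$ of the stochastic integral is in $L^2$ and the integral is a true martingale with zero mean; your subsequence extraction killing the boundary term $\E{V(\state{T_k})}/T_k \to 0$ then closes the argument exactly as you describe.
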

To compute $\RiccSol{\truth}$, it suffices to solve the differential equation $\dot{M}_t=\MatOpAve{\truth}{M_t}$ starting from a positive semidefinite $M_0$, or equivalently calculate the integral $M_t=M_0 + \itointeg{0}{t}{\MatOpAve{\truth}{M_s}}{s}$.
In the proof of Theorem~\ref{OptimalityProof}, we show that $\lim\limits_{t\to \infty}M_t = \RiccSol{\truth}$.

Next, we focus on learning $\truth$ and the additional cost compared to the cost of $\optimalpolicy$ that we are charged for, because of uncertainties about $\truth$. 
To that end, we formulate sub-optimalities in the performance of decision-making policies and the penalty due to lack of knowledge about the optimal actions $\action{t}=\Optgain{\truth}\state{t}$. For a general policy $\policy$, the \emph{regret} of $\policy$ at time $T$ is denoted by $\regret{T}{\policy}$ and is defined as the cumulative increase in cost by time $T$. That is, the difference between the instantaneous costs of $\policy$ and $\optimalpolicy$ in~\eqref{OptimalPolicy} is integrated over the time interval $\left[0 , T\right]$: 
\begin{equation} \label{RegretDefEq}
\regret{T}{\policy} = \itointeg{0}{T}{ \left[ \instantcost{\policy}{t} - \instantcost{\optimalpolicy}{t} \right] }{t}.
\end{equation}
Note that under $\policy$, the state trajectory is generated by \eqref{dynamics} for $\action{t}=\policy \left( \left\{ \action{s} \right\}_{0 \leq s < t} , \left\{\state{s} \right\}_{0 \leq s \leq t} \right)$, at all times $t \geq 0$. So, $\instantcost{\policy}{t} - \instantcost{\optimalpolicy}{t}$ includes the differences between the control actions as well as the state trajectories of $\policy,\optimalpolicy$. Clearly, the random state evolution in \eqref{dynamics} renders $\regret{T}{\policy}$ random. So, regret analyses for reinforcement learning policies include worst-case analyses that establish upper-bounds for $\regret{T}{\policy}$, as well as average-case analyses that provide bounds for $\E{\regret{T}{\policy}}$. Further, for unknown $\truth$, we hope that the increasing observations of state and action over time will be effectively leveraged so that eventually, the policy makes near-optimal decisions. So, as $t$ grows, we desire $\instantcost{\policy}{t}-\instantcost{\optimalpolicy}{t}$ to shrink and so $\regret{T}{\policy}$ to scale sub-linearly with $T$. In the sequel, we study $\regret{T}{\policy}$, $\E{\regret{T}{\policy}}$ and their dependence on $T$ and the problem parameters.

Another quantity of interest is the \emph{accuracy} of estimating the unknown dynamics. So, letting $\estA{t},\estB{t}$ be estimates of $\truth$ based on the state-action observations by time $t$; i.e., $\left\{\state{s}, \action{s}\right\}_{0 \leq s \leq t}$, we study the decay rate of the estimation error $\Learnerror{}{\estpara{t}}$, as defined in Definition~\ref{NotationDef}. Note that similar to regret, $\Learnerror{}{\estpara{t}}$ is stochastic.

\section{Stability Analysis for Perturbed Dynamics Matrices} \label{StabilitySection}
In this section, we study the effects of uncertainties about the dynamical model on system stability. We specify the minimal information one needs to possess in order to ensure stabilization, and show that a coarse-grained approximation of the truth is sufficient for this purpose. Results of this section will be used later in the design of randomized-estimates policy in Section~\ref{AlgoSection}. Importantly, the following stability analysis is general, captures effects of all involved quantities, and provides tight results in the sense that the conditions of Theorem~\ref{StabThm} are required for guaranteeing stabilization. In addition, the results presented here are of independent interests, because stability is required for letting the system operate for a reasonable time period, regardless of optimality of the control actions. 

To proceed, note that if hypothetically the optimal linear feedback in~\eqref{OptimalPolicy} is applied to the system in~\eqref{dynamics}, then stability is guaranteed. More precisely, applying $\Optgain{\truth}$, the resulting closed-loop transition matrix $\CLmat{\star}=\Amat{\star}+\Bmat{\star}\Optgain{\truth}$ has all its eigenvalues on the open left half-plane of the complex plane, as stated in Theorem~\ref{OptimalityProof}. The issue is that the true dynamics matrices $\truth$ are \emph{unknown} and need to be learned. However, if some matrices $\estpara{}$ meet the conditions we shortly discuss, one can stabilize the system by applying the linear feedback $\action{t}=\Optgain{\estpara{}} \state{t}$. 

To proceed, let $\estD{} = \estA{}+\estB{}\Optgain{\estpara{}}$ be the closed-loop transition matrix of a system with dynamics matrices $\estpara{}$, under the feedback $\action{t}=\Optgain{\estpara{}}\state{t}$. Then, let $\stabradii > 0$ and $\RiccUpperBound<\infty$ satisfy
\begin{eqnarray} \label{StabCond1}
\mosteig{\estD{}} \leq - \stabradii, ~~~~~~~~~~~~~~~~~ \Mnorm{\RiccSol{\estpara{}}}{2} \leq \RiccUpperBound.
\end{eqnarray}
The quantities in~\eqref{StabCond1} are required for studying stability of the matrix $\Amat{\star}+\Bmat{\star}\Optgain{\estpara{}}$, as follows. Remember that we have a closed-loop stability result in Theorem~\ref{OptimalityProof}: $\mosteig{\estD{}}<0$. So, the first inequality in \eqref{StabCond1} quantifies the extent to which $\Optgain{\estpara{}}$ is able to stabilize the system of parameters $\estpara{}$. Intuitively, $\mosteig{\estD{}}$ is the best (i.e., most negative) upper-bound one can hope for the eigenvalues of $\Amat{\star}+\Bmat{\star}\Optgain{\estpara{}}$, since the optimal feedback $\Optgain{\estpara{}}$ is \emph{purposefully} designed for the certainly known matrices $\estpara{}$. Later on, we show that $\stabradii$ enjoys a positive uniform lower-bound as long as $\estpara{}$ live in some neighborhoods of $\truth$. The second inequality in \eqref{StabCond1} is somewhat guaranteed by the first one, as we will show in the proof of Theorem~\ref{OptimalityProof} (in \eqref{LyapInteg2}) that 
\begin{equation} \label{LyapInteg}
\RiccSol{\estpara{}} = \itointeg{0}{\infty}{ e^{\estD{}^{\top}t} \left( \Qmat+ \Optgain{\estpara{}}^{\top} \Rmat \Optgain{\estpara{}} \right) e^{\estD{}t} }{t}.
\end{equation}
Therefore, $\mosteig{\estD{}} \leq -\stabradii<0$ implies that for some  $\RiccUpperBound<\infty$, we have $\Mnorm{\RiccSol{\estpara{}}}{} \leq \RiccUpperBound$. So, $\RiccUpperBound$ is merely used for simplifying the expressions.

Towards stability analysis, we need further information of $\estD{} = \estA{}+\estB{}\Optgain{\estpara{}}$ that the Jordan form of this matrix provides. Suppose that eigenvalues of $\estD{}$ are $\lambda_1, \cdots, \lambda_k$, and let the Jordan decomposition be $\estD{}=P^{-1}\Lambda P$; i.e., $\Lambda = \diag{\Lambda_1,\cdots, \Lambda_k}$ is a block-diagonal matrix and all diagonal entries of $\Lambda_i$ are $\lambda_i$, the immediate off-diagonal entries above the diagonal of $\Lambda_i$ are $1$, and all other entries of $\Lambda_i$ are $0$ (as shown in~\eqref{JordanBlocks}). Now, we introduce a very important quantity for determining the stability margin. For the above-mentioned blocks $\Lambda_i$, $i=1, \cdots, k$, let $\mult{i}$ denote the dimension of the square matrix $\Lambda_i$, and refer to the largest value among $\mult{1}, \cdots, \mult{k}$ by $\mult{\estD{}}$.
\begin{deffn} [Largest block-size $\mult{\estD{}}$] \label{MultDef}
	Letting $P$ and $\Lambda_i \in \C^{\mult{i} \times \mult{i}}$ be as in the Jordan decomposition $\estD{}=P^{-1} \Lambda P$ explained above, define $\mult{\estD{}}=\max\limits_{1 \leq i \leq k} \mult{i}$.
\end{deffn}
The quantity $\mult{\estD{}}$ is the largest size of the blocks $\Lambda_1, \cdots, \Lambda_k$ in the Jordan form and crucially determines the \emph{order} of  stability margin, as established in the following theorem.

\begin{thrm}[Stability margin] \label{StabThm}
	Let $P,\mult{\estD{}}$ and $\stabradii,\RiccUpperBound$ be as in Definition~\ref{MultDef} and \eqref{StabCond1}, respectively. Then, following Definition~\ref{NotationDef}, for $\delta>0$, we have $\mosteig{\Amat{\star}+ \Bmat{\star} \Optgain{\estpara{}}} < -\delta$, as long as
	\begin{equation} \label{StabNeighborhood}
	\Learnerror{}{\estpara{}} < \left( 1 \wedge \frac{1}{\Mnorm{\Optgain{\estpara{}}}{2}} \right) \frac{ \left(\stabradii-\delta\right) \wedge \left(\stabradii-\delta\right)^{\mult{\estD{}}}}{\mult{\estD{}}^{1/2} \Mnorm{P}{2} \Mnorm{P^{-1}}{2}}.
	\end{equation}
\end{thrm}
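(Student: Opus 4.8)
The plan is to treat the true closed-loop matrix $\Amat{\star}+\Bmat{\star}\Optgain{\estpara{}}$ as a perturbation of $\estD{}=\estA{}+\estB{}\Optgain{\estpara{}}$ and to rule out, by contradiction, any eigenvalue in the half-plane $\{\Re(\lambda)\geq -\delta\}$. First I would write $\Amat{\star}+\Bmat{\star}\Optgain{\estpara{}}=\estD{}+E$ with $E=(\Amat{\star}-\estA{})+(\Bmat{\star}-\estB{})\Optgain{\estpara{}}$, and bound
\[
\Mnorm{E}{2}\leq \Mnorm{\estA{}-\Amat{\star}}{2}+\Mnorm{\estB{}-\Bmat{\star}}{2}\,\Mnorm{\Optgain{\estpara{}}}{2}\leq \left(1\vee \Mnorm{\Optgain{\estpara{}}}{2}\right)\Learnerror{}{\estpara{}}.
\]
By the first inequality in \eqref{StabCond1}, every eigenvalue $\lambda_i$ of $\estD{}$ satisfies $\Re(\lambda_i)\leq \mosteig{\estD{}}\leq -\stabradii$. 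Assume for contradiction that $\estD{}+E$ has an eigenvalue $\mu$ with $\Re(\mu)\geq -\delta$ (taking $0<\delta<\stabradii$, since otherwise the right side of \eqref{StabNeighborhood} is non-positive and the claim is vacuous). Then $\mu\notin\sigma(\estD{})$, so $\mu I-\estD{}$ is invertible while $\mu I-\estD{}-E$ is singular; the factorization $\mu I-\estD{}-E=(\mu I-\estD{})\bigl(I-(\mu I-\estD{})^{-1}E\bigr)$ forces $\Mnorm{(\mu I-\estD{})^{-1}}{2}\,\Mnorm{E}{2}\geq 1$.

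Next I would convert this resolvent inequality into a spectral condition through the Jordan decomposition $\estD{}=P^{-1}\Lambda P$. Using $(\mu I-\estD{})^{-1}=P^{-1}(\mu I-\Lambda)^{-1}P$ and block-diagonality of $\Lambda$ yields
\[
\Mnorm{(\mu I-\estD{})^{-1}}{2}\leq \Mnorm{P}{2}\Mnorm{P^{-1}}{2}\max_{1\leq i\leq k}\Mnorm{(\mu I-\Lambda_i)^{-1}}{2}.
\]
Since $\Re(\mu)\geq -\delta$ and $\Re(\lambda_i)\leq -\stabradii$, the distance to the spectrum obeys $g:=\min_i|\mu-\lambda_i|\geq \Re(\mu)-\max_i\Re(\lambda_i)\geq \stabradii-\delta>0$. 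Writing each block as $\Lambda_i=\lambda_i I+N$ with $N$ nilpotent gives $(\mu I-\Lambda_i)^{-1}=\sum_{j=0}^{\mult{i}-1}(\mu-\lambda_i)^{-(j+1)}N^{j}$, so the problem reduces to a sharp operator-norm estimate of this inverse Jordan block in terms of $g$ and $\mult{\estD{}}$. The stated margin corresponds to an estimate of the form $\Mnorm{(\mu I-\Lambda_i)^{-1}}{2}\lesssim \mult{\estD{}}^{1/2}/(g\wedge g^{\mult{\estD{}}})$, where the two regimes $g\geq 1$ and $g<1$ generate the factor $(\stabradii-\delta)\wedge(\stabradii-\delta)^{\mult{\estD{}}}$ once $g\geq \stabradii-\delta$ is inserted.

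Combining the displays gives $\Mnorm{E}{2}\geq (g\wedge g^{\mult{\estD{}}})/(\mult{\estD{}}^{1/2}\Mnorm{P}{2}\Mnorm{P^{-1}}{2})$. Because $x\mapsto x\wedge x^{\mult{\estD{}}}$ is non-decreasing on $(0,\infty)$ and $g\geq \stabradii-\delta$, this lower bound dominates $\bigl((\stabradii-\delta)\wedge(\stabradii-\delta)^{\mult{\estD{}}}\bigr)/(\mult{\estD{}}^{1/2}\Mnorm{P}{2}\Mnorm{P^{-1}}{2})$, which together with the first-paragraph bound $\Mnorm{E}{2}\leq (1\vee\Mnorm{\Optgain{\estpara{}}}{2})\Learnerror{}{\estpara{}}$ contradicts hypothesis \eqref{StabNeighborhood}. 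Hence no eigenvalue $\mu$ of $\Amat{\star}+\Bmat{\star}\Optgain{\estpara{}}$ has $\Re(\mu)\geq -\delta$, i.e.\ $\mosteig{\Amat{\star}+\Bmat{\star}\Optgain{\estpara{}}}<-\delta$.

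I expect the main obstacle to be exactly the sharp norm estimate for the inverse Jordan block. Bounding $(\mu I-\Lambda_i)^{-1}$ term-by-term only gives $\sum_{j=1}^{\mult{\estD{}}}g^{-j}$, which carries a factor $\mult{\estD{}}$ rather than the advertised $\mult{\estD{}}^{1/2}$; indeed, the resolvent norm is genuinely of order $\mult{\estD{}}$ when $g\approx 1$ (the inverse block degenerates toward an upper-triangular all-ones matrix). Recovering the sharper power is precisely the novel eigenvalue-sensitivity analysis the paper emphasizes, and it is where I would concentrate the effort: exploiting the Toeplitz/nilpotent structure of $(\mu I-\Lambda_i)^{-1}$ and the fact that on the boundary $\Re(\mu)=-\delta$ the worst-case offset $\mu-\lambda_i$ is real and positive, rather than estimating the Neumann series crudely. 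The remaining ingredients (the perturbation decomposition, the resolvent/singularity argument, and the monotonicity of $x\wedge x^{\mult{\estD{}}}$) are routine by comparison.
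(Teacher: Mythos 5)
Your proposal is, structurally, the paper's own proof. The paper also writes $\Amat{\star}+\Bmat{\star}\Optgain{\estpara{}}=\estD{}+\erterm{\star}$ with $\erterm{\star}=(\Amat{\star}-\estA{})+(\Bmat{\star}-\estB{})\Optgain{\estpara{}}$ (your $E$), also converts the existence of an eigenvalue $\lambda$ with $\Re(\lambda)>\mosteig{\estD{}}$ into the identity $v=\left(\Lambda-\lambda I\right)^{-1}P\erterm{\star}P^{-1}v$ (your resolvent-singularity step, conjugated by $P$), also passes to the block-diagonal resolvent via the Jordan form, and also rests on exactly the inequality you isolated as the obstacle:
\begin{equation*}
\Mnorm{\left(\Lambda_i-\lambda I_{\mult{i}}\right)^{-1}}{2} \;\leq\; \mult{i}^{1/2}\left(\left|\lambda-\lambda_i\right|\wedge\left|\lambda-\lambda_i\right|^{\mult{i}}\right)^{-1}.
\end{equation*}
The only difference is that the paper asserts this inequality immediately after displaying the explicit inverse of the Jordan block, with no argument, whereas you stopped because you could not prove it.

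Your instinct about that step is correct, and it is the substantive finding of your attempt: the inequality is false near $\left|\lambda-\lambda_i\right|=1$. For $\mult{i}=2$ and $\left|\lambda-\lambda_i\right|=1$, a diagonal unitary conjugation reduces the inverse block (up to a global phase) to $\left[\begin{smallmatrix}1 & 1\\ 0 & 1\end{smallmatrix}\right]$, whose operator norm is $(1+\sqrt{5})/2\approx 1.618>\sqrt{2}$; more generally, at distance one the inverse block has the norm of the all-ones upper-triangular matrix, which grows like $2\mult{i}/\pi$ --- linearly in $\mult{i}$, exactly as you predicted, not like $\mult{i}^{1/2}$. What is provable is the bound you already wrote down, $\Mnorm{\left(\Lambda_i-\lambda I_{\mult{i}}\right)^{-1}}{2}\leq\sum_{j=1}^{\mult{i}}g^{-j}\leq \mult{i}\left(g\wedge g^{\mult{i}}\right)^{-1}$, and feeding it into your contradiction argument closes the proof of the theorem with $\mult{\estD{}}^{1/2}$ replaced by $\mult{\estD{}}$ in the denominator of \eqref{StabNeighborhood}. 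So: completed with the honest factor-$\mult{\estD{}}$ estimate, your argument is a correct proof of a slightly weaker statement (smaller stabilization neighborhood, same qualitative content); the $\mult{\estD{}}^{1/2}$ constant advertised in \eqref{StabNeighborhood} is not actually established by the paper either, since its proof hinges on the same unproven --- and, by your counterexample reasoning, false --- resolvent bound.
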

Note that the definition of $\Optgain{\estpara{}}$ before \eqref{OptimalPolicy} shows that $\Mnorm{\Optgain{\estpara{}}}{2}^{-1}$ in \eqref{StabNeighborhood} can be replaced with $\eigmin{\Rmat}\RiccUpperBound^{-1} \Mnorm{\estB{}}{2}^{-1}$. Theorem~\ref{StabThm} states that if $\Learnerror{}{\estpara{}}$ is sufficiently small to satisfy~\eqref{StabNeighborhood}, then ${\Amat{\star}+ \Bmat{\star} \Optgain{\estpara{}}}$ is stable and all of its eigenvalues in the complex plane lie on the left-hand-side of the vertical line $\Re =-\delta$. In addition, \eqref{StabNeighborhood} reflects effects of different factors, as follows. First, the stability margin on the right-hand-side of \eqref{StabNeighborhood} decreases as $\Mnorm{\Optgain{\estpara{}}}{2}$ increases. To see the intuition, note that the difference between the closed-loop matrices is $\Amat{\star}-\estA{}+ \left(\Bmat{\star}-\estB{}\right)\Optgain{\estpara{}}$, which shows the multiplicative effect of $\Optgain{\estpara{}}$. Further, Definition~\ref{MultDef} indicates that $\mult{\estD{}}^{1/2} \Mnorm{P^{}}{2} \Mnorm{P^{-1}}{2}$ quantifies non-diagonality of $\estD{}$, is at least $1$, and becomes $1$ for diagonal $\estD{}$ (where $P$ is the identity matrix and $\mult{\estD{}}=1$). Therefore, more non-diagonal closed-loop matrices $\estD{}$ lead to smaller stability margins and make stabilization harder to be learned.

By \eqref{StabCond1}, the dependence on $\stabradii$ corroborates the intuition that systems whose optimal closed-loop matrices has eigenvalues of larger real-parts (i.e., smaller $\stabradii$), are harder to stabilize. Moreover, the expression $\left(\stabradii-\delta\right) \wedge \left(\stabradii-\delta\right)^{\mult{\estD{}}}$ indicates that $\mult{\estD{}}$ is very important and determines the \emph{rates} of bounding the eigenvalues of ${\Amat{\star}+ \Bmat{\star} \Optgain{\estpara{}}}$. The rates for $\stabradii-\delta<1$ and $\stabradii-\delta>1$ are different, because of a similar phenomena in the sensitivity of eigenvalues of matrices to perturbations in their entries. This result is of independent interest as it is a generalization of Bauer-Fike Theorem~\citep{bauer1960norms} to \emph{asymmetric} matrices. Indeed, we establish in the proof of Theorem~\ref{StabThm} that \emph{larger blocks in Jordan forms can lead to drastically higher eigenvalue-sensitivities against entries}. 

To close this section, we provide uniform lower and upper bounds for $\stabradii>0$ and $\RiccUpperBound < \infty$, respectively. For that purpose, similar to Definition~\ref{MultDef}, define the largest block size $\mult{\star}=\mult{\CLmat{\star}}$ based on the Jordan decomposition $\CLmat{\star}=\Amat{\star}+\Bmat{\star}\Optgain{\truth}=P_\star^{-1} \Lambda_\star P_\star$. Then, we show in the proof of Theorem~\ref{StabThm} that $\Learnerror{}{\estpara{}} \leq \epsilon_0$ is sufficient for stabilization, and it holds that $\stabradii \geq \eigmin{\Qmat} 4^{-1} \Mnorm{\RiccSol{\truth}}{2}^{-1}$, and $\RiccUpperBound \leq 2 \Mnorm{\RiccSol{\truth}}{2}$, as long as
\begin{equation} \label{StabRadiEq0}
\left( 1 \vee \Mnorm{\Optgain{\truth}}{2} \right) \epsilon_0 = \frac{ \left( -\mosteig{\CLmat{\star}} \right) \wedge \left( -\mosteig{\CLmat{\star}}\right)^{\mult{\star}} }{\mult{\star}^{1/2} \Mnorm{P_\star^{-1}}{2} \Mnorm{P_\star}{2} }
\wedge \left[4 \itointeg{0}{\infty}{ \Mnorm{e^{\CLmat{\star}t}}{2}^2 }{t} \right]^{-1} . 
\end{equation}

\section{Tight Regret Expressions and Policy Differentiation} \label{SubOptSection}
In this section, we investigate sub-optimalities and provide a sharp expression for the regret that non-optimal control actions cause. Such an investigation is vital since reinforcement learning policies need to learn the unknown dynamics $\truth$ and so they require to take non-optimal actions. 

To proceed, let $\action{t}$ be the control action of the policy $\policy$ at time $t$. In the following theorem, we quantify $\regret{T}{\policy}$ in terms of deviations $\action{t} - \Optgain{\truth}\state{t}$, and introduce $\regterm{T}$ that fully assesses the regret of $\policy$. In fact, $\regterm{T}$ unifies average-case and worst-case analyses by capturing both $\E{\regret{T}{\policy}}$ and $\regret{T}{\policy}$. Further, Theorem~\ref{GeneralRegretThm} provides scalings with different problem parameters and shows that $\regret{T}{\policy}-\E{\regret{T}{\policy}}$ scales linearly with the dimension of the Brownian motion. 

\begin{thrm} [Regret analysis] \label{GeneralRegretThm}
	Suppose that $\Gainmat{t} $ is a bounded piecewise-continuous function of $t$, and $\policy$ is the policy $\action{t}=\Gainmat{t}\state{t}$. Then, we have $\E{\regret{T}{\policy}}=\E{\regterm{T}}$, and
	\begin{eqnarray*}
		\regret{T}{\policy} &=& \regterm{T} + \order{ \regconstant \regterm{T}^{1/2} \log \regterm{T}},
	\end{eqnarray*}
	where $\regconstant= \frac{ \Mnorm{\BMcoeff{}}{2} \Mnorm{\RiccSol{\truth}}{2}^{3/2} \noisedim }{ \eigmin{\Qmat}^{1/2} \eigmin{\Rmat}^{1/2}} $, $\CLmat{\star}=\Amat{\star}+\Bmat{\star}\Optgain{\truth}$, $E_t=e^{\CLmat{\star}^{\top}t} \RiccSol{\truth} e^{\CLmat{\star}t}$, and
	\begin{eqnarray*}
		\regterm{T} = \itointeg{0}{T}{ \norm{ \Rmat^{1/2} \left( \Gainmat{t} - \Optgain{\truth} \right) \state{t} }{}^2 }{t} 
		- 2 {\itointeg{0}{T}{ \left( \state{t}^{\top} E_{T-t} \Bmat{\star} \left( \Gainmat{t}-\Optgain{\truth} \right) \state{t} \right) }{t}}.
	\end{eqnarray*}
\end{thrm}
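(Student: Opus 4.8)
The plan is to write $\regret{T}{\policy}$ exactly as $\regterm{T}$ plus a mean-zero stochastic integral, by two applications of Ito's formula organized around the optimal value function and a time-dependent Lyapunov function; the second application is the \emph{policy-differentiation} device. Throughout let $V(x)=x^\top\RiccSol{\truth}x$, let $\state{t}$ be the trajectory of $\policy$ (so $\action{t}=\Gainmat{t}\state{t}$), and let $\optstate{t}$ be the optimal trajectory driven by the \emph{same} $\BM{}$ and $\state{0}$ with gain $\Optgain{\truth}$. First I would turn the Riccati identity $\MatOpAve{\truth}{\RiccSol{\truth}}=0$ together with $\Optgain{\truth}=-\Rmat^{-1}\Bmat{\star}^\top\RiccSol{\truth}$ into the pointwise completion-of-squares relation, valid for every $x$ and $u$,
\[
x^\top\Qmat x+u^\top\Rmat u+2x^\top\RiccSol{\truth}\left(\Amat{\star}x+\Bmat{\star}u\right)+\tr{\RiccSol{\truth}\BMcoeff{}\BMcoeff{}^\top}=\optavecost{}+\norm{\Rmat^{1/2}\left(u-\Optgain{\truth}x\right)}{}^2,
\]
where $\optavecost{}=\tr{\RiccSol{\truth}\BMcoeff{}\BMcoeff{}^\top}$ by Theorem~\ref{OptimalityProof}; the left side is quadratic in $u$ and minimized at $u=\Optgain{\truth}x$.

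Next, applying Ito's formula to $V$ along $\state{t}$ and along $\optstate{t}$ identifies the expression $2x^\top\RiccSol{\truth}(\Amat{\star}x+\Bmat{\star}u)+\tr{\RiccSol{\truth}\BMcoeff{}\BMcoeff{}^\top}$ with the drift of $V$, so integrating the completion-of-squares relation along each trajectory and subtracting (the $\optavecost{}T$ and $V(\state{0})$ terms cancel) yields
\[
\regret{T}{\policy}=-\left[V(\state{T})-V(\optstate{T})\right]+\itointeg{0}{T}{2\left(\state{t}-\optstate{t}\right)^\top\RiccSol{\truth}\BMcoeff{}}{\BM{t}}+\itointeg{0}{T}{\norm{\Rmat^{1/2}\left(\Gainmat{t}-\Optgain{\truth}\right)\state{t}}{}^2}{t}.
\]
The heart of the argument is to re-express the terminal gap $V(\state{T})-V(\optstate{T})$. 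Set $\Psi_t(x)=x^\top E_{T-t}x$; because $E_{T-t}=e^{\CLmat{\star}^\top(T-t)}\RiccSol{\truth}e^{\CLmat{\star}(T-t)}$ solves the backward Lyapunov equation $\dot P_t+\CLmat{\star}^\top P_t+P_t\CLmat{\star}=0$ with $P_T=\RiccSol{\truth}$, the time derivative of $\Psi_t$ exactly cancels its $\CLmat{\star}$-drift. Writing $\Amat{\star}+\Bmat{\star}\Gainmat{t}=\CLmat{\star}+\Bmat{\star}(\Gainmat{t}-\Optgain{\truth})$ and applying Ito's formula to $\Psi_t(\state{t})$ leaves only the perturbation term, so
\[
V(\state{T})=\state{0}^\top E_T\state{0}+2\itointeg{0}{T}{\state{t}^\top E_{T-t}\Bmat{\star}\left(\Gainmat{t}-\Optgain{\truth}\right)\state{t}}{t}+\itointeg{0}{T}{\tr{E_{T-t}\BMcoeff{}\BMcoeff{}^\top}}{t}+2\itointeg{0}{T}{\state{t}^\top E_{T-t}\BMcoeff{}}{\BM{t}},
\]
while the same computation along $\optstate{t}$ (where $\Gainmat{t}\equiv\Optgain{\truth}$) has no perturbation term; subtracting cancels the $E_T$ and trace contributions.

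Substituting this into the previous display, the cross term $-2\int_0^T\state{t}^\top E_{T-t}\Bmat{\star}(\Gainmat{t}-\Optgain{\truth})\state{t}\,dt$ joins the squared residual to reconstruct precisely $\regterm{T}$, and the two stochastic integrals merge, leaving
\[
\regret{T}{\policy}=\regterm{T}+\itointeg{0}{T}{2\left(\state{t}-\optstate{t}\right)^\top\left(\RiccSol{\truth}-E_{T-t}\right)\BMcoeff{}}{\BM{t}}.
\]
The trailing term $N_T$ is an Ito integral, hence a mean-zero martingale on $[0,T]$ (a genuine martingale since $\E{\norm{\state{t}}{}^2}<\infty$ on compacts by Gronwall); taking expectations gives $\E{\regret{T}{\policy}}=\E{\regterm{T}}$, the first claim.

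For the pathwise claim I would control $N_T$ through its quadratic variation $\langle N\rangle_T=4\int_0^T\norm{\BMcoeff{}^\top(\RiccSol{\truth}-E_{T-t})(\state{t}-\optstate{t})}{}^2\,dt$. Since $\CLmat{\star}$ is stable (Theorem~\ref{OptimalityProof}) the noise cancels in the difference, giving $\state{t}-\optstate{t}=\int_0^t e^{\CLmat{\star}(t-s)}\Bmat{\star}(\Gainmat{s}-\Optgain{\truth})\state{s}\,ds$; by Young's convolution inequality and $\int_0^\infty\Mnorm{e^{\CLmat{\star}u}}{}\,du<\infty$ (controlled below by the stability radius of \eqref{StabRadiEq0}) one gets $\int_0^T\norm{\state{t}-\optstate{t}}{}^2\,dt\lesssim\int_0^T\norm{(\Gainmat{t}-\Optgain{\truth})\state{t}}{}^2\,dt$, which is dominated by the leading part of $\regterm{T}$ after a factor $\eigmin{\Rmat}^{-1}$. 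Hence $\langle N\rangle_T\lesssim\regconstant^2\,\regterm{T}$, and a Dubins--Schwarz time change together with a law-of-iterated-logarithm/maximal bound for the time-changed Brownian motion (the \emph{comparative ratios of stochastic integrals} estimate) yields $|N_T|=\order{\regconstant\,\regterm{T}^{1/2}\log\regterm{T}}$, the factor $\noisedim$ in $\regconstant$ arising from the $\noisedim$-dimensional noise inside $\langle N\rangle_T$. The main obstacle is exactly this last step: tying the history-dependent quadratic variation back to $\regterm{T}$ with the \emph{sharp} constant $\regconstant$ and only a logarithmic loss, which is where the stability and convolution estimates and the stochastic-integral comparison must be made tight.
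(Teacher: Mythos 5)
Your decomposition is correct, and I verified the algebra: the Riccati completion-of-squares identity, Ito's formula on $V(x)=x^{\top}\RiccSol{\truth}x$ along the two coupled trajectories, and Ito's formula on the backward Lyapunov function $\Psi_t(x)=x^{\top}E_{T-t}x$ combine exactly to
$\regret{T}{\policy}=\regterm{T}+N_T$ with
$N_T=\itointeg{0}{T}{2\left(\state{t}-\optstate{t}\right)^{\top}\left(\RiccSol{\truth}-E_{T-t}\right)\BMcoeff{}}{\BM{t}}$,
so both claims indeed reduce to controlling this one stochastic integral. But your route is genuinely different from the paper's. The paper never introduces a value function: it discretizes time, interpolates between $\policy$ and $\optimalpolicy$ through switching policies $\policy_i$ that adopt the optimal feedback after time $i\epsilon$, computes the regret increment between consecutive switches in closed form (Lemma~\ref{LocalRegretLem}), telescopes, and passes $\epsilon\to 0$ --- this limit is what it calls \emph{policy differentiation}; it then converts the residual random term into a single stochastic integral via the Stochastic Fubini Theorem and bounds it with its self-normalized martingale estimate (Lemma~\ref{SelfNormalizedLem}), which yields $\order{\noisedim\empiricalcovmat{T}^{1/2}\log^{1/2}\empiricalcovmat{T}}$ directly. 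Your value-function/backward-Lyapunov argument reaches the same intermediate identity with far less machinery --- no discretization, no interchange-of-limits justification, no Fubini swap --- which is a genuine economy and makes the exact-expectation claim transparent. What the paper's heavier construction buys is mechanics: the switching family produces the quadratic term, the $E_{T-t}$ cross term, and the martingale term automatically, without having to guess the right pair of Lyapunov functions, which is why the authors advertise it as a reusable template; and its Lemma~\ref{SelfNormalizedLem} is shared infrastructure, reused in the proof of Theorem~\ref{BoundsThm}.

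Two caveats on your last step, the first of which is the only substantive one. The family $\left\{N_T\right\}_{T\geq 0}$ is \emph{not} a martingale in $T$, because the integrand contains $E_{T-t}$; Dubins--Schwarz plus the law of the iterated logarithm therefore applies, for each fixed $T$, to a \emph{different} time-changed Brownian motion, and an almost-sure bound holding simultaneously as $T\to\infty$ needs an additional argument (e.g., split off $\itointeg{0}{T}{2\left(\state{t}-\optstate{t}\right)^{\top}\RiccSol{\truth}\BMcoeff{}}{\BM{t}}$, which is a true martingale in $T$, and treat the exponentially localized $E_{T-t}$ remainder separately, or chain over a grid of horizons). To be fair, the paper's own proof shares exactly this issue --- its integrand $\statetwo{u}$ depends on $T$ through the kernel $P_{t,u}$ --- so your proposal is at parity with it, but this is the step to fortify. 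Second, your Young-convolution bound requires the $L^1$ quantity $\itointeg{0}{\infty}{\Mnorm{e^{\CLmat{\star}u}}{2}}{u}$, whereas the Lyapunov identity \eqref{LyapInteg} naturally controls only $\itointeg{0}{\infty}{\Mnorm{e^{\CLmat{\star}u}}{2}^2}{u}$; to land on the stated constant $\regconstant$ you should arrange the Cauchy--Schwarz step as the paper does (keeping the kernel intact and bounding $\eigmax{\itointeg{t}{T}{P_{t,u}^{\top}P_{t,u}}{u}}$) rather than taking an $L^1$ semigroup norm, or accept an extra stability constant in place of the paper's.
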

To establish Theorem~\ref{GeneralRegretThm}, we utilize the theory of continuous-time martingales and (in Lemma~\ref{SelfNormalizedLem}) develop novel results on comparative ratios of stochastic integrals. More importantly, we construct the new framework of \emph{policy differentiation} for finding sharp regret bounds. Broadly speaking, policy differentiation precisely evaluates the regret in terms of infinitesimal sub-optimalities and integrates these infinitesimal deviations to obtain $\regterm{T}$, in which the integrand $\norm{ \Rmat^{1/2} \left( \Gainmat{t} - \Optgain{\truth} \right) \state{t} }{}^2$ plays a role similar to the \emph{derivative} of the regret. This framework can be used for analogous regret analyses in other continuous-time reinforcement learning problems.

The boundedness and piecewise continuity conditions in Theorem~\ref{GeneralRegretThm} are somewhat natural because the optimal policy in \eqref{OptimalPolicy} is a time-invariant feedback, and so one gains nothing by violating these conditions. Furthermore, since by Theorem~\ref{OptimalityProof} we have $\mosteig{\CLmat{\star}}<0$, the matrix $E_t$ exponentially decays as $t$ grows. Thus, the second integral in the definition of $\regterm{T}$ is dominated by the first one. Moreover, note that the above-mentioned matrix appears in $\regterm{T}$ in the form of $E_{T-t}$, i.e., with a time inversion. This reflects the fact that sub-optimal control feedbacks $\action{t}=\Gainmat{t}\state{t}$ have descending effects on the regret $\regret{T}{\policy}$ as we move from $T$ backward in time that $t$ descends. 

Putting the discussions in the above two paragraphs all together, we conclude as follows. Theorem~\ref{GeneralRegretThm} shows that the sub-optimality $\policy$ incurs at time $t$, scales as the \textbf{square} of the deviation $\Gainmat{t} - \Optgain{\truth}$. On top of that, the constant $\regconstant$ reflects effects of different parameters and indicates, for example, that  $\regret{T}{\policy}-\regterm{T}$ scales linearly with $\noisedim$.

The results of Theorem~\ref{GeneralRegretThm} are insightful along different directions. First, the exact equality $\E{\regret{T}{\policy}} =  \E{\regterm{T}}$ can be used for establishing minimax \emph{lower-bounds} for regret by finding the fastest rates 
$\Gainmat{t}-\Optgain{\truth}$ can shrink. Further, since $\regterm{T}^{1/2} \log \regterm{T}=\order{\regterm{T}}$, both the average-case criteria $\E{\regret{T}{\policy}}$ as well as the worst-case regret $\regret{T}{\policy}$ are captured by~$\regterm{T}$. In other words, Theorem~\ref{GeneralRegretThm} indicates that the fluctuations of $\regret{T}{\policy}$ around its expectation $\E{\regret{T}{\policy}}$ are in magnitude smaller than the expected value. Thus, studying $\regterm{T}$ is \emph{sufficient and necessary} for regret analysis and there is a tight and reciprocal relationship between $\regret{T}{\policy},\regterm{T}$, for \textbf{all} policies.  

Moreover, as time goes by, a reinforcement learning policy becomes \emph{progressively} more capable of narrowing down the sub-optimality gap by estimating the unknown dynamics $\truth$. Indeed, as soon as having sufficiently long trajectories to learn $\truth$ accurate enough to satisfy $\norm{ \Rmat^{1/2}\left( \Gainmat{t} - \Optgain{\truth} \right) \state{t}}{2}<1$, the regret grows much slower since $\regterm{T}$ integrates the \emph{squares} of these deviations. For example, if the estimation accuracy satisfies the ideal square-root rate $\Learnerror{}{\estpara{t}}=\order{t^{-1/2}}$, then the regret is a logarithmic function of time; $\regterm{T}=\order{\log T}$. However, due to the trade-off between the exploration and exploitation that will be elaborated shortly, this is not the case and to obtain the above error rate, $\action{t}$ needs to persistently deviate from $\Optgain{\truth}\state{t}$, which causes a \emph{linearly} growing regret (see Proposition~\ref{FullEstimationProp} in the appendices). 

Theorem~\ref{GeneralRegretThm} provides both a general result for analyzing reinforcement learning policies, as well as a useful insight on how to design them to minimize the regret. We utilize this insight to design Algorithm~\ref{algo1} and to establish Theorem~\ref{BoundsThm} in the next section. Indeed, we randomize the parameter estimates so that $\action{t}$ appropriately deviates from $\optimalpolicy$, leading to $\Learnerror{}{\estpara{t}}=\orderlog{t^{-1/4}}$. So, we obtain the efficient regret bound $\regret{T}{\policy}=\order{\regterm{T}}=\orderlog{T^{1/2}}$.
\section{Randomized-Estimates Policy}  \label{AlgoSection}
In this section, we discuss a fast and tractable algorithm with an efficient performance for cost minimization subject to uncertainties about the dynamics matrices $\truth$. First, we explain the fundamental exploration-exploitation dilemma. Then, we investigate a procedure for estimating the unknown dynamics using the data of state-action trajectory. Based on that, an online reinforcement learning policy that employs randomizations of the parameter estimates for balancing exploration versus exploitation is presented in Algorithm~\ref{algo1}. Next, a regret bound is established in Theorem~\ref{BoundsThm} indicating that Algorithm~\ref{algo1} efficiently minimizes the cost function so that the regret scales as the square-root of the time. We also specify the rates of identifying the dynamics matrices.

According to Theorem~\ref{GeneralRegretThm}, the policy needs to ensure that $\action{t} \approx \Optgain{\truth} \state{t}$ in order to incur a small regret. Furthermore, since $\truth$ are unknown, the policy needs to estimate them based on the data $\left\{ \state{s},\action{s} \right\}_{0 \leq s \leq  t}$. However, if $\action{s} \approx \Optgain{\truth} \state{s}$, then the $\action{s}$ coordinates of the data point $\state{s},\action{s}$ become (almost) uninformative as they are (approximately) linear transformations of the $\state{s}$ coordinates. This defeats the purpose and renders accurate estimation of $\truth$ infeasible. Note that accurate approximations of $\truth$ are needed for taking near-optimal control actions. This, known as the exploration-exploitation dilemma, is the main obstacle in online reinforcement learning and shows that a low-regret policy needs to carefully \emph{diversify} the actions $\left\{ \action{s} \right\}_{0 \leq s \leq  t}$ by \emph{deviating} from $\left\{\Optgain{\truth} \state{s}\right\}_{0 \leq s \leq  t}$.

\subsection{Design of the algorithm and intuitions}
Now, we discuss the learning procedure in Algorithm~\ref{algo1}, based on extensions of the least-squares estimates. Suppose that instead of the full data $\state{s},\action{s}$ for real values of ${s \geq 0}$, one has access to samples at a discrete set of time points; $\state{k\epsilon},\action{k\epsilon}$ for $k=0,1, \cdots$. Then, \eqref{dynamics} for a small $\epsilon$ gives the approximate data generation mechanism
$\state{(k+1)\epsilon}-\state{k\epsilon} = \left( \Amat{\star} \state{k\epsilon} + \Bmat{\star} \action{k\epsilon} \right) \epsilon + \BMcoeff{}\left(\BM{(k+1)\epsilon} - \BM{k\epsilon}\right)$.
So, an approach is to 
estimate $\truth$ by minimizing 
$\sum_{k} \norm{\state{(k+1)\epsilon}-\state{k\epsilon} - \left( \estA{} \state{k\epsilon} + \estB{} \action{k\epsilon} \right) \epsilon}{2}^2$ over $\estpara{}$.
Letting $\statetwo{s}=\left[\state{s}^{\top},\action{s}^{\top}\right]^{\top}$
 and $\epsilon \to 0$, we get the continuous-time estimate based on the full trajectory $\left\{ \state{s},\action{s} \right\}_{0 \leq s \leq  t}$. The result is shown in \eqref{RandomLSE1} below and will be used by Algorithm~\ref{algo1}. 

To ensure that the system evolves stably, Algorithm~\ref{algo1} projects the estimates on the following stabilization oracle $\paraspace{0}$ in lights of Theorem~\ref{StabThm}. In the sequel, we explain how one can learn $\paraspace{0}$ fast. 
\begin{deffn} \label{OracleDef}
	For a fixed $\delta_0>0$, let $\paraspace{0}$ be a set containing matrices $\estpara{}$ that satisfy~\eqref{StabNeighborhood} for $\delta=\delta_0$. 
\end{deffn}
Intuitively, the system is stabilized by having access to $\paraspace{0}$, despite uncertainties about the true dynamics matrices $\truth$. Note that the condition in~\eqref{StabNeighborhood} is verifiable since $\stabradii,\RiccUpperBound$ depend on the \emph{known} parameter estimates $\estpara{}$. Availability of a stabilization set is a common assumption in the literature of online reinforcement learning policies for linear systems \citep{mandl1988consistency,mandl1989consistency,caines1992continuous,abeille2018improved,ouyang2019posterior,faradonbeh2020adaptive,ziemann2020phase,ziemann2020regret}. For example, if an initial stabilizing feedback $\Gainmat{0}$ is available, we can devote a period to only explore by applying sub-optimal control actions, and use the resulting observations to learn $\paraspace{0}$. Such procedures, that ignore the main objective of regret minimization for a short time period, are shown to be fast and effective in the sense that the probability of failing to learn to stabilize, decays exponentially with time~\citep{shirani2022thompson}. In systems that are in operation prior to running Algorithm~\ref{algo1} or in open-loop-stable systems, this condition automatically holds (in the latter case, $\Gainmat{0}=0$ is an initial stabilizer). Similarly, in systems with a \emph{reset} option that can immediately steer the system-state to small values, $\paraspace{0}$ can be learned fast~\citep{duncan1999adaptive,caines2019stochastic,basei2021logarithmic}. 

In absence of initial stabilizer and state-reset options, learning $\paraspace{0}$ will be more challenging. In Section \ref{StabilitySection} we saw that an $\epsilon_0$ neighborhood of $\truth$ is sufficient for bounding $\stabradii,\RiccUpperBound$, for $\epsilon_0$ in \eqref{StabRadiEq0}. That is, coarse-grained approximations of $\truth$ suffice for stabilization. So, $\paraspace{0}$ can be learned from short state trajectories derived by applying randomized control actions~\citep{caines2019stochastic,faradonbeh2018bfinite,faradonbeh2019randomized,lale2020explore,chen2021black,gramlich2022fast}. Importantly, fast and reliable learning-based stabilization can be ensured via Bayesian methods~\citep{faradonbeh2021bayesian}. These methods retain a Gaussian posterior about the unknown true dynamics $\truth$, and design stabilizing feedbacks as if samples from the posterior coincide with the truth. Importantly, even if failed at their first attempts, these Bayesian methods can be utilized again with no need to repeat the state observation and learning procedure. Technically, if a failure is detected (e.g., if the state magnitude $\norm{\state{t}}{2}$ keeps growing), one can can resample from the posterior until successful stabilization~\citep{faradonbeh2021bayesian}. Note that effectiveness of the above methods does not depend on availability of any prior distribution, although an informative prior can help to improve learning from shorter state trajectories.

Finally, we will show (in Theorem~\ref{BoundsThm}) that the algorithm learns $\truth$ with the rate $t^{-1/4}$. So, the projection on $\paraspace{0}$ will be automatically performed  after the time $t_0 = \orderlog{\epsilon_0^{-4}}$. 

The algorithm proceeds as follows. For some fixed $\episodetime{}>1$, the exponentially growing sequence $\left\{ \episodetime{n} \right\}_{n=0}^\infty$ contains the time instants at which the algorithm updates the parameter estimates. In fact, Algorithm~\ref{algo1} applies control actions $\action{t}=\Optgain{\estpara{n}}\state{t}$ during the time period $\episodetime{n} \leq t < \episodetime{n+1}$, where $\estpara{n}$ are estimates of $\truth$ , based on the trajectory up to time $\episodetime{n}$. 

Further, to ensure that the policy commits sufficiently to explore the environment, a random matrix $\randommatrix{n}$ is added to the parameter estimates at time $t=\episodetime{n}$. Then, Algorithm~\ref{algo1} projects the resulting $\statedim \times \left( \statedim + \controldim \right)$ matrix onto $\paraspace{0}$. Formally, let $\project{\paraspace{0}}{\cdot}$ denote projection on $\paraspace{0}$; i.e., it gives the closest matrix in $\paraspace{0}$ according to the distance induced by the Frobenius norm. Then, define
\begin{equation} \label{RandomLSE1}
\left[ \estpara{n} \right]= \project{\paraspace{0}}{\left[ \itointeg{0}{\episodetime{n}}{ \statetwo{s} }{\state{s}^{\top}} \right]^{\top} \left[ \itointeg{0}{\episodetime{n}}{\statetwo{s} \statetwo{s}^{\top}}{s} \right]^\dagger + \randommatrix{n}},
\end{equation}
where $\statetwo{s}=\left[\state{s}^{\top},\action{s}^{\top}\right]^{\top}$
and the $\statedim \times \left(\statedim+\controldim\right)$ matrices $\left\{\randommatrix{n}\right\}_{n=0}^{\infty}$ are independent of everything else and of each others. Further, the random matrix $\randommatrix{n}$ that is used at time $t=\episodetime{n}$ has independent Gaussian entries of mean zero and standard deviation $\sigma_n = \sigma_0 \left(\episodetime{-n}n\right)^{1/4} = \sigma_0 \left( t^{-1} \log_{\episodetime{}} t \right)^{1/4}$, for some fixed $\sigma_0$. This decay rate of $\sigma_n$ is delicately adjusted for two purposes. On one hand, $\randommatrix{n}$ is sufficiently large for randomizing the parameter estimates to ensure that effective exploration occurs and the current data is diverse enough so that we obtain accurate estimates in the future. On the other hand, $\randommatrix{n}$ is sufficiently small to let the current estimates remain accurate and prevent significant deviations. Otherwise, large values of $\randommatrix{n}$ deteriorate the current efficient exploitation. More precisely, this value of $\sigma_n$ is selected by minimizing its role in the regret that consists of summations of some terms of the form $\episodetime{n} \left( \sigma_n^2 + \sigma_n^{-2} n \right)$.

\begin{algorithm}
	\caption{\bf: Randomized-Estimates Policy} \label{algo1}
%
%
	\begin{algorithmic}
	\State Select $\episodetime{}>1$ and $\estpara{0} \in \paraspace{0}$ arbitrarily
	\State {For $0 \leq t < 1=\episodetime{0}$, apply $\action{t}=\Optgain{\estpara{0}} \state{t}$} 
	\For{$n=0,1,2, \cdots$}
	\State Obtain parameter estimates $\estpara{n}$ by~\eqref{RandomLSE1}
		\While {$\episodetime{n} \leq t < \episodetime{n+1}$}
		\State {Take control action $\action{t}=\Optgain{\estpara{n}} \state{t}$} 
			\EndWhile
		\EndFor
	\end{algorithmic}
\end{algorithm}

\subsection{Analysis of the algorithm and  performance guarantees}
The memory that Algorithm~\ref{algo1} occupies is remarkably small since  it can update the parameter estimates 
by only storing the values of the two integrals in~\eqref{RandomLSE1} in an online fashion. Furthermore, calculations can be done quite fast, making update of the parameter estimates at time $\episodetime{n}$ immediately effective. Note that the computational complexity of numerically obtaining a sufficiently accurate $\Optgain{\estpara{n}}$ is at worst $\order{\left(\statedim+\controldim\right)^{3} n}$, for the matrix operations in the definition of $\optimalpolicy$ in \eqref{OptimalPolicy}, and for the fact that the (integration or differentiation) procedures described after Theorem~\ref{OptimalityProof} converge exponentially fast, while the next theorem indicates that an accuracy of $\order{\episodetime{-n/4}}$ is sufficient (and necessary).

The rationale for freezing the parameter estimates for exponentially growing time intervals $\episodetime{n} \leq t < \episodetime{n+1}$ is that Algorithm~\ref{algo1} can defer the learning step until collecting enough observations $\statetwo{s}$ so that a new update of parameter estimates is effectively more accurate than the previous one. 

The following result provides performance guarantees for the randomized-estimates policy. 
\begin{thrm} [Analysis of Algorithm~\ref{algo1}] \label{BoundsThm}
	Let the policy $\policy$ and the estimates $\estpara{n}$ be those in Algorithm~\ref{algo1}. Assume that $n$ satisfies $\episodetime{n} \leq T < \episodetime{n+1}$. Then, using Definition \ref{NotationDef} and \eqref{RegretDefEq}, we have 
	\begin{eqnarray*}
		\Learnerror{}{\estpara{n}}^2 = \order{ \learnconstant T^{-1/2} \log T }, ~~~~~~~~~~~~~
		\regret{T}{\policy} = \order{ \algoconstant T^{1/2} \log T}, 
	\end{eqnarray*}
	where 
	{\small \begin{eqnarray*}
		\learnconstant = \left(\statedim+\controldim\right) \left( \frac{\statedim}{\log \episodetime{}} + \frac{\noisedim \Mnorm{\BMcoeff{}}{2}^2 }{\eigmin{\BMcoeff{}\BMcoeff{}^{\top}}}\right), ~~~~~~~~~~~~~
		\algoconstant = \frac{(\episodetime{}-1) \Mnorm{\BMcoeff{}}{2}^2 \Mnorm{\RiccSol{\truth}}{2}^6 \Mnorm{\Rmat}{2} }{ \eigmin{\Qmat}^2 \eigmin{\Rmat}^4 } ~~\learnconstant .
	\end{eqnarray*}}
\end{thrm}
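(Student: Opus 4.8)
The plan is to derive both claims from a single quantity: the estimation error $\Learnerror{}{\estpara{n}}$ of the randomized least-squares estimates in~\eqref{RandomLSE1}. Once a high-probability bound of order $\learnconstant \episodetime{-n/2} n$ is in hand for $\Learnerror{}{\estpara{n}}^2$, the first claim follows by substituting $\episodetime{n}\asymp T$ and $n \asymp \log_\gamma T$, while the regret claim follows by feeding the per-episode accuracy into the sharp expansion of Theorem~\ref{GeneralRegretThm}. So the bulk of the work is the estimation bound, after which the regret reduces to an essentially deterministic geometric summation.

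For the estimation error I would first expand the continuous-time estimator. Substituting $\diff \state{t} = [\Amat{\star},\Bmat{\star}]\statetwo{t}\diff t + \BMcoeff{}\diff\BM{t}$ into~\eqref{RandomLSE1} and using $\empiricalcovmat{\episodetime{n}}\empiricalcovmat{\episodetime{n}}^\dagger = I$ on the range yields, before projection, a least-squares error of the form $\BMcoeff{}\,(\int_0^{\episodetime{n}}\diff\BM{s}\,\statetwo{s}^{\top})\,\empiricalcovmat{\episodetime{n}}^\dagger$. Since $\truth$ trivially satisfies~\eqref{StabNeighborhood} and hence lies in $\paraspace{0}$, the projected estimate is no farther from the unprojected matrix than $\truth$ is, so up to the equivalence of Frobenius and operator norms $\Learnerror{}{\estpara{n}} \lesssim \Mnorm{\BMcoeff{}\,(\int_0^{\episodetime{n}}\diff\BM{s}\,\statetwo{s}^{\top})\,\empiricalcovmat{\episodetime{n}}^\dagger}{2} + \Mnorm{\randommatrix{n}}{2}$. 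The perturbation term obeys the Gaussian operator-norm estimate $\Mnorm{\randommatrix{n}}{2} = \order{\variance_n(\statedim+\controldim)^{1/2}}$ with high probability, contributing $\variance_n^2$ to the squared error. For the martingale term I would apply the self-normalized bound of Lemma~\ref{SelfNormalizedLem}, controlling $\Mnorm{(\int_0^{\episodetime{n}}\diff\BM{s}\,\statetwo{s}^{\top})\,\empiricalcovmat{\episodetime{n}}^{-1/2}}{2}^2$ by $\order{(\statedim+\controldim)\log\det\empiricalcovmat{\episodetime{n}}}=\orderlog{\statedim+\controldim}$, and then dividing by $\eigmin{\empiricalcovmat{\episodetime{n}}}$.

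The main obstacle is the anti-concentration lower bound on $\eigmin{\empiricalcovmat{\episodetime{n}}}$, i.e.\ persistent excitation of $\statetwo{s}=[\state{s}^{\top},\action{s}^{\top}]^{\top}$ in \emph{all} $\statedim+\controldim$ directions, which is genuinely harder than in discrete time. Within a single episode $\episodetime{m}\le t<\episodetime{m+1}$ the action obeys $\action{t}=\Optgain{\estpara{m}}\state{t}$, so $\statetwo{t}$ is confined to the graph of a fixed linear map and carries \emph{no} excitation transverse to the state directions; the excitation in the action coordinates must therefore be harvested across episodes and is generated solely by the independent perturbations $\randommatrix{m}$. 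I would show that the contribution of episode $m$ to $\empiricalcovmat{\episodetime{n}}$ in the action-orthogonal subspace is of order $\variance_m^2\,\episodetime{m}$, since the randomization feeds through the closed loop over an interval of length $\asymp\episodetime{m}$, while the state directions are excited at rate $\eigmin{\BMcoeff{}\BMcoeff{}^{\top}}$ by the Brownian term through the Ito isometry. Establishing these two lower bounds simultaneously, against the noise domination intrinsic to Ito processes, is exactly where the spectral analysis of partially-random matrices and the comparative-ratio estimates for stochastic integrals are needed. They give $\eigmin{\empiricalcovmat{\episodetime{n}}}\gtrsim \variance_n^2\,\episodetime{n}$ in the worst (action) directions, whence the least-squares squared error is $\orderlog{\variance_n^{-2}\episodetime{-n}}$ up to the dimensional factors collected in $\learnconstant$. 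Adding the two contributions, the squared error is of order $\variance_n^2 + \variance_n^{-2}\episodetime{-n}n$, which the choice $\variance_n^2=\variance_0^2(\episodetime{-n}n)^{1/2}$ minimizes, yielding $\Learnerror{}{\estpara{n}}^2=\orderlog{(\episodetime{-n}n)^{1/2}}=\order{\learnconstant T^{-1/2}\log T}$.

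For the regret I would invoke Theorem~\ref{GeneralRegretThm} with $\Gainmat{t}=\Optgain{\estpara{m}}$ piecewise-constant on the episodes. The dominant term $\itointeg{0}{T}{\norm{\Rmat^{1/2}(\Gainmat{t}-\Optgain{\truth})\state{t}}{}^2}{t}$ is bounded, using Lipschitz continuity of the optimal-gain map near $\truth$ (which follows along the lines of Theorem~\ref{StabThm} and gives $\Mnorm{\Optgain{\estpara{m}}-\Optgain{\truth}}{2}\lesssim\Learnerror{}{\estpara{m}}$), by $\Mnorm{\Rmat}{2}\sum_m (\mathrm{Lip})^2\,\Learnerror{}{\estpara{m}}^2\,\itointeg{\episodetime{m}}{\episodetime{m+1}\wedge T}{\norm{\state{t}}{}^2}{t}$. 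Under the stability guaranteed by the projection onto $\paraspace{0}$ (Theorem~\ref{StabThm}) the state is mean-square bounded, so $\itointeg{\episodetime{m}}{\episodetime{m+1}}{\norm{\state{t}}{}^2}{t}=\order{(\gamma-1)\episodetime{m}}$ with high probability, again via the comparative-ratio bounds of Lemma~\ref{SelfNormalizedLem}. Each episode thus contributes $\order{\learnconstant(\gamma-1)\episodetime{m/2}m}$, and the geometric sum over $m\le n$ is dominated by its last term, giving $\regterm{T}=\order{\algoconstant T^{1/2}\log T}$; the second integral in $\regterm{T}$ is lower order because $E_{T-t}$ decays exponentially (Theorem~\ref{OptimalityProof}), and the fluctuation term $\order{\regconstant\regterm{T}^{1/2}\log\regterm{T}}$ in Theorem~\ref{GeneralRegretThm} is $o(\regterm{T})$. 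This yields $\regret{T}{\policy}=\order{\algoconstant T^{1/2}\log T}$, completing the proof.
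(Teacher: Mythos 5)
Your architecture coincides with the paper's: the same decomposition of the least-squares error into a self-normalized martingale term plus the randomization term, the same use of Lemma~\ref{SelfNormalizedLem} against $\eigmin{\empiricalcovmat{n}}$, and the same regret pipeline (Theorem~\ref{GeneralRegretThm}, Lipschitz continuity of the gain map, per-episode state-energy bounds, geometric sum dominated by its last term — the paper does this via Lemma~\ref{LipschitzLemma} and Lemma~\ref{EmpCovLemma}, and handles the $E_{T-t}$ integral exactly as you do). The regret half of your proposal is therefore essentially the paper's proof. The genuine gap is in the one step you yourself identify as the main obstacle: the lower bound $\eigmin{\empiricalcovmat{n}} \gtrsim \variance_n^2 \episodetime{n}$. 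Your justification — that "episode $m$ contributes $\variance_m^2 \episodetime{m}$ of transverse excitation because the randomization feeds through the closed loop over an interval of length $\asymp \episodetime{m}$" — is not only unproven, it suggests the wrong mechanism: $\randommatrix{m}$ is never injected into the dynamics (this is not dithering); it only perturbs the estimate, so within episode $m$ every data point $\statetwo{t}$ lies \emph{exactly} on the graph $\left\{ \left(x, \Optgain{\estpara{m}}x\right) : x \in \R^{\statedim} \right\}$ and a single episode contributes precisely \emph{zero} excitation transverse to its own graph. All transverse excitation comes from the gains of \emph{different} episodes being mutually non-degenerate, i.e.\ from ruling out that some unit vector $v=[v_1^{\top},v_2^{\top}]^{\top}$ satisfies $\Optgain{\estpara{k}}^{\top}v_2 \approx -v_1$ simultaneously for all recent $k$.

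That statement is an anti-concentration property of the \emph{gains}, and it does not follow from Gaussianity of $\randommatrix{k}$ alone, because the map $\left[\estpara{}\right] \mapsto \Optgain{\estpara{}}$ has a large kernel: by Lemma~\ref{OptManifoldLemma}, the set of parameter matrices sharing a given optimal feedback is a manifold of dimension $\statedim^2$, so a parameter perturbation falling along this manifold produces no gain change at all. This is exactly the hole the paper's proof closes, and it is the heart of the argument: the paper shows that the "bad" estimate sequences (those violating its condition~\eqref{BoundThmProofEq4}) lie within distance $\order{\episodetime{-k/2+n/4}\epsilon^{1/2}}$ of a set $\manifold{M}_n(0)$ of codimension $m = \left(\statedim \controldim - \controldim + 1\right)(n-\ell) - \statedim - \controldim + 1$ in parameter-sequence space, uses the independence and known variances of the Gaussian entries of $\left\{\randommatrix{k}\right\}$ to bound the probability of this event by a quantity of the form $\left[\order{\variance_\ell^{-1}\episodetime{-\ell/2+n/4}\epsilon^{1/2}} \wedge 1\right]^{m}$, takes $\ell = n-5$ so that $m \geq 5$ makes these probabilities summable, and concludes by Borel--Cantelli. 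Without this dimension-counting/small-ball argument (or an equivalent transversality statement quantifying how the randomization pushes the estimates across the equal-gain manifolds), your claimed lower bound on $\eigmin{\empiricalcovmat{n}}$ — and hence both the estimation rate and the regret bound that you correctly derive from it — remains unsupported.
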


Theorem~\ref{BoundsThm} indicates efficiency of Algorithm~\ref{algo1}: At time $T$, the sub-optimality gap is as small as $\order{\algoconstant T^{-1/2} \log T }$. It also provides $\algoconstant, \learnconstant$ that reflect the dependence of estimation error and regret on different parameters in the problem. So, the regret scales linearly with the number of unknown parameters in $\truth$, while the estimation error dwindles linearly with the dimension. 

To establish Theorem~\ref{BoundsThm}, we study the learning step in~\eqref{RandomLSE1} and determine the rates Algorithm~\ref{algo1} estimates $\truth$. For that purpose, we prove concentration bounds for the empirical covariance matrices of the state vectors and also show anti-concentration of the Gram matrix $ \itointeg{0}{t}{\statetwo{s} \statetwo{s}^{\top}}{s}$ of the signal $\statetwo{s} = \left[\state{s}^\top,\action{s}^\top \right]^\top$, as $t$ grows. Then, we establish bounds on the comparative ratios of stochastic integrals and use that for controlling the estimation error. Furthermore, we show that the optimal feedback matrices have a Lipschitz property with respect to the dynamics matrices and leverage that for finding the deviation rates from the optimal feedback. Finally, we utilize policy differentiation and Theorem~\ref{GeneralRegretThm} for getting the regret bounds in Theorem~\ref{BoundsThm}.

Note that for obtaining descending estimation errors and sub-linear regret bounds we need $\eigmin{\BMcoeff{}\BMcoeff{}^{\top}}>0$. This is a standard requirement in estimation and control of stochastic linear systems and expresses that all coordinates of the state vectors are randomized by the Brownian motion $\left\{\BM{t}\right\}_{t\geq 0}$ in a relatively short time~\citep{levanony2001persistent,subrahmanyam2019identification,caines2019stochastic}. So, all state variables have significant roles in the dynamics. From a modeling point of view, $\eigmin{\BMcoeff{}\BMcoeff{}^{\top}}>0$ indicates that the stochastic differential equation in~\eqref{dynamics} is irreducible in the sense that a smaller subset of state variables is \emph{insufficient} for capturing the stochastic dynamical behavior of the environment. 

To close this section, observe that the estimation error of Algorithm~\ref{algo1} shrinks as $T^{-1/4}$. So, it does not decay with the ideal square-root rate because the main priority of Algorithm~\ref{algo1} is to minimize its regret by exploring minimally. However, if the randomization matrices $\randommatrix{n}$ are persistent and their standard deviations do not diminish as $n$ grows, then we obtain the square-root consistency. This is formalized in Proposition~\ref{FullEstimationProp} in Appendix~\ref{appA}. Of course, the compromise is that $\regret{T}{\policy}$ grows \emph{linearly} with $T$ if $\randommatrix{n}$ does not dwindle as $n$ grows.

\section{Numerical Illustrations of Estimation Error and Regret} \label{NumSection}
Now, we provide numerical analyses for showcasing the performance of Algorithm~\ref{algo1} for estimating the unknown dynamics matrices and learning the optimal policy. For this purpose, we assume that the true continuous-time system matrices are lateral-directional state-space matrices of X-29A airplane at $4000$ ft altitude~\citep{bosworth1992linearized}. The system is of dimension $\statedim=4$, is controlled by two dimensional commands; $\controldim=2$, and the transition and input matrices in \eqref{dynamics} are 
{\small \begin{equation*}
	\Amat{\star} =\begin{bmatrix}
	-0.1850 & 0.1475 & -0.9825 &  0.1120 \\
	-0.3467 & -1.710 & 0.9029 &   -0.5843 \times 10^{-6}\\
	1.174 &  -0.0825   & -0.1826 &  -0.4428 \times 10^{-7}\\
	0.0 &    1.0 &    0.1429 &   0.0
	\end{bmatrix},
	\Bmat{\star} =\begin{bmatrix}
	-0.4470 \times 10^{-3} &   0.4020 \times 10^{-3}\\
	0.3715        &   0.0549 \\
	0.0265        &   -0.0135 \\
	0.0           &   0.0 
	\end{bmatrix}.
	\end{equation*}}
Note that the dimensions of the control action and the state vector, as well as open-loop \emph{instability} of $\Amat{\star}$ and the small entries of $\Bmat{\star}$, render control of the above-mentioned airplane challenging. Further, we let the coefficient matrix of the Brownian disturbance $\BM{t}$ be $\BMcoeff{}=0.2 \times I_4$, and employ Algorithm~\ref{algo1} to learn to control a quadratic cost with the weight matrices $\Qmat=10 \times I_{\statedim}, \Rmat= I_{\controldim}$. The online reinforcement learning policy of Algorithm~\ref{algo1} is run for $500$ seconds, while the parameter estimates are updated at times $t=\episodetime{n}$, for integer values of $n$ and $\episodetime{}=1.2$. To find an initial stabilizing feedback, we run a Bayesian learning algorithm for $25$ seconds~\citep{faradonbeh2021bayesian}. 

\begin{figure}[t!] 
	\centering
	\scalebox{.22}
	{\includegraphics{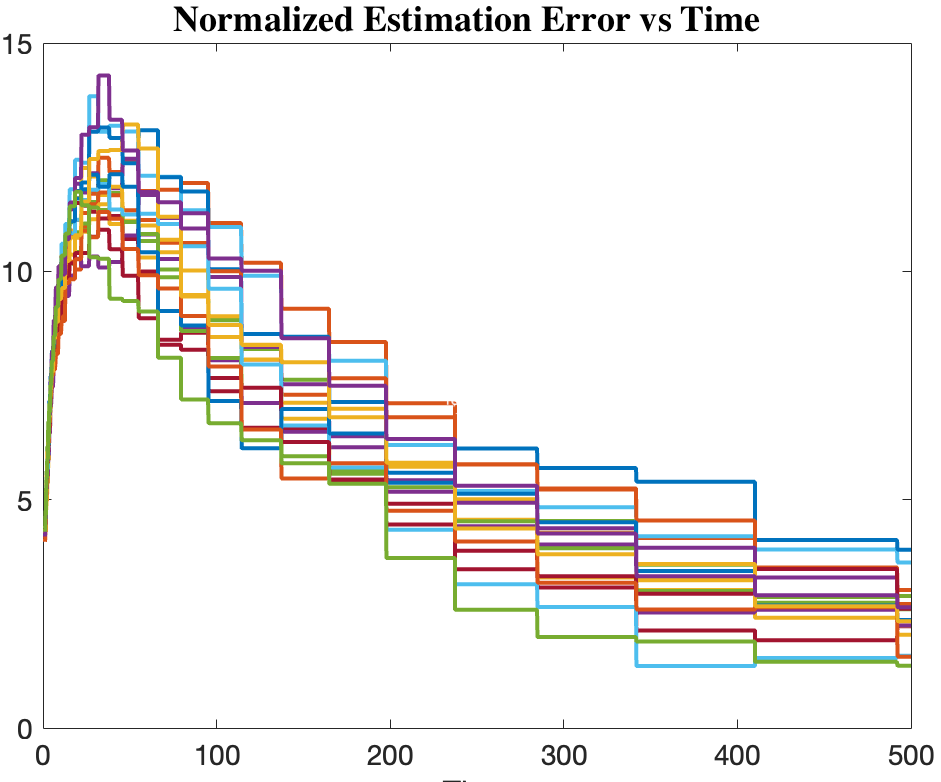} \includegraphics{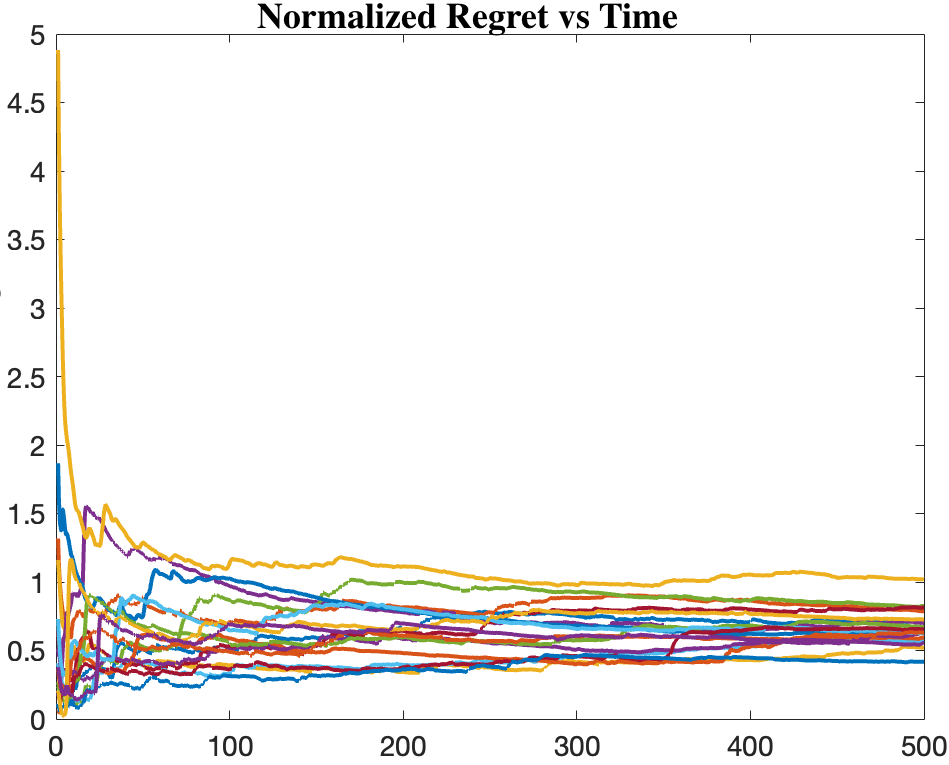}}  
	\caption{\textbf{Left:} Normalized estimation error $T^{1/2} \Learnerror{}{\estpara{n}}^2$ is plotted vs $T$, such that $\episodetime{n} \leq T < \episodetime{n+1}$, for some integer $n$. Multiple replicates of the normalized estimation error are reported in the graph, which clearly remain bounded as time grows. Therefore, the graph depicts Theorem~\ref{BoundsThm} about the rates Algorithm~\ref{algo1} learns the dynamics matrices. \newline \textbf{Right:} The graph presents curves of the normalized regret $T^{-1/2}\regret{T}{\policy}$ versus time $T$, while $\policy$ is the policy in Algorithm~\ref{algo1}. Multiple replicates of the system are simulated, all corroborating Theorem~\ref{BoundsThm} that the normalized regret remains (almost) bounded.}
	\label{ErrorFig}\label{RegretFig}
\end{figure}

The normalized rates of the estimation error are plotted in Figure~\ref{ErrorFig}, versus the continuous time $T$. To illustrate the rates in Theorem~\ref{BoundsThm}, the figure contains multiple trajectories of $T^{1/2} \Learnerror{}{\estpara{n}}^2$, while $n,T$ satisfy $\episodetime{n} \leq T < \episodetime{n+1}$. The normalized estimation errors in the left panel are (almost) bounded as time grows, corroborating Theorem~\ref{BoundsThm}. The right panel depicts normalized regret versus time: the horizontal axis is $T$ and the vertical one represents $T^{-1/2} \regret{T}{\policy}$, where $\policy$ is the online reinforcement learning policy in Algorithm~\ref{algo1}. Again, it is clear that the statement of Theorem~\ref{BoundsThm} holds, as the normalized regret remains bounded as time grows. 

\section{Concluding Remarks} \label{ConclusionSection}
We studied online reinforcement learning policies for unknown continuous-time stochastic linear systems and presented algorithms that learn to minimize quadratic costs. Three important problems are fully investigated, followed by the intuitions and implications of the presented analyses. 

First, we studied stabilization of stochastic linear systems based on inexact dynamics matrices and proved Theorem~\ref{StabThm} that specifies the coarse-grained accuracy for guaranteeing stability. Then, proposing the novel approach of policy differentiation, we established a reciprocal result in Theorem~\ref{GeneralRegretThm} for the regret that policies cause by taking sub-optimal actions. More importantly, we presented the online reinforcement learning Algorithm~\ref{algo1} and established its performance guarantees. Indeed, Theorem~\ref{BoundsThm} expresses that the estimation error rate of Algorithm~\ref{algo1} is ${\totaldim T^{-1/4} \log^{1/2}T}$ and it enjoys the efficient regret bound $\order{\totaldim^2 T^{1/2} \log T}$, where $T$ is the time and $\totaldim$ is the dimension.

As an initiating paper on design and analysis of online reinforcement learning policies for continuous-time stochastic systems, this study introduces interesting directions for future work. That includes establishing 
regret lower-bounds, investigating high-dimensional systems with structured dynamics such as low-rank or sparse matrices, and designing efficient policies under imperfect state-observations. Another interesting avenue for future studies that the authors expect the presented techniques apply to, is that of learning to control systems with non-linear dynamics or arbitrary cost functions.

\addcontentsline{toc}{section}{References}


\newpage
\onehalfspacing
\label{toc}
\tableofcontents
\newpage
\doublespacing
\appendices

\addcontentsline{toc}{section}{Appendices}
\section{Proof of Theorem~\ref{OptimalityProof} (Optimal policy)}

Fixing $\epsilon>0$, suppose that the control inputs $\action{t}$ are frozen in intervals of length $\epsilon$ and can change only at times $k\epsilon$, for $k=0,1, \cdots$. That is, for all times $t$ satisfying $k\epsilon \leq t < (k+1) \epsilon$, the action vector is fixed; $\action{t}=\action{k \epsilon}$. Next, we proceed towards finding a decision-making policy for minimizing the expected average cost. Note that due to the above-mentioned freezing during $\epsilon$-length intervals, the resulting decision-making policies can be sub-optimal, and indeed provide an upper bound for the optimal cost value. However, we will address this possible sub-optimality at the end of the proof, and with a slight abuse of notation, we still use $\optimalpolicy$ to denote the above-mentioned policy.

Next, fix an arbitrary time horizon $T$, and denote the minimum cost-to-go at time $t$ by 
\begin{eqnarray*}
	\valuefunc{t}{\state{t}} = \inf \E{\itointeg{t}{T}{\instantcost{\optimalpolicy}{s}}{s} \Bigg| \filter{t}},
\end{eqnarray*}
where the infimum is taken over non-anticipating policies that freeze the control action in $\epsilon$-length intervals, as elaborated above, and the information at time $t$ is $\filter{t}=\sigfield{\state{0:t},\action{0:t}}$; the sigma-field generated by the state and action vectors up to the time. Now, finding an optimal policy is equivalent to applying dynamic programming principle and writing Bellman optimality equations~\citep{kumar2015stochastic,chen1995linear}. So, we have 	\begin{eqnarray} \label{OptProofEq1}
	\valuefunc{k\epsilon}{\state{k \epsilon}} = \min\limits_{\action{k\epsilon}}  \E{\itointeg{k\epsilon}{(k+1)\epsilon}{\instantcost{\optimalpolicy}{\state{t},\action{k\epsilon}}}{t} + \valuefunc{(k+1)\epsilon}{\state{(k+1)\epsilon}} \Bigg| \filter{k\epsilon}},
	\end{eqnarray}
subject to the dynamics equation in~\eqref{dynamics}. 

For the sake of simplicity, suppose that $T/\epsilon$ is an integer. Solving \eqref{OptProofEq1} for $k= T/ \epsilon -1 $, we get the optimal control action $\action{k\epsilon}^{\star}=0$.  Accordingly, this gives 
\begin{eqnarray*}
	\valuefunc{(k+1)\epsilon}{\state{(k+1)\epsilon}}=\state{(k+1)\epsilon}^{\top} \Qmat \state{(k+1)\epsilon} \epsilon,
\end{eqnarray*}
for $k=T/\epsilon-2$, which, after substituting in \eqref{OptProofEq1}, becomes 
	\begin{eqnarray} \label{OptProofEq2}
	\valuefunc{k\epsilon}{\state{k \epsilon}} = \min\limits_{\action{k\epsilon}} \itointeg{k\epsilon}{(k+1)\epsilon}{ \E{\state{t}^{\top} \Qmat \state{t} \Big| \filter{k\epsilon}} }{t} + \action{k\epsilon}^{\top} \Rmat \action{k\epsilon} \epsilon + \E{ \state{(k+1)\epsilon}^{\top} \Qmat \state{(k+1)\epsilon} \Bigg| \filter{k\epsilon}} \epsilon,
	\end{eqnarray}	

where we applied Fubini's Theorem to derive 
	\begin{eqnarray} \label{SuppAuxEq2}
	\E{\itointeg{k\epsilon}{(k+1)\epsilon}{\instantcost{\optimalpolicy}{\state{t},\action{k\epsilon}^{}}}{t}  \Bigg| \filter{k\epsilon}} = \itointeg{k\epsilon}{(k+1)\epsilon}{ \E{\state{t}^{\top} \Qmat \state{t} \Big| \filter{k\epsilon}} }{t} + \action{k\epsilon}^{\top} \Rmat \action{k\epsilon} \epsilon.
	\end{eqnarray}

However, solving the dynamics~\eqref{dynamics} for $k\epsilon \leq t \leq (k+1)\epsilon$, we obtain
\begin{eqnarray*}
	\state{t}=e^{\Amat{\star}(t-k\epsilon)} \state{k\epsilon} + \itointeg{k\epsilon}{t}{e^{ \Amat{\star} (t-s)} \BMcoeff{} }{\BM{s}} + \itointeg{k\epsilon}{t}{e^{ \Amat{\star} (t-s)} }{s} \Bmat{\star}\action{k\epsilon} ,
\end{eqnarray*}
which together with Ito's Lemma, ${\diff \BM{s} \diff \BM{s}^{\top}} = I_{\noisedim} \diff s$ \citep{oksendal2013stochastic}, yields to 	\begin{eqnarray*}
	&&\E{\state{t}^{\top} \Qmat \state{t} \Big| \filter{k\epsilon}} = \itointeg{k\epsilon}{t}{ \tr{ e^{\Amat{\star}^{\top} (t-s)} \Qmat e^{\Amat{\star} (t-s)} \BMcoeff{} \BMcoeff{arg1}^{\top} } }{s} 
	\notag \\
	&+& \left( e^{\Amat{\star}(t-k\epsilon)} \state{k\epsilon} + \itointeg{k\epsilon}{t}{e^{ \Amat{\star} (t-s)} }{s} \Bmat{\star}\action{k\epsilon} \right)^{\top} \Qmat \left( e^{\Amat{\star}(t-k\epsilon)} \state{k\epsilon} + \itointeg{k\epsilon}{t}{e^{ \Amat{\star} (t-s)} }{s} \Bmat{\star}\action{k\epsilon} \right). \label{SuppAuxEq3}
	\end{eqnarray*}

Plugging these results in the dynamic programming equation in~\eqref{OptProofEq2}, the expression in front of the minimum becomes the following quadratic function of $\action{k\epsilon}$:
\begin{eqnarray*}
	&& \state{k\epsilon}^{\top} \auxQ \state{k\epsilon} + 2 \state{k\epsilon}^{\top} \auxG \action{k\epsilon} + \action{k\epsilon}^{\top} \auxR \action{k\epsilon} \\
	&+& \left( \auxA \state{k\epsilon} + \auxB \action{k\epsilon} \right)^{\top} P_{k+1} \left( \auxA \state{k\epsilon} + \auxB \action{k\epsilon} \right) + \tr{\auxP_{k+1} \BMcoeff{}\BMcoeff{}^{\top}},
\end{eqnarray*}
where $P_{k+1}=\Qmat \epsilon$, and 
\begin{eqnarray*}
	\auxA &=& e^{\Amat{\star}\epsilon},\\
	\auxB &=& \itointeg{0}{\epsilon}{e^{\Amat{\star}s} }{s} \Bmat{\star},\\
	\auxQ &=& \itointeg{0}{\epsilon}{ e^{\Amat{\star}^{\top}t } \Qmat e^{\Amat{\star}t }  }{t},\\
	\auxG &=& \itointeg{0}{\epsilon}{ e^{\Amat{\star}^{\top}t} \Qmat \left( \itointeg{0}{t}{ e^{\Amat{\star}(t-s) }  }{s} \right) \Bmat{\star} }{t},\\
	\auxR &=& \Rmat \epsilon + \itointeg{0}{\epsilon}{  \Bmat{\star}^{\top} \left( \itointeg{0}{t}{ e^{\Amat{\star}^{\top}(t-s) } \Qmat e^{\Amat{\star}(t-s) }  }{s} \right)  \Bmat{\star} }{t}, \\
	\auxP_{k+1} &=& P_{k+1} \itointeg{0}{\epsilon}{ e^{\Amat{\star}^{\top} s} e^{ \Amat{\star} s } }{s} + \itointeg{0}{\epsilon}{ \left( \itointeg{0}{t}{ e^{\Amat{\star}^{\top} s} \Qmat e^{ \Amat{\star} s } }{s} \right) }{t}.
\end{eqnarray*} 
Note that in the last equation above, we used Ito Isometry~\citep{baldi2017stochastic} to find $\auxP_{k+1}$. Now, performing the minimization the optimal control action is
\begin{eqnarray*}
	\action{k\epsilon}^{\star} = - \left( \auxB^{\top} P_{k+1} \auxB + \auxR \right)^{-1} \left( \auxB^{\top} P_{k+1} \auxA + \auxG^{\top} \right) \state{k\epsilon},
\end{eqnarray*}
and \eqref{OptProofEq2} leads to 
	\begin{eqnarray} \label{OptProofEq3}
	\valuefunc{k\epsilon}{\state{k\epsilon}} = \state{k\epsilon}^{\top} P_{k} \state{k\epsilon}+ \tr{ \BMcoeff{}\BMcoeff{}^{\top} \left[ P_{k+1} \itointeg{0}{\epsilon}{ e^{\Amat{\star}^{\top} s} e^{ \Amat{\star} s } }{s} + \itointeg{0}{\epsilon}{ \left( \itointeg{0}{t}{ e^{\Amat{\star}^{\top} s} \Qmat e^{ \Amat{\star} s } }{s} \right) }{t} \right] },
	\end{eqnarray}

where $P_{k}$ is calculated according to the discrete time Riccati equation 	
\begin{eqnarray} \label{OptProofEq4}
	P_{k} = \auxQ + \auxA^{\top} P_{k+1} \auxA - \left( \auxG + \auxA^{\top} P_{k+1} \auxB \right) \left( \auxB^{\top} P_{k+1} \auxB + \auxR \right)^{-1} \left( \auxB^{\top} P_{k+1} \auxA + \auxG^{\top} \right).
	\end{eqnarray}

It is shown that if there is some matrix $\Gainmat{}$ such that $\eigmax{\auxA+\auxB \Gainmat{}}<1$, then as $k \to -\infty$, the matrix $P_{k}$ in the above discrete time Riccati equation converges to a uniquely existing matrix $P$ that solves the algebraic Riccati equation 	\begin{eqnarray} \label{OptProofEq5}
	P = \auxQ + \auxA^{\top} P \auxA - \left( \auxG + \auxA^{\top} P \auxB \right) \left( \auxB^{\top} P \auxB + \auxR \right)^{-1} \left( \auxB^{\top} P \auxA + \auxG^{\top} \right),
	\end{eqnarray}	

regardless of the terminal matrix for the largest value $k+1$, which here corresponds to $P_{T/\epsilon}$~\citep{chan1984convergence,de1986riccati,faradonbeh2018bfinite}. 

Next, we show that if $\epsilon$ is sufficiently small, then the matrix $\Gainmat{}$ mentioned above exists. To that end, write
\begin{eqnarray*}
	\auxA &=& e^{\Amat{\star}\epsilon} =  \sum\limits_{n=0}^\infty \frac{\Amat{\star}^{n} \epsilon^{n} }{n!} = I_{\statedim} + \epsilon M(\epsilon) \Amat{\star}, \\
	\auxB &=& \sum\limits_{n=0}^\infty  \itointeg{0}{\epsilon}{\frac{\Amat{\star}^{n} s^{n} }{n!} }{s} \Bmat{\star} =  \sum\limits_{n=0}^\infty  \frac{\Amat{\star}^{n} \epsilon^{n+1} }{(n+1)!} \Bmat{\star} = \epsilon M(\epsilon) \Bmat{\star},
\end{eqnarray*}
where
\begin{eqnarray*}
	M(\epsilon) = \sum\limits_{n=1}^\infty \frac{\Amat{\star}^{n-1} \epsilon^{n-1} }{n!} = I_{\statedim}+ \epsilon \sum\limits_{n=2}^\infty \frac{\Amat{\star}^{n-1} \epsilon^{n-2} }{n!}.
\end{eqnarray*}
Then, letting $\Gainmat{}$ be as in Assumption~\ref{StabAssump}, if $\epsilon$ is small enough, it holds that 
\begin{eqnarray} \label{OptProofEq6}
\mosteig{M(\epsilon) \left( \Amat{\star}+ \Bmat{\star} \Gainmat{} \right)} < 0.
\end{eqnarray}
That is because the eigenvalues of the matrix $M(\epsilon) \left( \Amat{\star}+ \Bmat{\star} \Gainmat{} \right)$ are continuous functions of $\epsilon$, and for $\epsilon=0$ we have $\mosteig{M(0) \left( \Amat{\star}+ \Bmat{\star} \Gainmat{} \right)} = \mosteig{ \left( \Amat{\star}+ \Bmat{\star} \Gainmat{} \right)} < 0$, according to Assumption~\ref{StabAssump}. Hence, $\auxA+ \auxB \Gainmat{} = I_{\statedim} + M(\epsilon)\left(\Amat{\star}+ \Bmat{\star} \Gainmat{}\right) \epsilon$ implies that eigenvalues of $\auxA+ \auxB \Gainmat{} $ are exactly one plus the eigenvalues of $M(\epsilon)\left(\Amat{\star}+ \Bmat{\star} \Gainmat{}\right) \epsilon $. So, it holds that
	\begin{eqnarray} \label{OptProofEq7}
	\eigmax{\auxA+ \auxB \Gainmat{}}^2 \leq 1 + 2 \epsilon \mosteig{M(\epsilon)\left(\Amat{\star}+ \Bmat{\star} \Gainmat{}\right)} + \eigmax{M(\epsilon)\left(\Amat{\star}+ \Bmat{\star} \Gainmat{}\right)}^2 \epsilon^2.
	\end{eqnarray}

Now, putting \eqref{OptProofEq6} and \eqref{OptProofEq7} together, if $\epsilon$ is small enough, then $\eigmax{\auxA+ \auxB \Gainmat{}} <1$. Henceforth, suppose that $\epsilon$ is sufficiently small so that the latter inequality holds true.

As long as $\epsilon>0$ is small enough as described above, letting the time horizon $T$ tend to infinity, the $\epsilon$-length frozen optimal policy for minimizing the expected average cost is
\begin{eqnarray}
\action{k\epsilon}^{\star} = - \left( \auxB^{\top} P \auxB + \auxR \right)^{-1} \left( \auxB^{\top} P \auxA + \auxG^{\top} \right) \state{k\epsilon}, 
\end{eqnarray}
where $P$ is the unique solution of~\eqref{OptProofEq5}. On the other hand, for a fixed time horizon $T$, as $\epsilon$ shrinks the discrete-time Riccati equation in~\eqref{OptProofEq4} becomes a continuous-time Riccati equation as follows. First, we have
\begin{eqnarray*}
	\lim\limits_{\epsilon \to 0} \frac{\auxA - I_{\statedim}}{\epsilon} &=& \Amat{\star},\:\:\:\:\:\:\: \\
	\lim\limits_{\epsilon \to 0} \frac{\auxB}{\epsilon} &=& \Bmat{\star},\:\:\:\:\:\:\:\\
	\lim\limits_{\epsilon \to 0} \frac{\auxQ}{\epsilon} &=& \Qmat,\:\:\:\:\:\:\:\\
	\lim\limits_{\epsilon \to 0} \frac{\auxG}{\epsilon} &=& 0,\:\:\:\:\:\:\:\\
	\lim\limits_{\epsilon \to 0} \frac{\auxR}{\epsilon} &=& \Rmat.		
\end{eqnarray*}
Using these limits, letting $\epsilon \to 0$ in~\eqref{OptProofEq4} leads to 
	\begin{eqnarray*}
	\lim\limits_{\epsilon \to 0} \frac{P_{k}-P_{k+1}}{\epsilon} &=& \lim\limits_{\epsilon \to 0} \frac{\auxQ}{\epsilon} + \lim\limits_{\epsilon \to 0} \frac{\auxA^{\top} P_{k+1} \auxA - P_{k+1}}{\epsilon} \notag \\
	&-& \lim\limits_{\epsilon \to 0} \left( \frac{\auxG + \auxA^{\top} P_{k+1} \auxB}{\epsilon} \right) \left( \frac{\auxB^{\top} P_{k+1} \auxB + \auxR}{\epsilon} \right)^{-1} \left( \frac{\auxB^{\top} P_{k+1} \auxA + \auxG^{\top}}{\epsilon} \right) \\
	&=& \Qmat + \Amat{\star}^{\top} P_{k+1} + P_{k+1} \Amat{\star} - P_{k+1} \Bmat{\star} \Rmat^{-1} \Bmat{\star}^{\top} P_{k+1}. \label{SuppAuxEq4} 
	\end{eqnarray*}

That is, the backward differential equation
\begin{eqnarray} \label{OptProofEq8}
-\frac{\diff P_t }{\diff t} = \MatOpAve{\truth}{P},
\end{eqnarray}
with the terminal condition $P_T=0$. Thus, as $\epsilon \to 0$, the optimal policy becomes
\begin{eqnarray*}
	\action{t}^{\star} = -\Rmat^{-1} \Bmat{\star}^{\top} P_t \state{t},
\end{eqnarray*}
where $P_t$ is the solution of~\eqref{OptProofEq8}. Similarly, letting $\epsilon \to 0$ in~\eqref{OptProofEq5}, we get the optimal policy $\action{t}^{\star} = \Optgain{\truth} \state{t}$ for minimizing the infinite horizon expected average cost, where 
\begin{eqnarray*}
	\Optgain{\truth}=-\Rmat^{-1} \Bmat{\star}^{\top} \RiccSol{\truth},
\end{eqnarray*}
and $\RiccSol{\truth}$ solves $\MatOpAve{\truth}{P}=0$. Equivalently, letting $P_{t,T}$ be the solution of~\eqref{OptProofEq8} when the time horizon is $T$, it holds that $\lim\limits_{T \to \infty} P_{0,T} = \RiccSol{\truth}$, where $\RiccSol{\truth}$ solves $\MatOpAve{\truth}{P}=0$. Note that all these relationships rely on the convergence of discrete time Riccati equation~\eqref{OptProofEq4} to the algebraic Riccati equation~\eqref{OptProofEq5}, as $T \to \infty$.

Next, subtracting $\valuefunc{(k+1)\epsilon}{\state{k\epsilon}}$ from both sides of~\eqref{OptProofEq1}, dividing by $\epsilon$, and letting $\epsilon \to 0$, Ito Isomery implies that
\begin{eqnarray*}
	&& -\frac{\partial \valuefunc{t}{\state{t}} }{\partial t} \diff t = \min\limits_{\action{t}} \instantcost{\optimalpolicy}{\state{t},\action{t}} \diff t
	+ \E{ \diff \state{t}^{\top} \frac{\partial \valuefunc{t}{\state{t}}}{\partial \state{t}} + \frac{1}{2} \diff \state{t}^{\top} \frac{\partial^2 \valuefunc{t}{\state{t}}}{\partial \state{t} \partial \state{t}^{\top}} \diff \state{t} \Bigg| \filter{t}},
\end{eqnarray*} 
where we used the limits of the matrices $\auxQ,\auxG,\auxR$ as $\epsilon \to 0$ to find the expression on the right-hand-side of the above equality. Note that the above partial derivatives exist according to~\eqref{OptProofEq3} together with Dominated Convergence Theorem. Hence, substituting for $\diff \state{t}$ from the dynamics~\eqref{dynamics}, and leveraging Ito's Lemma, we obtain the Hamilton-Jacobi-Bellman \citep{yong1999stochastic} equation
\begin{eqnarray} \label{HJB}
-\frac{\partial \valuefunc{t}{\state{t}} }{\partial t} = \min\limits_{\action{t}} \instantcost{\optimalpolicy}{\state{t},\action{t}} 
+ \frac{\partial \valuefunc{t}{\state{t}}^{\top}}{\partial \state{t}} \left( \Amat{\star} \state{t} + \Bmat{\star} \action{t} \right) + \frac{1}{2} \tr { \frac{\partial^2 \valuefunc{t}{\state{t}}}{\partial \state{t} \partial \state{t}^{\top} } \BMcoeff{} \BMcoeff{}^{\top} }. 
\end{eqnarray} 
Further, letting $\epsilon \to 0$, the expression in~\eqref{OptProofEq3} gives
\begin{eqnarray} \label{OptProofEq9}
\valuefunc{t}{\state{t}} = \state{t}^{\top} P_t \state{t} + \itointeg{t}{T}{ \tr{ \BMcoeff{} \BMcoeff{}^{\top} P_s} } {s},
\end{eqnarray}
where $P_t$ solve ~\eqref{OptProofEq8}. This can be equivalently obtained using the fact that a quadratic function of the form $\valuefunc{t}{\state{t}}=\state{t}^{\top} F_t \state{t} + \varphi_t$ solves the partial differential equation~\eqref{HJB}, as long as 
\begin{eqnarray*}
	-\frac{ \diff \varphi_t }{\diff t} - \state{t}^{\top} \frac{\diff F_t}{\diff t} \state{t} &=&  \min\limits_{\action{t}} \state{t}^{\top} \Qmat \state{t} + \action{t}^{\top} \Rmat \action{t} \\
	&+&  2\state{t}^{\top} F_t \left( \Amat{\star} \state{t} + \Bmat{\star} \action{t} \right) +  \tr { F_t \BMcoeff{} \BMcoeff{}^{\top} },
\end{eqnarray*}
which after solving for $\action{t}$ gives the optimal policy $\action{t}^{\star}= -\Rmat^{-1} \Bmat{\star}^{\top} F_t \state{t}$, as well as
\begin{eqnarray*}
	-\frac{ \diff \varphi_t }{\diff t} - \state{t}^{\top} \frac{\diff F_t}{\diff t} \state{t} &=&  \state{t}^{\top} \Qmat \state{t} + 2\state{t}^{\top} F_t \left( \Amat{\star} \state{t} \right) \\
	&-& \state{t}^{\top} F_t \Bmat{\star} \Rmat^{-1} \Bmat{\star}^{\top} F_t \state{t} +  \tr { F_t \BMcoeff{} \BMcoeff{}^{\top} }.
\end{eqnarray*}
Because the equation above needs to hold for an arbitrary $\state{t}$, it splits to 
\begin{eqnarray*}
	-\frac{\diff F_t}{\diff t} = \MatOpAve{\truth}{F_t}, \frac{ \diff \varphi_t }{\diff t} = -\tr { F_t \BMcoeff{} \BMcoeff{}^{\top} },
\end{eqnarray*}
that is, $F_t$ solves~\eqref{OptProofEq8}. Further, note that cost-to-go at time $T$ is zero because time-to-go is zero, which provides the terminal condition $\valuefunc{T}{\state{T}}=0$, implying that $\varphi_t=\itointeg{t}{T}{ \tr{ \BMcoeff{} \BMcoeff{}^{\top} F_s} } {s}$. Therefore, the solutions $F_t,\varphi_t$ of \eqref{HJB} lead to the same expression as in~\eqref{OptProofEq9}.

Finally, the expected average cost of the policy $\action{t}=\Optgain{\truth}\state{t}$ is the limit of the expected average cost of the policy $\action{t}=-\Rmat^{-1} \Bmat{\star}^{\top} P_{t,T} \state{t}$, as $T \to \infty$;
\begin{eqnarray*}
	&& \limsup\limits_{T \to \infty} \frac{1}{T} \E{\itointeg{0}{T}{ \instantcost{\optimalpolicy}{s} } {s}} \\
	&=&
	\limsup\limits_{T \to \infty} \frac{1}{T} \itointeg{0}{T}{ \tr{ \BMcoeff{} \BMcoeff{}^{\top} P_{s,T}} } {s} \\
	&=& \tr{ \BMcoeff{} \BMcoeff{}^{\top}  \lim\limits_{T \to \infty} P_{s,T}} \\
	&=& \tr{ \BMcoeff{} \BMcoeff{}^{\top} \RiccSol{\truth}}. 
\end{eqnarray*}

Moreover, suppose that $\BMcoeff{}=0$, and apply the policy $\action{t}=\Optgain{\truth} \state{t}$. Then, the state trajectory becomes $\state{t} = e^{\CLmat{\star}t} \state{0}$, where $\CLmat{\star}=\Amat{\star}+\Bmat{\star}\Optgain{\truth}$. So, by~\eqref{OptProofEq9}, we have
\begin{eqnarray*}
	&&\state{0}^{\top} \RiccSol{\truth} \state{0} \\&=& \itointeg{0}{\infty}{ \state{t}^{\top} \left(\Qmat + \Optgain{\truth}^{\top} \Rmat \Optgain{\truth} \right)  \state{t} }{t} \\
	&=& \state{0}^{\top} \itointeg{0}{\infty}{ e^{\CLmat{\star}^{\top}t} \left(\Qmat + \Optgain{\truth}^{\top} \Rmat \Optgain{\truth} \right)  e^{\CLmat{\star}t} }{t} \state{0},
\end{eqnarray*}
for an arbitrary initial state $\state{0}$. Thus, \eqref{LyapInteg} holds:
\begin{eqnarray} \label{LyapInteg2}
	\RiccSol{\truth} = \itointeg{0}{\infty}{ e^{\CLmat{\star}^{\top}t} \left(\Qmat + \Optgain{\truth}^{\top} \Rmat \Optgain{\truth} \right)  e^{\CLmat{\star}t} }{t}.
\end{eqnarray}
Since $\Qmat$ is positive definite, the above equality implies that $\mosteig{\CLmat{\star}} < 0$, as well as 
\begin{eqnarray} 
&& \CLmat{\star}^{\top}\RiccSol{\truth} + \RiccSol{\truth} \CLmat{\star} \notag \\
&+& \Qmat + \Optgain{\truth}^{\top} \Rmat \Optgain{\truth} =0. \label{LyapDifferential}
\end{eqnarray}

So far, we have shown that by restricting our search for an optimal decision-making policy to the class of policies that the control action is frozen during intervals of length $\epsilon$, and then letting $\epsilon$ decay to vanish, we obtain optimal policies given by~\eqref{OptProofEq8}. Next, we show that these policies are optimal in the larger class of all control policies satisfying the information criteria at every time. That is, for all $t$, the control action $\action{t}$ can be determined using $\filter{t}=\sigfield{\state{0:t},\action{0:t}}$. For this purpose, first note that the decision-making policy $\action{t}=\Rmat^{-1} \Bmat{\star}^{\top} \RiccSol{\truth} \state{t}$ provides an upper-bound for the optimal expected average cost. That is,
\begin{eqnarray*}
	\inf_{\policy} \avecost{\policy}{} \leq \tr{ \BMcoeff{} \BMcoeff{}^{\top} \RiccSol{\truth}}.
\end{eqnarray*}

Now, suppose that there is another policy, denoted by $\auxpolicy$, that satisfies \\$\avecost{\auxpolicy}{} \leq \tr{ \BMcoeff{} \BMcoeff{}^{\top} \RiccSol{\truth}}$. Define cost-to-go of the policy $\auxpolicy$ by
\begin{eqnarray*}
	\auxvaluefunc{t}{\state{t}} = \E{\itointeg{t}{T}{\instantcost{\auxpolicy}{s}}{s} \Bigg| \filter{t}},
\end{eqnarray*}
where $T$ is large enough to satisfy \\
$\auxvaluefunc{t}{\state{t}} \leq 2 \state{t}^{\top} \RiccSol{\truth} \state{t}+ 2 T \tr{ \BMcoeff{} \BMcoeff{}^{\top} \RiccSol{\truth}}$, for all $0 \leq t \leq 1$. Note that such $T$ exists since $\auxpolicy$ provides a smaller expected average cost than the policy $\action{t}=\Rmat^{-1} \Bmat{\star}^{\top} \RiccSol{\truth} \state{t}$, and the desired upper-bound for $\auxvaluefunc{t}{\state{t}}$ is $2\valuefunc{t}{\state{t}}$; two times the cost-to-go of the policy $\action{t}=\Rmat^{-1} \Bmat{\star}^{\top} \RiccSol{\truth} \state{t}$. Next, writing
\begin{eqnarray*}
	\auxvaluefunc{t}{\state{t}} =   \E{\itointeg{t}{t+\epsilon}{\instantcost{\auxpolicy}{\state{s},\action{s}}}{s} + \auxvaluefunc{t+\epsilon}{\state{t+\epsilon}} \Bigg| \filter{t}},
\end{eqnarray*}
subtract $\auxvaluefunc{t+\epsilon}{\state{t}}$ from both sides, and divide by $\epsilon$. Letting $\epsilon$ decay to zero, the upper-bound for $\auxvaluefunc{t}{\state{t}}$ in terms of $\valuefunc{t}{\state{t}}$ implies that according to Dominated Convergence Theorem, the following derivatives exist and it holds that
\begin{eqnarray*}
	&& -\frac{\partial \auxvaluefunc{t}{\state{t}} }{\partial t} = \instantcost{\optimalpolicy}{\state{t},\auxpolicy\left( \filter{t} \right)}
	\\ &+& \frac{\partial \auxvaluefunc{t}{\state{t}}^{\top}}{\partial \state{t}} \left( \Amat{\star} \state{t} + \Bmat{\star} \auxpolicy\left( \filter{t} \right) \right) + \frac{1}{2} \tr { \frac{\partial^2 \auxvaluefunc{t}{\state{t}}}{\partial \state{t} \partial \state{t}^{\top} } \BMcoeff{} \BMcoeff{}^{\top} }.
\end{eqnarray*}
Now, note that since $\instantcost{\optimalpolicy}{t}$ as well as $\Bmat{\star}\action{t}$ are continuous functions of $\action{t}$, the above partial differential equation for $\auxvaluefunc{t}{\state{t}}$ indicates that $\auxpolicy\left( \filter{t} \right)$ is a continuous function of $\state{t}$. This, together with the fact that $\BM{t}$ is an almost surely continuous function of time $t$, in lights of the dynamics equation in~\eqref{dynamics}, leads to continuity of state trajectory $\state{t}$; i.e., $\action{t}=\auxpolicy\left( \filter{t} \right)$ is continuous as $t$ varies. Thus, decision-making policies that freeze for $\epsilon$-length intervals provide accurate approximations of $\action{t}=\auxpolicy\left( \filter{t} \right)$ in a sense that there exists a sequence $\left\{\action{t}^{(n)}\right\}_{n=1}^\infty$ such that $\action{t}^{(n)}$ freezes during intervals of the length $1/n$, and it holds that
\begin{eqnarray*}
	\limsup\limits_{n \to \infty} \limsup\limits_{T \to \infty} \frac{1}{T} \itointeg{0}{T}{ \E{ \norm{\action{t}^{(n)} - \auxpolicy\left( \filter{t} \right)}{2} } } {t} =0.
\end{eqnarray*}
Therefore, we have
\begin{eqnarray*}
	\avecost{\auxpolicy}{} \geq \inf\limits_{\epsilon>0} \inf \avecost{\policy}{} = \tr{\RiccSol{\truth} \BMcoeff{} \BMcoeff{}^{\top}},
\end{eqnarray*}
where the inner infimum is taken over all policies that freeze during $\epsilon$-length intervals. This shows that the policy $\action{t}=-\Rmat^{-1} \Bmat{\star}^{\top} \RiccSol{\truth} \state{t}$ is an optimal one, which completes the proof.	

\newpage
\section{{Proof of Theorem~\ref{StabThm}} (Stability margin)}

First, we study eigenvalues of the sum of two matrices. Suppose that $M, \erterm{}$ are arbitrary square matrices of the same size, and let $M= P^{-1} \Lambda P$ be the Jordan decomposition. That is, $\lambda_1, \cdots, \lambda_k$ are eigenvalues of $M$, $\Lambda \in \C^{\statedim \times \statedim}$ is a block diagonal matrix with blocks ${\Lambda_1,\cdots, \Lambda_k}$, and 
\begin{eqnarray} \label{JordanBlocks} 
\Lambda_i= \begin{bmatrix}
\lambda_i & 1 & 0 & \cdots & 0 & 0 \\
0 & \lambda_i & 1 & 0 & \cdots & 0 \\
\vdots & \vdots & \vdots & \vdots & \vdots & \vdots \\
0 & 0 & \cdots & 0 & \lambda_i & 1 \\
0 & 0 & 0 & \cdots & 0 & \lambda_i
\end{bmatrix} \in \C^{\mult{i} \times \mult{i}}.
\end{eqnarray}
Further, similar to Definition~\ref{MultDef}, let $\mult{M}=\max\limits_{1 \leq i \leq k} \mult{i}$. We prove that $\mosteig{M - \erterm{}}$ is at most
\begin{equation} \label{EigPerturb}
\mosteig{M} +  \mult{M}^{1/2} \Mnorm{P\erterm{}P^{-1}}{2} \vee \left( \mult{M}^{1/2} \Mnorm{P\erterm{}P^{-1}}{2} \right)^{1/\mult{M}}.
\end{equation}

To show the above inequality, first let $\lambda$ be an eigenvalue of $M-\erterm{}$ that satisfies $\Re \left(\lambda\right) > \mosteig{M}$. So, $ M - \lambda I $ is an invertible matirx, and there exists at least one vector $v$, such that $v \neq 0$ and $\left( M - \erterm{} - \lambda I \right)P^{-1}v = 0$. Then, $\left( M - \lambda I \right) P^{-1}v = \erterm{} P^{-1}v$ implies that
\begin{equation} \label{StabPerturbProofEq1}
v 
= \left( \Lambda - \lambda I \right)^{-1} P \erterm{} P^{-1}v.
\end{equation}
Because $\Lambda=\diag{\Lambda_1, \cdots, \Lambda_k}$, the matrix $\Lambda - \lambda I$ is block diagonal as well, and we have
$\left( \Lambda - \lambda I \right)^{-1} = \diag{\left( \Lambda_1 - \lambda I_{\mult{1}} \right)^{-1}, \cdots, \left( \Lambda_k - \lambda I_{\mult{k}} \right)^{-1}}$.
Further, it is straightforward to see that $\left( \Lambda_i - \lambda I_{\mult{i}} \right)^{-1}$ is
\begin{equation*}
- \begin{bmatrix}
\left( \lambda -\lambda_i \right)^{-1} & \left( \lambda -\lambda_i \right)^{-2}  & \cdots & \left( \lambda -\lambda_i \right)^{-\mult{i}} \\
0 & \left( \lambda -\lambda_i \right)^{-1} & \cdots & \left( \lambda -\lambda_i \right)^{-\mult{i} +1} \\
\vdots & \vdots & \vdots & \vdots \\
0 & \cdots & 0 & \left( \lambda -\lambda_i \right)^{-1}
\end{bmatrix}.
\end{equation*}
Therefore, we have 
$$\Mnorm{\left( \Lambda_i - \lambda I_{\mult{i}} \right)^{-1}}{2} \leq \mult{i}^{1/2}  \left( \left| \lambda - \lambda_i \right| \wedge \left| \lambda - \lambda_i \right|^{\mult{i}}  \right)^{-1}.$$
Using this bound for the operator norms of blocks of the block-diagonal matrix $\left( \Lambda - \lambda I \right)^{-1}$, since $\mult{i} \leq \mult{M}$ and $\Re\left(\lambda\right)>\mosteig{M}$, the equation in~\eqref{StabPerturbProofEq1} implies
\begin{eqnarray*}
	1 &\leq& \Mnorm{\left( \Lambda - \lambda I \right)^{-1} P \erterm{}P^{-1}}{2}
	\leq \Mnorm{\left( \Lambda - \lambda I \right)^{-1}}{2} \Mnorm{P\erterm{}P^{-1}}{2} \\
	&\leq& 
	\mult{M}^{1/2} \Mnorm{P\erterm{}P^{-1}}{2} \left( \left( \Re (\lambda) - \mosteig{M} \right) \wedge \left( \Re (\lambda) - \mosteig{M} \right)^{\mult{M}}  \right)^{-1}. \label{StabProofEq0}
\end{eqnarray*}

So, letting $\lambda$ be an eigenvalue of $M-\erterm{}$ that satisfies $\Re (\lambda) = \mosteig{M - \erterm{}}$, we obtain~\eqref{EigPerturb}.

Now, using~\eqref{EigPerturb}, we compare $\Amat{\star}+\Bmat{\star}\Optgain{\estpara{}}$ and $\estD{}=\estA{}+\estB{}\Optgain{\estpara{}}$. Since $\Amat{\star}+\Bmat{\star}\Optgain{\estpara{}} - \estD{}$ is 
\begin{equation} \label{StabThmProofEq00} 
\erterm{\star} = \Amat{\star}-\estA{} - \left( \Bmat{\star}-\estB{} \right) \Rmat^{-1} \estB{} \RiccSol{\estpara{}},
\end{equation}
using \eqref{StabCond1}, and letting $M=\estD{}$ in \eqref{EigPerturb}, we have
\begin{eqnarray*}
	&& \mosteig{\Amat{\star}+\Bmat{\star}\Optgain{\estpara{}}} \leq
	-\stabradii 
	+ \mult{\estD{}}^{1/2} \Mnorm{P^{-1}}{2} \Mnorm{P}{2} \Mnorm{\erterm{\star}}{2} \vee \left( \mult{\estD{}}^{1/2} \Mnorm{P^{-1}}{2} \Mnorm{P}{2} \Mnorm{\erterm{\star}}{2} \right)^{1/\mult{\estD{}}}.
\end{eqnarray*}
So, in order to have $\mosteig{\Amat{\star}+\Bmat{\star}\Optgain{\estpara{}}} < -\delta$, it suffices to show that
\begin{equation} \label{StabThmProofEq0}
\mult{\estD{}}^{1/2} \Mnorm{P^{-1}}{2} \Mnorm{P}{2} \Mnorm{\erterm{\star}}{2}  < \stabradii - \delta \wedge \left( \stabradii - \delta\right)^{\mult{\estD{}}}.
\end{equation}
However, since $\Mnorm{\erterm{\star}}{2} \leq \Learnerror{}{\estpara{}} \left( 1 \vee \frac{\Mnorm{\estB{}}{2} \RiccUpperBound}{\eigmin{\Rmat}}  \right)$,
\eqref{StabNeighborhood} provides~\eqref{StabThmProofEq0}, which leads to the desired result.

\subsection{Proof of sufficiency of \eqref{StabRadiEq0} for stabilization bounds}
Next, we show that $\Learnerror{}{\estpara{}} \leq \epsilon_0$ is sufficient for stabilization and express uniform bounds for $\stabradii,\RiccUpperBound$ in \eqref{StabCond1}. Let $\CLmat{\star}=\Amat{\star}+\Bmat{\star}\Optgain{\truth}=P_\star^{-1} \Lambda_\star P_\star$ be the Jordan decomposition as defined in the beginning of the proof, and define the largest block size $\mult{\star}=\mult{\CLmat{\star}}$, similar to Definition~\ref{MultDef}. Further, suppose that the following is satisfied:
\begin{equation} \label{StabRadiEq1} 
	\epsilon_0 \leq \frac{1}{1 \vee \Mnorm{\Optgain{\truth}}{2} } \left( \frac{ \left( -\mosteig{\CLmat{\star}} \right) \wedge \left( -\mosteig{\CLmat{\star}}\right)^{\mult{\star}} }{\mult{\star}^{1/2} \Mnorm{P_\star^{-1}}{2} \Mnorm{P_\star}{2} }
	\wedge \left[4 \itointeg{0}{\infty}{ \Mnorm{e^{\CLmat{\star}t}}{2}^2 }{t} \right]^{-1} \right) . 
\end{equation}

The inequality in \eqref{StabRadiEq1} implies that if we write  $\CLmat{1}=\estA{}+\estB{}\Optgain{\truth} = \Amat{\star}+ \Bmat{\star} \Optgain{\truth} + \erterm{1}=\CLmat{\star}+\erterm{1}$,
then, the matrix $\erterm{1}=\estA{}-\Amat{\star}+\left( \estB{}-\Bmat{\star} \right)\Optgain{\truth}$ satisfies 
\begin{equation*}
\Mnorm{\erterm{1}}{2} < \frac{\left( -\mosteig{\CLmat{\star}} \right) \wedge \left( -\mosteig{\CLmat{\star}} \right)^{\mult{\star}}}{\mult{\star}^{1/2} \Mnorm{P_\star}{2} \Mnorm{P_\star^{-1}}{2} }.
\end{equation*}
So, taking $M=\CLmat{\star}$, the bound in \eqref{EigPerturb} implies that $\mosteig{\CLmat{1}}<0$. Hence, we can employ Lemma~\ref{LyapLemma} to study consequences of applying the linear feedback $\Optgain{\truth}$ to a system of dynamics matrices $\estpara{}$, and get 
$$\RiccSol{\estpara{}} \leq M= \RiccSol{\estpara{}} + \itointeg{0}{\infty}{ e^{\CLmat{1}^{\top} t} F e^{\CLmat{1}t} }{t},$$
where
$$F= \left[ \Optgain{\truth} - \Optgain{\estpara{}} \right]^{\top} \Rmat \left[ \Optgain{\truth} - \Optgain{\estpara{}} \right].$$
Above, we used the fact that the initial state $\state{0}=x$ in Lemma~\ref{LyapLemma} is arbitrary, and so, the involved matrices are themselves equal. Further, similar to Lemma~\ref{LyapLemma}, it is straightforward to see that 
$$M = \itointeg{0}{\infty}{ e^{\CLmat{1}^{\top} t} \left[ \Qmat+ \Optgain{\truth}^{\top} \Rmat \Optgain{\truth} \right] e^{\CLmat{1}t} }{t}.$$
This leads to
\begin{eqnarray*}
	&& \Qmat+ \Optgain{\truth}^{\top} \Rmat \Optgain{\truth} = -\CLmat{1}^{\top} M - M \CLmat{1} \\
	&=& -\CLmat{\star}^{\top} M - M \CLmat{\star} - \erterm{1}^{\top} M - M \erterm{1}.
\end{eqnarray*} 
Because $\mosteig{\CLmat{\star}} <0$, the latter equation and~\eqref{LyapInteg} provide 

\begin{eqnarray*}
	M &=& \itointeg{0}{\infty}{ e^{\CLmat{\star}^{\top} t} \left[ \Qmat+ \Optgain{\truth}^{\top} \Rmat \Optgain{\truth} + \erterm{1}^{\top} M + M \erterm{1} \right] e^{\CLmat{\star}t} }{t} \\
	&=& \itointeg{0}{\infty}{ e^{\CLmat{\star}^{\top} t} \left[ \Qmat+ \Optgain{\truth}^{\top} \Rmat \Optgain{\truth} \right] e^{\CLmat{\star}t} }{t} + \itointeg{0}{\infty}{ e^{\CLmat{\star}^{\top} t} \left[ \erterm{1}^{\top} M + M \erterm{1} \right] e^{\CLmat{\star}t} }{t} \\
	&=& \RiccSol{\truth} + \itointeg{0}{\infty}{ e^{\CLmat{\star}^{\top} t} \left[ \erterm{1}^{\top} M + M \erterm{1} \right] e^{\CLmat{\star}t} }{t}. \label{StabThmProofEq1} 
\end{eqnarray*}

Therefore, it holds that $\Mnorm{M}{2} \leq \Mnorm{\RiccSol{\truth}}{2} + 2 \Mnorm{\erterm{1}}{2} \Mnorm{M}{2} \itointeg{0}{\infty}{ \Mnorm{e^{\CLmat{\star}t}}{2}^2 }{t}$,
which, according to~\eqref{StabRadiEq1} and $\RiccSol{\estpara{}} \leq M$, yields to
\begin{equation} \label{StabThmProofEq2}
\Mnorm{\RiccSol{\estpara{}}}{2} \leq \Mnorm{M}{2} \leq 2 \Mnorm{\RiccSol{\truth}}{2}. 
\end{equation}

To proceed, suppose that $v \in \C^{\statedim}$ satisfies $\norm{v}{2}=1$ and $\estD{}v = \lambda v$. Now, \eqref{LyapInteg} implies that
\begin{eqnarray*}
	v^*\RiccSol{\estpara{}}v 
	&=& \itointeg{0}{\infty}{ v^* e^{\estD{}^{\top} t} \left[ \Qmat + \Optgain{\estpara{}}^{\top} \Rmat \Optgain{\estpara{}} \right] e^{\estD{} t}v }{t} \\
	&=& \itointeg{0}{\infty}{ \norm{\left[ \Qmat + \Optgain{\estpara{}}^{\top} \Rmat \Optgain{\estpara{}} \right]^{\frac{1}{2}} e^{\lambda t}v}{2}^2 }{t} ,
\end{eqnarray*}
where $v^* $ is the transposed complex conjugate of $v$. Thus, maximizing the left-hand-side above while taking minimum on the right-hand-side, it holds that
\begin{equation} \label{StabThmProofEq3}
\Mnorm{\RiccSol{\estpara{}}}{2} \geq \eigmin{\Qmat} \itointeg{0}{\infty}{ e^{2 \Re \left( \lambda \right) t} }{t} \geq \frac{\eigmin{\Qmat} }{2 \Re \left( -\lambda \right) }.
\end{equation}
Putting \eqref{StabThmProofEq2} and \eqref{StabThmProofEq3} together, we obtain $\mosteig{\estD{}} \leq -\eigmin{\Qmat} \left(4\Mnorm{\RiccSol{\truth}}{2}\right)^{-1}$. This and~\eqref{StabThmProofEq2} imply that $\Learnerror{}{\estpara{}} \leq \epsilon_0$ is sufficient for~\eqref{StabCond1}, with $\stabradii=\eigmin{\Qmat} 4^{-1} \Mnorm{\RiccSol{\truth}}{2}^{-1}$, $\RiccUpperBound=2 \Mnorm{\RiccSol{\truth}}{2}$. ~\hfill~$\blacksquare$

\newpage
\section{{Proof of Theorem~\ref{GeneralRegretThm}} (Regret analysis)}
Let $M=\Qmat+ \Optgain{\truth}^{\top} \Rmat \Optgain{\truth}$. Recall that $\policy$ applies $\action{t} = \Gainmat{t} \state{t}$ at time $t$. Now, for a given $T$, suppose that $\epsilon>0$ is a fixed small real, and let $N=\lceil T / \epsilon\rceil$. Then, define the sequence of policies $\left\{ \policy_i \right\}_{i=0}^N$:
\begin{equation*}
\policy_i = \begin{cases}
\action{t} = \Gainmat{t} \state{t} & t < i\epsilon\\
\action{t} = \Optgain{\truth} \state{t} & t \geq i\epsilon
\end{cases}.
\end{equation*} 
Note that as long as one concerns about times $t \leq T$, it holds that $\optimalpolicy = \policy_0, \policy_N=\policy$. Clearly, since $\regret{T}{\policy_0}=0$, we have $\regret{T}{\policy} = \sum\limits_{i=0}^{N-1} \left( \regret{T}{\policy_{i+1}} - \regret{T}{\policy_i} \right)$. Thus, Lemma~\ref{LocalRegretLem} gives $\regret{T}{\policy} = \sum\limits_{i=0}^{N-1} \left(\state{i\epsilon}^{\top} F_{i\epsilon} \state{i\epsilon} + 2\state{i\epsilon}^{\top} g_{i\epsilon} + \ssconstant_{i\epsilon}\right)$,
where the matrix $F_{i\epsilon}$, the vector $g_{i\epsilon}$, and the scalar $\ssconstant_{i\epsilon}$ are defined in \eqref{LongEq1}, \eqref{LongEq2}, and \eqref{LongEq3}, respectively. Now, letting $\epsilon\to 0$, since $\Gainmat{t}$ is piecewise continuous, we have
\begin{equation} \label{GenRegThmProofEq0}
\regret{T}{\policy} = \itointeg{0}{T}{ \left(\state{t}^{\top} \widetilde{F}_{t} \state{t} + 2\state{t}^{\top} \widetilde{g}_{t} + \widetilde{\ssconstant}_{t}\right) }{t},
\end{equation}
where $\widetilde{F}_{t} = \lim\limits_{\epsilon\to 0 , i\epsilon\to t} {\epsilon}^{-1} F_{i\epsilon}$, $\widetilde{g}_{t} = \lim\limits_{\epsilon\to 0 , i\epsilon\to t} {\epsilon}^{-1} g_{i\epsilon}$, and $\widetilde{\ssconstant}_{t} = \lim\limits_{\epsilon\to 0 , i\epsilon\to t} {\epsilon}^{-1} \ssconstant_{i\epsilon}$. Note that the above limits exist, since $F_{i\epsilon},g_{i\epsilon},\ssconstant_{i\epsilon}$ are continuous. To calculate $\widetilde{F}_t,\widetilde{g}_t, \widetilde{\ssconstant}_t$, using Lemma~\ref{LocalRegretLem} and the piecewise continuity of $\Gainmat{t}$, we obtain $\widetilde{\ssconstant}_{t} = 0$, 
\begin{eqnarray*}
	\widetilde{F}_{t} &=& S_t + 2 H_t^{\top} \itointeg{t}{T}{ e^{\CLmat{\star}^{\top}(s-t)} M e^{\CLmat{\star}(s-t)} }{s} , \\
	\widetilde{g}_{t} &=& \itointeg{t}{T}{ \left(  H_t^{\top} e^{\CLmat{\star}^{\top}(s-t)} M \itointeg{t}{s}{ e^{\CLmat{\star}(s-u)}\BMcoeff{} }{\BM{u}} \right) }{s} ,
\end{eqnarray*}
where $S_t = \Gainmat{t}^{\top} \Rmat \Gainmat{t} - \Optgain{\truth}^{\top} \Rmat \Optgain{\truth}$, and 
\begin{eqnarray*}
	H_t &=& \lim\limits_{\epsilon\to 0} \frac{e^{\left(\Amat{\star}+\Bmat{\star}\Gainmat{t}\right)\epsilon} - e^{\CLmat{\star}\epsilon} }{\epsilon}  = \Bmat{\star} \left(\Gainmat{t}-\Optgain{\truth}\right).
\end{eqnarray*}
Now, by \eqref{LyapInteg} and $\itointeg{T}{\infty}{ e^{\CLmat{\star}^{\top}(s-t)} M e^{\CLmat{\star}(s-t)} }{s} =E_{T-t}$, the expression for $\widetilde{F}_t$ becomes
\begin{equation} \label{GenRegThmProofEq11}
S_t + H_t^{\top} \RiccSol{\truth} + \RiccSol{\truth} H_t - H_t^{\top} E_{T-t} - E_{T-t} H_t.
\end{equation}
So, after doing some algebra (see \eqref{LyapAuxEq}), we get 
\begin{eqnarray} 
S_t &+& H_t^{\top} \RiccSol{\truth} + \RiccSol{\truth} H_t \notag \\
&=& \left(\Gainmat{t}-\Optgain{\truth}\right)^{\top} \Rmat \left(\Gainmat{t}-\Optgain{\truth}\right). \label{GenRegThmProofEq1}
\end{eqnarray}
Since $\BM{u}$ has independent increments and in $\widetilde{g}_t$ we have $u\geq t$, Fubini's Theorem gives
$$\E{\state{t}^{\top}\widetilde{g}_t} = \E{\E{\state{t}^{\top}\widetilde{g}_t \Big| \sigfield{\BM{0:t}}}} = \E{\state{t}^{\top} \E{\widetilde{g}_t \Big| \sigfield{\BM{0:t}}}} = 0.$$
 
Hence, \eqref{GenRegThmProofEq0}, \eqref{GenRegThmProofEq11}, \eqref{GenRegThmProofEq1}, and Fubini's Theorem imply that $\E{\regret{T}{\policy}}=\E{\regterm{T}}$.

To proceed towards establishing the second result, apply Stochastic Fubini Theorem~\citep{oksendal2013stochastic,baldi2017stochastic} to get
\begin{eqnarray*}
	\itointeg{0}{T}{\state{t}^{\top}\widetilde{g}_t}{t} &=& \itointeg{0}{T}{\itointeg{t}{T}{   \itointeg{t}{s}{ \left(\state{t}^{\top} H_t^{\top} e^{\CLmat{\star}^{\top}(s-t)} M e^{\CLmat{\star}(s-u)}\BMcoeff{} \right) }{\BM{u}} }{s} }{t} \notag \\
	&=& \itointeg{0}{T}{\itointeg{0}{u}{ \itointeg{u}{T}{ \left(\state{t}^{\top} H_t^{\top} e^{\CLmat{\star}^{\top}(s-t)} M e^{\CLmat{\star}(s-u)}\BMcoeff{} \right) }{s} }{t} }{\BM{u}}
	= \itointeg{0}{T}{ \statetwo{u}^{\top} }{\BM{u}}, \label{GenRegThmProofEq2}
\end{eqnarray*}

where, using the expression for $H_t$, the vector $\statetwo{u}$ can be written as  

$$\statetwo{u}^{\top} = \itointeg{0}{u}{ \itointeg{u}{T}{ \left(\state{t}^{\top} H_t^{\top} e^{\CLmat{\star}^{\top}(s-t)} M e^{\CLmat{\star}(s-u)}\BMcoeff{} \right) }{s} }{t} 
= \itointeg{0}{u}{ \left(\state{t}^{\top} \left( \Gainmat{t}-\Optgain{\truth} \right)^{\top} P_{t,u}^{\top} \right) }{t},$$

for $P_{t,u}^{\top} = \itointeg{u}{T}{ \Bmat{\star}^{\top} e^{\CLmat{\star}^{\top}(s-t)} M e^{\CLmat{\star}(s-u)}\BMcoeff{} }{s}$. Now, letting $\empiricalcovmat{T}=\itointeg{0}{T}{\norm{\statetwo{u}}{2}^2}{u}$, for $\empiricalcovmat{T}<1$, Ito Isometry~\citep{baldi2017stochastic}, and for $\empiricalcovmat{T} \geq 1$, Lemma~\ref{SelfNormalizedLem}, imply that 
\begin{equation} \label{GenRegThmProofEq3}
\itointeg{0}{T}{ \statetwo{u}^{\top} }{\BM{u}} = \order{ \noisedim \empiricalcovmat{T}^{1/2} \log^{1/2} \empiricalcovmat{T}}.
\end{equation}
However, by using the triangle inequality and Fubini's Theorem, we obtain
\begin{eqnarray*}
	\empiricalcovmat{T} &\leq& \itointeg{0}{T}{ \itointeg{0}{u}{ \norm{P_{t,u}  \left(\Gainmat{t}-\Optgain{\truth}\right) \state{t}}{2}^2 }{t} }{u} \\
	&=& 
	\itointeg{0}{T}{ \left( \state{t}^{\top} \left(\Gainmat{t}-\Optgain{\truth}\right)^{\top} \left[\itointeg{t}{T}{ P_{t,u}^{\top} P_{t,u} }{u}\right] \left(\Gainmat{t}-\Optgain{\truth}\right) \state{t} \right) }{t} \notag \\
	&\leq& \itointeg{0}{T}{ \eigmax{\itointeg{t}{T}{ \Rmat^{-1/2} P_{t,u}^{\top} P_{t,u} \Rmat^{-1/2} }{u}} \norm{\Rmat^{1/2}\left(\Gainmat{t}-\Optgain{\truth}\right) \state{t}}{2}^2 }{t}. \label{AuxEqTable1}
\end{eqnarray*}

The second part of the integrand above appears in $\regterm{T}$. So, we proceed by finding an upper-bound for the first part. For this purpose, we use the triangle inequality and~\eqref{LyapInteg} to get the equation
\begin{eqnarray*}
\eigmax{\itointeg{t}{T}{ P_{t,u}^{\top} P_{t,u} }{u}} &\leq& \itointeg{t}{T}{ \Mnorm{\Bmat{\star}^{\top} e^{\CLmat{\star}^{\top}(u-t)}}{2}^2 \Mnorm{ \itointeg{u}{T}{ e^{\CLmat{\star}^{\top}(s-u)} M e^{\CLmat{\star}(s-u)} }{s} }{2}^2 \Mnorm{\BMcoeff{}}{2}^2 }{u}  \\
&\leq& \Mnorm{\Bmat{\star}}{2}^2 \Mnorm{\RiccSol{\truth}}{2}^2 \Mnorm{\BMcoeff{}}{2}^2 \itointeg{0}{\infty}{ \Mnorm{ e^{\CLmat{\star}^{\top}u}}{2}^2  }{u}. ~~~~~ \label{AuxEqTable2}
\end{eqnarray*}

Therefore, by using~\eqref{StabThmProofEq3}, we get
\begin{equation*}
\empiricalcovmat{T} \leq \frac{\Mnorm{\Bmat{\star}}{2}^2 \Mnorm{\RiccSol{\truth}}{2}^3 \Mnorm{\BMcoeff{}}{2}^2}{\eigmin{\Qmat} \eigmin{\Rmat}} \regterm{T}, 
\end{equation*}
since $E_t$ decays exponentially with $t$. So, \eqref{GenRegThmProofEq3} gives the desired result. ~\hfill~$\blacksquare$

\newpage
\section{{Proof of Theorem~\ref{BoundsThm}} (Analysis of Algorithm~\ref{algo1})}
In order to establish Theorem~\ref{BoundsThm}, we study the estimation procedure in~\eqref{RandomLSE1} and specify the accuracy at which the algorithm is able to estimate $\truth$. To that end, Lemma~\ref{EmpCovLemma} and Lemma~\ref{OptManifoldLemma} are utilized to study the Gram matrix $\empiricalcovmat{n} = \itointeg{0}{\episodetime{n}}{\statetwo{s} \statetwo{s}^{\top}}{s}$ in~\eqref{RandomLSE1}, while Lemma~\ref{SelfNormalizedLem} is used for bounding the estimation error. Then, by leveraging Lemma~\ref{LipschitzLemma}, we find the rates of deviating from the optimal policy in~\eqref{OptimalPolicy}. Finally, the resulting regret of Algorithm~\ref{algo1} is investigated in lights of Theorem~\ref{GeneralRegretThm}.

By using~\eqref{dynamics} to substitute for $\diff \state{t}$, $\left[ \itointeg{0}{\episodetime{n}}{ \statetwo{s} }{\state{s}^{\top}} \right]^{\top} \empiricalcovmat{n}^\dagger$ is
\begin{equation*}
\left[ \itointeg{0}{\episodetime{n}}{ \statetwo{s}  \statetwo{s}^{\top} \left[\truth\right]^{\top}}{s} + \itointeg{0}{\episodetime{n}}{ \statetwo{s} }{\BM{s}^{\top}}  \BMcoeff{}^{\top} \right]^{\top} \empiricalcovmat{n}^\dagger.
\end{equation*}
In~\eqref{BoundThmProofEq2}, we show that $\empiricalcovmat{n}$ is non-singular. So, we have
\begin{equation} \label{EstErrorEq}
\left[ \itointeg{0}{\episodetime{n}}{ \statetwo{s} }{\state{s}^{\top}} \right]^{\top} \empiricalcovmat{n}^{-1} = \left[\truth\right] + \left[ \empiricalcovmat{n}^{-1} \itointeg{0}{\episodetime{n}}{ \statetwo{s} }{\BM{s}^{\top}}  \BMcoeff{}^{\top} \right]^{\top}.
\end{equation}
Because $\left[\truth\right] \in \paraspace{0}$, \eqref{RandomLSE1} and \eqref{EstErrorEq} lead to
\begin{equation*}
\Learnerror{}{\estpara{n}} \leq \Mnorm{\empiricalcovmat{n}^{-1} \itointeg{0}{\episodetime{n}}{ \statetwo{s} }{\BM{s}^{\top}}  \BMcoeff{}^{\top}}{2} + \Mnorm{\randommatrix{n}}{2}.
\end{equation*}
Since entries of $\randommatrix{n}$ are $\normaldist{0}{\episodetime{-n/2} n^{1/2}}$, we have 
$$\PP{\Mnorm{\randommatrix{n}}{2} \geq \statedim^{1/2} \left(\statedim+\controldim\right)^{1/2} \episodetime{-n/4}n^{1/2}} = \order{e^{-n^{1/2}}}.$$ 
This, by Borel-Cantelli Lemma, leads to $$\Mnorm{\randommatrix{n}}{2}=\order{\statedim^{1/2}\left(\statedim+\controldim\right)^{1/2}\episodetime{-n/4}n^{1/2}}.$$ 

Thus, letting $\totaldim=\statedim+\controldim$, according to Lemma~\ref{SelfNormalizedLem}, $\Learnerror{}{\estpara{n}}$ is at most
\begin{equation}\label{BoundThmProofEq1}
\totaldim^{1/2} \order{ \noisedim^{1/2} \Mnorm{\BMcoeff{}}{2} \left(\frac{\log \eigmax{\empiricalcovmat{n}}}{\eigmin{\empiricalcovmat{n}}}\right)^{1/2} + \statedim^{1/2} \episodetime{-n/4} n^{1/2}}.
\end{equation}
Now, Lemma~\ref{EmpCovLemma} provides $\order{\log \eigmax{\empiricalcovmat{n}}} = n \log \episodetime{}$. Further, we will shortly show that 
\begin{equation} \label{BoundThmProofEq2}
\liminf\limits_{n \to \infty} \episodetime{-n/2} \eigmin{\empiricalcovmat{n}} \geq \eigmin{\BMcoeff{}\BMcoeff{}^{\top}}. 
\end{equation}
Thus, \eqref{BoundThmProofEq1} and \eqref{BoundThmProofEq2} yield to the upper-bound
\begin{equation*}
\Learnerror{}{\estpara{n}} = \order{ \totaldim^{1/2} \left( \statedim^{1/2} + \frac{\noisedim^{1/2} \Mnorm{\BMcoeff{}}{2} \log^{1/2} \episodetime{}}{\eigmin{\BMcoeff{}\BMcoeff{}^{\top}}^{1/2}} \right) \episodetime{-n/4} n^{1/2} }.
\end{equation*}
This gives the first result in Theorem~\ref{BoundsThm}. To prove the other statement, let $\ssconstant_\star$ be as defined in Lemma~\ref{LipschitzLemma}. So, Lemma~\ref{LipschitzLemma} implies that 
$$\Mnorm{\Optgain{\estpara{n}} - \Optgain{\truth}}{2}^2 =\order{ \left( \statedim+\controldim \right) \ssconstant_\star^2 \left( \statedim + \frac{\noisedim \Mnorm{\BMcoeff{}}{2}^2 \log \episodetime{}}{\eigmin{\BMcoeff{}\BMcoeff{}^{\top}}} \right) \episodetime{-n/2} n }.$$

Now, since during the time period $\episodetime{n-1} \leq t < \episodetime{n}$ the feedback matrix is frozen to $\Optgain{\estpara{n-1}}$, according to Lemma~\ref{EmpCovLemma}, we have 
$$\itointeg{0}{\episodetime{n}}{ \norm{ \Rmat^{1/2} \left( \Gainmat{t} - \Optgain{\parameter{\star}} \right) \state{t} }{}^2 }{t} = \order{\sum\limits_{k=1}^{n} \ssconstant_{\Gainmat{}} \episodetime{k-1} \episodetime{-(k-1)/2} k } ,$$

where
\begin{equation*}
\ssconstant_{\Gainmat{}} = \totaldim \ssconstant_\star^2 \left( \statedim + \frac{\noisedim \Mnorm{\BMcoeff{}}{2}^2 \log \episodetime{}}{\eigmin{\BMcoeff{}\BMcoeff{}^{\top}}} \right)  \left(\episodetime{}-1\right) \Mnorm{\Rmat}{2} \Mnorm{\BMcoeff{}}{2}^2.
\end{equation*}
Moreover, since by Theorem~\ref{OptimalityProof} we have $\mosteig{\CLmat{\star}}<0$, the matrix $E_t$ in Theorem~\ref{GeneralRegretThm} decays exponentially with $t$. So, it holds that 
$${\itointeg{0}{T}{ \left( \state{t}^{\top} E_{T-t} \Bmat{\star} \left( \Gainmat{t}-\Optgain{\truth} \right) \state{t} \right) }{t}} = \order{\log^2 T}.
$$ 

Therefore, according to Theorem~\ref{GeneralRegretThm}, we have the following:
\begin{equation*}
\regret{T}{\policy} = \order{\sum\limits_{k=1}^{\lceil (\log T)/ (\log \episodetime{}) \rceil} \episodetime{(k-1)/2}k} = \order{ \frac{\ssconstant_{\Gainmat{}}}{\log \episodetime{}} T^{1/2} \log T }.
\end{equation*}
This, according to $\ssconstant_\star$ in Lemma~\ref{LipschitzLemma}, completes the proof.

To prove \eqref{BoundThmProofEq2}, let $\CLmat{k-1}=\Amat{\star}+\Bmat{\star} \Optgain{\estpara{k-1}}$. Then, by Lemma~\ref{EmpCovLemma}, we have
\begin{equation} \label{BoundThmProofEq3}
\liminf\limits_{k \to \infty} \episodetime{-k} \eigmin{ \itointeg{\episodetime{k-1}}{\episodetime{k}}{ \state{t}\state{t}^{\top} }{t}} 
\geq \eta_k \eigmin{\BMcoeff{}\BMcoeff{}^{\top}},
\end{equation}
where $\eta_k=\left( 1 - {\episodetime{}}^{-1} \right) \left(\itointeg{0}{1}{ \Mnorm{e^{-\CLmat{k-1}s}}{2}^2 }{s}\right)^{-1}$. Hence, \eqref{BoundThmProofEq3} implies that to establish \eqref{BoundThmProofEq2}, it suffices to show that the following inequality holds for some $0 \leq \ell <n-1$:
\begin{equation} \label{BoundThmProofEq4}
\liminf\limits_{n \to \infty} \eigmin{\sum\limits_{k=\ell}^{n-1} \episodetime{ k-n/2 } \begin{bmatrix}
	I_{\statedim} \\ \Optgain{\estpara{k}}
	\end{bmatrix} \begin{bmatrix}
	I_{\statedim} \\ \Optgain{\estpara{k}}
	\end{bmatrix}^{\top}   } \geq \max\limits_{\ell \leq k \leq n-1} \frac{1}{\eta_k}.
\end{equation}

For an arbitrary fixed $\epsilon>0$, consider the event that the above-mentioned smallest eigenvalue is less than $\epsilon$, and let $\manifold{M}_n(\epsilon)$ be the set of matrices $\left[\estpara{k}\right]_{k=\ell}^{n-1}$ for which this event occurs:
\begin{equation*}
\manifold{M}_n(\epsilon) = \left\{ \left[\estpara{\ell}, \cdots, \estpara{n-1}\right] : \eigmin{P_{\ell,n} P_{\ell,n}^{\top}} \leq \epsilon \right\},
\end{equation*} 
where the ${(\statedim+\controldim) \times \statedim (n-\ell)}$ matrix $P_{\ell,n}$ is
\begin{equation*}
\left[ \episodetime{\frac{\ell}{2}-\frac{n}{4}} \begin{bmatrix}
I_{\statedim} \\
\Optgain{\estpara{\ell}}
\end{bmatrix} , \cdots, \episodetime{\frac{n-1}{2}-\frac{n}{4}} \begin{bmatrix}
I_{\statedim} \\
\Optgain{\estpara{n-1}}
\end{bmatrix}\right].
\end{equation*}

Now, note that the set of all matrices 
\begin{equation*}
F_n = \begin{bmatrix}
\episodetime{\ell/2-n/4} I_{\statedim}  & \cdots & \episodetime{(n-1)/2-n/4} I_{\statedim} \\
\episodetime{\ell/2-n/4} \Gainmat{\ell} &  \cdots & \episodetime{(n-1)/2-n/4} \Gainmat{n-1}
\end{bmatrix},
\end{equation*}
that there exists $v \in \R^{\statedim+\controldim}$ satisfying $\norm{v}{2}=1$ and $F_n^{\top}v=0$, is of dimension $\statedim+\controldim-1+(n-\ell)(\controldim-1)$. To show that, on one hand, the set of unit $\statedim+\controldim$ dimensional vectors is (a sphere) of dimension $\statedim+\controldim-1$. On the other hand, by writing $v=\left[v_1^{\top},v_2^{\top}\right]^{\top}$, for $v_1 \in \R^{\statedim}$ and $v_2 \in \R^{\controldim}$, clearly, $F_n^{\top}v=0$ is equivalent to  $\Gainmat{k}^{\top}v_2=-v_1$, for all $k=\ell,\cdots, n-1$. The latter enforces every column of $\Gainmat{k}$ to be in a certain hyperplane in $\R^{\controldim}$.

Thus, according to Lemma~\ref{OptManifoldLemma}, the dimension of $\manifold{M}_n(0)$ is at most $\statedim + (\controldim-1) (n-\ell+1)+ (n-\ell) \statedim^2$. Further, if $\ell$ is sufficiently large so that $\episodetime{-\ell+n/2} \epsilon <1$, then for every $\left[\estpara{k}\right]_{k=\ell}^{n-1} \in \manifold{M}_n(\epsilon)$, there exists some $\left[\auxA_k,\auxB_k\right]_{k=\ell}^{n-1} \in \manifold{M}_n(0)$, such that for all $k=\ell,\cdots, n-1$, it holds that
\begin{equation*}
\Mnorm{\left[\estpara{k}\right] - \left[\auxA_k,\auxB_k\right]}{2} = \order{\episodetime{-k/2 +n/4} \epsilon^{1/2}}.
\end{equation*}
The random matrices $\left\{\randommatrix{k}\right\}_{k=0}^{n-1}$ are independent, and entries of $\randommatrix{k}$ are independent identically distributed $\normaldist{0}{\episodetime{-k/2}k^{1/2}}$ random variables. Hence, we have
\begin{equation*}
\PP{\manifold{M}_n(\epsilon)} = \left[ \order{\episodetime{\ell/4} \ell^{-1/4} \episodetime{-\ell/2 +n/4} \epsilon^{1/2}} \wedge 1 \right]^{m},
\end{equation*}
where $m=\left(\statedim \controldim - \controldim+1\right) (n-\ell) - \statedim-\controldim+1$. To see that, note that $\manifold{M}_n(0)$ is a $\statedim + (\controldim-1) (n-\ell+1)+ (n-\ell) \statedim^2$ dimensional object in a $\statedim (\statedim+\controldim)(n-\ell)$ dimensional space. So, the exponent is at least $m$.
Letting $\ell = n-5$, we have $m\geq 5$. Further, as $n$ grows, $ \order{\ell^{-1/4} \gamma^{(n-\ell)/4} \epsilon^{1/2} } < 1 $ holds for $\epsilon=\max\limits_{\ell \leq k \leq n-1} {\eta_k}^{-1}$. So, we have $\sum\limits_{n=5}^{\infty} \PP{\manifold{M}_n(\epsilon)} = \sum\limits_{n=5}^{\infty} \order{n^{-1/4}}^5 <  \infty$,
which by Borel-Cantelli Lemma implies \eqref{BoundThmProofEq4}. ~\hfill~$\blacksquare$


\newpage
\section{Estimation Rates under Persistent Randomization} \label{appA}
\begin{propo} \label{FullEstimationProp}
	Assume that in Algorithm~\ref{algo1} the variance of entries of $\randommatrix{n}$ is $\variance_n^2$, where 
	$$\liminf\limits_{n \to \infty} \variance_n > 0.$$
	Then, letting $\learnconstant$ be as in Theorem~\ref{BoundsThm}, $\statetwo{s} = \left[\state{s}^{\top}, \action{t}^{\top}\right]^{\top}$, and $\empiricalcovmat{n} = \itointeg{0}{\episodetime{n}}{\statetwo{s} \statetwo{s}^{\top}}{s}$, we have 
	\begin{equation*}
	\Mnorm{\left( \itointeg{0}{\episodetime{n}}{ \statetwo{s} }{\state{s}^{\top}} \right)^{\top} \empiricalcovmat{n}^\dagger - \left[ \truth \right]}{2}^2 = \order{\learnconstant \episodetime{-n} n^2}.
	\end{equation*}
\end{propo}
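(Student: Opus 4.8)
The plan is to retrace the estimation-error analysis of Theorem~\ref{BoundsThm} and to isolate the \emph{single} place where persistent randomization alters the rate, namely the lower bound on the smallest eigenvalue of the Gram matrix $\empiricalcovmat{n}$. Since the proposition concerns the raw least-squares estimate (the expression in~\eqref{RandomLSE1} \emph{before} adding $\randommatrix{n}$ and projecting), I would first note that $\empiricalcovmat{n}$ is nonsingular---as in~\eqref{BoundThmProofEq2}, and \emph{a fortiori} here since larger randomization only enlarges $\eigmin{\empiricalcovmat{n}}$---so that $\empiricalcovmat{n}^\dagger=\empiricalcovmat{n}^{-1}$ and the decomposition~\eqref{EstErrorEq} applies verbatim:
$$\left( \itointeg{0}{\episodetime{n}}{ \statetwo{s} }{\state{s}^{\top}} \right)^{\top} \empiricalcovmat{n}^{-1} - \left[ \truth \right] = \left[ \empiricalcovmat{n}^{-1} \itointeg{0}{\episodetime{n}}{ \statetwo{s} }{\BM{s}^{\top}} \BMcoeff{}^{\top} \right]^{\top}.$$
The right-hand side is a self-normalized continuous-time martingale, so Lemma~\ref{SelfNormalizedLem} bounds its squared operator norm by $\order{ \totaldim \noisedim \Mnorm{\BMcoeff{}}{2}^2 \log \eigmax{\empiricalcovmat{n}} / \eigmin{\empiricalcovmat{n}} }$, exactly as the martingale part of~\eqref{BoundThmProofEq1}; crucially, no $\randommatrix{n}$-term appears because we analyze the estimate itself rather than its randomized projection.

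It then remains to reuse the two spectral bounds on $\empiricalcovmat{n}$, modifying only the one affected by persistent randomization. Lemma~\ref{EmpCovLemma} still yields $\log \eigmax{\empiricalcovmat{n}} = \order{n \log \rrate}$, since the closed loop stays stabilized and $\eigmax{\empiricalcovmat{n}}=\order{\episodetime{n}}$ irrespective of the magnitude of $\randommatrix{n}$. The decisive quantity is $\eigmin{\empiricalcovmat{n}}$, which I would lower-bound by adapting the anti-concentration and dimension-counting argument behind~\eqref{BoundThmProofEq4}, invoking Lemma~\ref{OptManifoldLemma} for the dimension of the degenerate set $\manifold{M}_n(0)$, but now with $\variance_n$ bounded away from zero instead of decaying like $\left(\episodetime{-n} n\right)^{1/4}$.

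The quantitative effect is the following. Taking $\ell=n-5$ as in the original argument, the tube probability $\PP{\manifold{M}_n(\epsilon)}$ that the successive gains $\left\{\Optgain{\estpara{k}}\right\}_{k=\ell}^{n-1}$ land within the near-degenerate set scales like $\big(\epsilon^{1/2}\,\episodetime{(n-\ell)/2}/\variance_\star\big)^{m}$ with $m\geq5$, where $\variance_\star=\liminf_n \variance_n>0$. Because $\variance_\star$ no longer shrinks, I can target a much larger eigenvalue scale and still keep the probabilities summable: choosing $\epsilon\asymp 1/n$ makes the bracket $\order{n^{-1/2}}$, so $\sum_n \PP{\manifold{M}_n(\epsilon)} = \sum_n \order{n^{-m/2}}<\infty$, and Borel--Cantelli gives $\eigmin{\empiricalcovmat{n}} = \orderinv{\episodetime{n}\,\eigmin{\BMcoeff{}\BMcoeff{}^{\top}}/n}$. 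This is the sole departure from the decaying-variance case, where the identical scheme only produced $\eigmin{\empiricalcovmat{n}}=\orderinv{\episodetime{n/2}}$.

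Combining the three estimates yields squared error $\order{\totaldim \noisedim \Mnorm{\BMcoeff{}}{2}^2 (n\log\rrate)\big/\big(\episodetime{n}\eigmin{\BMcoeff{}\BMcoeff{}^{\top}}/n\big)}=\order{\learnconstant \episodetime{-n} n^2}$, matching the form of $\learnconstant$. I expect the anti-concentration lower bound on $\eigmin{\empiricalcovmat{n}}$ to be the main obstacle: one must verify that non-shrinking randomization genuinely spreads the successive feedback matrices $\Optgain{\estpara{k}}$ across all $\statedim+\controldim$ signal directions at scale $\episodetime{n}/n$, and confirm that the $1/n$ slack is precisely what trades the constant variance against Borel--Cantelli summability over the $\order{n}$ episodes---too large a target scale breaks summability, while a smaller one would not exploit the persistence and would merely reproduce the slower $\episodetime{n/2}$ growth.
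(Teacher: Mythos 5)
Your proposal is correct and follows essentially the same route as the paper's own proof: the decomposition via~\eqref{EstErrorEq}, Lemma~\ref{SelfNormalizedLem} for the martingale term, Lemma~\ref{EmpCovLemma} for $\log \eigmax{\empiricalcovmat{n}}$, and the anti-concentration/dimension-counting argument with Borel--Cantelli to get $\eigmin{\empiricalcovmat{n}} = \orderinv{\episodetime{n} n^{-1} \eigmin{\BMcoeff{}\BMcoeff{}^{\top}}}$ under persistent randomization. Your parametrization (constant-scaled Gram matrix with target $\epsilon \asymp 1/n$, and $\ell = n-5$) is just an equivalent rescaling of the paper's ($n$-scaled matrix $P_{\ell,n}$ with constant $\epsilon = \max_k \eta_k^{-1}$, and $\ell = n-4$), and both yield summable tube probabilities and the stated $\order{\learnconstant \episodetime{-n} n^2}$ rate.
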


\begin{proof} 
	By~\eqref{EstErrorEq}, it suffices to study $\erterm{}=\empiricalcovmat{n}^{\dagger} \itointeg{0}{\episodetime{n}}{ \statetwo{s} }{\BM{s}^{\top}}  \BMcoeff{}^{\top}$. In the sequel, we show that 
$$\liminf\limits_{n \to \infty} n \episodetime{-n} \eigmin{\empiricalcovmat{n}} \geq \eigmin{\BMcoeff{}\BMcoeff{}^{\top}}.$$ 
So, putting Lemma~\ref{SelfNormalizedLem} and Lemma~\ref{EmpCovLemma} together, we obtain the desired result, since they give $$\Mnorm{\erterm{}}{2}^2=\order{ \left(\statedim+\controldim\right)  \noisedim \Mnorm{\BMcoeff{}}{2}^2  \frac{\episodetime{-n} n^{2} \log \episodetime{} }{\eigmin{\BMcoeff{}\BMcoeff{}^{\top}}} }.$$ 

Thus, by~\eqref{BoundThmProofEq3}, it is enough to show that for some $0 \leq \ell <n-1$, 
\begin{equation*}
\liminf\limits_{n \to \infty} \eigmin{\sum\limits_{k=\ell}^{n-1} \episodetime{ k-n } n \begin{bmatrix}
	I_{\statedim} \\ \Optgain{\estpara{k}}
	\end{bmatrix} \begin{bmatrix}
	I_{\statedim} \\ \Optgain{\estpara{k}}
	\end{bmatrix}^{\top}   }
\end{equation*}
is at least $\epsilon=\max\limits_{\ell \leq k \leq n-1} {\eta_k}^{-1}$. Let $\manifold{M}_n(\epsilon)$ be the set of $\left[\estpara{k}\right]_{k=\ell}^{n-1}$ that the above does not hold: $\manifold{M}_n(\epsilon) = \left\{ \left[\estpara{\ell}, \cdots, \estpara{n-1}\right] : \eigmin{P_{\ell,n} P_{\ell,n}^{\top}} \leq \epsilon \right\}$, where $P_{\ell,n} $ is
\begin{equation*}
\left[ \episodetime{\frac{\ell-n}{2}} n^{\frac{1}{2}} \begin{bmatrix}
I_{\statedim} \\
\Optgain{\estpara{\ell}}
\end{bmatrix} , \cdots, \episodetime{-\frac{1}{2}} n^{\frac{1}{2}} \begin{bmatrix}
I_{\statedim} \\
\Optgain{\estpara{n-1}}
\end{bmatrix}\right].
\end{equation*}

Similar to the proof of Theorem~\ref{BoundsThm}, for $\left[\estpara{k}\right]_{k=\ell}^{n-1} \in \manifold{M}_n(\epsilon)$, there is $\left[\auxA_k,\auxB_k\right]_{k=\ell}^{n-1} \in \manifold{M}_n(0)$, such that $\Mnorm{\left[\estpara{k}\right] - \left[\auxA_k,\auxB_k\right]}{2}^2 = \order{\episodetime{n-k} n^{-1} \epsilon}$. Thus, $\liminf\limits_{n \to \infty} \variance_n > 0$, together with the dimension of $\manifold{M}_n(0)$ that we calculated in the proof of Theorem~\ref{BoundsThm}, leads to
\begin{equation*}
\PP{\manifold{M}_n(\epsilon)} = \left[ \order{ \episodetime{(n-\ell)/2} n^{-1/2} \epsilon^{1/2}} \wedge 1 \right]^{m},
\end{equation*}
for $m=\left(\statedim \controldim - \controldim+1\right) (n-\ell) - \statedim-\controldim+1$. Finally, $\ell = n-4$ gives $\sum\limits_{n=4}^{\infty} \PP{\manifold{M}_n(\epsilon)}  <  \infty$. Therefore, Borel-Cantelli Lemma implies the desired result. 
\end{proof}

\newpage
\section{Auxiliary Lemmas}
In this section, we state the auxiliary lemmas used for establishing the main  results and provide their proofs, each subsection corresponding to one lemma. 

First, in Lemma~\ref{LocalRegretLem} in Subsection~\ref{appB}, we provide expressions for the difference between the regrets of two policies. Study of self-normalized stochastic integrals is the content of Lemma~\ref{SelfNormalizedLem}, while Lemma~\ref{LipschitzLemma} on Lipschitz continuity of the optimal feedback with respect to the dynamics matrices is established in Subsection~\ref{appD}. 

Next, in Lemma~\ref{LyapLemma}, we consider the total cumulative cost for the case of applying a sub-optimal time-invariant linear feedback policy to a deterministic system. Then, Lemma~\ref{EmpCovLemma} focuses on explicit calculation of the empirical covariance matrix of the state vectors. Finally, in Lemma~\ref{OptManifoldLemma} in Subsection~\ref{appF6} we specify the set of dynamics matrices that possess the same optimal linear feedback matrix.


\subsection{\bf Difference in regrets of two policies} \label{appB}
\begin{lemm} \label{LocalRegretLem}
	For fixed $0 \leq t_1 \leq t_2 \leq T$, define the policies $\policy_1,\policy_2$ according to
	\begin{equation*}
	\policy_i = \begin{cases}
	\action{t}=\Gainmat{} \state{t} & t < t_i \\
	\action{t}=\Optgain{\truth} \state{t} & t \geq t_i
	\end{cases}.
	\end{equation*}
	Further, let $\CLmat{\star}=\Amat{\star}+\Bmat{\star}\Optgain{\truth}$, $\CLmat{}=\Amat{\star}+\Bmat{\star}\Gainmat{}$, $M_\star = \Qmat+ \Optgain{\truth}^{\top} \Rmat \Optgain{\truth}$, $M = \Qmat + \Gainmat{}\Rmat \Gainmat{}$, $\erterm{t} = e^{\CLmat{}(t-t_1)} - e^{\CLmat{\star}(t-t_1)}$,
	$Z_t = \itointeg{t_1}{t}{ \left[ e^{\CLmat{}(t-s)} - e^{\CLmat{\star}(t-s)} \right] \BMcoeff{}}{\BM{s}}$, and
	$S = M-M_\star = \Gainmat{}^{\top} \Rmat \Gainmat{} - \Optgain{\truth}^{\top} \Rmat \Optgain{\truth}$. 
	
	Then, we have $\regret{T}{\policy_2} - \regret{T}{\policy_1} = \state{t_1}^{\top} F_{t_1} \state{t_1} + 2\state{t_1}^{\top} g_{t_1} + \ssconstant_{t_1}$, where $F_{t_1},g_{t_1}$, and $\ssconstant_{t_1}$ are 
	\begin{eqnarray}   
			F_{t_1} &=& \itointeg{t_1}{t_2}{ \left(e^{\CLmat{\star}^{\top}(t-t_1)} S e^{\CLmat{\star}(t-t_1)} + 2 \erterm{t}^{\top} M e^{\CLmat{\star}(t-t_1)} + \erterm{t}^{\top} M \erterm{t}\right) }{t} \notag \\
			&+& \itointeg{t_2}{T}{ \left(2 \erterm{t_2}^{\top} e^{\CLmat{\star}^{\top}(t-t_2)} M_{\star} e^{\CLmat{\star}(t-t_1)} + \erterm{t_2}^{\top} e^{\CLmat{\star}^{\top}(t-t_2)} M_\star e^{\CLmat{\star}(t-t_2)} \erterm{t_2}\right) }{t}, \label{LongEq1}
		\end{eqnarray}
		\begin{eqnarray}
			g_{t_1} &=& \itointeg{t_1}{t_2}{ \left( S \itointeg{t_1}{t}{ e^{\CLmat{\star}(t-s)} \BMcoeff{} }{\BM{s}} +  \erterm{t}^{\top} M \itointeg{t_1}{t}{ e^{\CLmat{\star}(t-s)} \BMcoeff{} }{\BM{s}} + e^{\CLmat{\star}^{\top}(t-t_1)} M Z_t  \right) }{t} \notag \\
			&+& \itointeg{t_1}{t_2}{ \erterm{t}^{\top} M Z_t }{t} + \itointeg{t_2}{T}{ \erterm{t_2}^{\top} e^{\CLmat{\star}^{\top}(t-t_2)} M_\star \left( e^{\CLmat{\star}(t-t_2)} Z_{t_2} + \itointeg{t_1}{t}{ e^{\CLmat{\star}(t-s)} \BMcoeff{} }{\BM{s}} \right) }{t} \notag \\
			&+& \itointeg{t_2}{T}{ \left(  e^{\CLmat{\star}^{\top}(t-t_1)} M_\star e^{\CLmat{\star}(t-t_2)} Z_{t_2} \right) }{t} ,\label{LongEq2} 
		\end{eqnarray}
		\begin{eqnarray}
			\ssconstant_{t_1} &=& \itointeg{t_1}{t_2}{ \left(\norm{S^{1/2} \itointeg{t_1}{t}{ e^{\CLmat{\star}(t-s)} \BMcoeff{} }{\BM{s}} }{2}^2 + 2 Z_t^{\top} M \itointeg{t_1}{t}{ e^{\CLmat{\star}(t-s)} \BMcoeff{} }{\BM{s}} + Z_t^{\top} M Z_t \right) }{t} \notag \\
			&+& \itointeg{t_2}{T}{ \left(2 Z_{t_2}^{\top} e^{\CLmat{\star}^{\top}(t-t_2)} M_\star \itointeg{t_1}{t}{ e^{\CLmat{\star}(t-s)} \BMcoeff{} }{\BM{s}} + Z_{t_2}^{\top} e^{\CLmat{\star}^{\top}(t-t_2)} M_\star e^{\CLmat{\star}(t-t_2)} Z_{t_2} \right) }{t} . \label{LongEq3}
		\end{eqnarray} 
\end{lemm}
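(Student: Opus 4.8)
The plan is to run $\policy_1$ and $\policy_2$ on the \emph{same} probability space, driven by the same Brownian path $\left\{\BM{t}\right\}$ from the same initial state, and to exploit a coupling: since $t_1\leq t_2$, both policies apply the feedback $\Gainmat{}$ throughout $[0,t_1)$, so their trajectories coincide on $[0,t_1]$ and in particular share the value $\state{t_1}$. This is precisely why the answer is a quadratic form in $\state{t_1}$ with stochastic coefficients. Writing out the regret definition \eqref{RegretDefEq} for both policies, the benchmark cost $\instantcost{\optimalpolicy}{t}$ is the \emph{same} random variable subtracted in each, hence cancels in the difference, giving $\regret{T}{\policy_2}-\regret{T}{\policy_1}=\itointeg{0}{T}{\left[\instantcost{\policy_2}{t}-\instantcost{\policy_1}{t}\right]}{t}$; the contribution over $[0,t_1)$ then vanishes because the two trajectories agree there. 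The work thus reduces to expanding the cost difference over $[t_1,t_2)$ and over $[t_2,T]$.

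First I would treat the window $[t_1,t_2)$, where $\policy_1$ applies $\Optgain{\truth}$ (closed loop $\CLmat{\star}$) and $\policy_2$ applies $\Gainmat{}$ (closed loop $\CLmat{}=\Amat{\star}+\Bmat{\star}\Gainmat{}$). Solving \eqref{dynamics} from the common state $\state{t_1}$ by variation of constants gives the displacement $\state{t}^{(2)}-\state{t}^{(1)}=\erterm{t}\state{t_1}+Z_t$, which is exactly what the definitions of $\erterm{t}$ and $Z_t$ encode. The instantaneous costs are $\state{t}^{(2)\top}M\state{t}^{(2)}$ and $\state{t}^{(1)\top}M_\star\state{t}^{(1)}$; substituting $M=M_\star+S$ and expanding, every resulting term pairs two of the four building blocks $e^{\CLmat{\star}(t-t_1)}\state{t_1}$, $\erterm{t}\state{t_1}$, $\itointeg{t_1}{t}{e^{\CLmat{\star}(t-s)}\BMcoeff{}}{\BM{s}}$, and $Z_t$. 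I would then group by degree in $\state{t_1}$: the terms quadratic in $\state{t_1}$ (from the two deterministic blocks) assemble the matrix integrand of $F_{t_1}$; the cross terms, linear in $\state{t_1}$ and linear in the Itô integrals, assemble $g_{t_1}$; and the purely noise-dependent terms assemble $\ssconstant_{t_1}$. Symmetry of $M,M_\star,S$ is used to identify each quadratic form with its transpose when matching coefficients.

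Next I would treat $[t_2,T]$, where both policies evolve under $\CLmat{\star}$ and differ only through their states at $t_2$; that discrepancy $\erterm{t_2}\state{t_1}+Z_{t_2}$ propagates forward as $q=e^{\CLmat{\star}(t-t_2)}\left(\erterm{t_2}\state{t_1}+Z_{t_2}\right)$, so the cost difference is $2\state{t}^{(1)\top}M_\star q+q^{\top}M_\star q$. A key simplification here is the identity $e^{\CLmat{\star}(t-t_2)}\itointeg{t_1}{t_2}{e^{\CLmat{\star}(t_2-s)}\BMcoeff{}}{\BM{s}}+\itointeg{t_2}{t}{e^{\CLmat{\star}(t-s)}\BMcoeff{}}{\BM{s}}=\itointeg{t_1}{t}{e^{\CLmat{\star}(t-s)}\BMcoeff{}}{\BM{s}}$, which merges the noise accumulated over $[t_1,t_2]$ and over $[t_2,t]$ into a single integral based at $t_1$; this is what produces the integrals $\itointeg{t_1}{t}{e^{\CLmat{\star}(t-s)}\BMcoeff{}}{\BM{s}}$ appearing in the $[t_2,T]$ parts of $g_{t_1}$ and $\ssconstant_{t_1}$. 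The same degree-grouping then yields the $[t_2,T]$ contributions to $F_{t_1},g_{t_1},\ssconstant_{t_1}$, completing the identification.

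The conceptual core — the shared coupling on $[0,t_1]$ and the cancellation of the benchmark — is short, so I expect the main burden to be purely organizational. The difficulty is the bookkeeping of the expansion: applying Itô linearity to split $\itointeg{t_1}{t}{e^{\CLmat{}(t-s)}\BMcoeff{}}{\BM{s}}$ into $Z_t$ plus $\itointeg{t_1}{t}{e^{\CLmat{\star}(t-s)}\BMcoeff{}}{\BM{s}}$, re-basing the $[t_2,T]$ noise integrals to $t_1$ via the identity above, and collapsing symmetric quadratic forms so that each term lands in the correct degree bucket and the coefficients match the stated $F_{t_1},g_{t_1},\ssconstant_{t_1}$ exactly. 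No single step is deep; the challenge is ensuring nothing is misplaced across the lengthy expansion.
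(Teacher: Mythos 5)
Your proposal is correct and follows essentially the same route as the paper: couple the two policies on a common Brownian path so the trajectories agree on $[0,t_1]$, cancel the benchmark cost in the regret difference, represent the displacement as $\erterm{t}\state{t_1}+Z_t$ on $[t_1,t_2)$ and as $e^{\CLmat{\star}(t-t_2)}\left(\erterm{t_2}\state{t_1}+Z_{t_2}\right)$ on $[t_2,T]$, and then sort the expanded cost difference by degree in $\state{t_1}$ to read off $F_{t_1}$, $g_{t_1}$, and $\ssconstant_{t_1}$. The re-basing identity you state for the noise integrals is exactly what the paper uses implicitly when it writes $\policy_1$'s trajectory on $[t_2,T]$ with noise integrals based at $t_1$.
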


\begin{proof}
	Letting $\state{t}^{\policy_i}$ be the state of the system under the policy $\policy_i$, clearly, for $t \leq t_1$, it holds that $\state{t}^{\policy_1}=\state{t}^{\policy_2}$. So, we use $\state{t_1}$ for both states at time $t_1$. Moreover, for $t_1 \leq t \leq t_2$, we have
	\begin{eqnarray*}
		\state{t}^{\policy_1} &=& e^{\CLmat{\star}(t-t_1)} \state{t_1} + \itointeg{t_1}{t}{e^{\CLmat{\star}(t-s)} \BMcoeff{}}{\BM{s}}, \\
		\state{t}^{\policy_2} &=& e^{\CLmat{}(t-t_1)} \state{t_1} + \itointeg{t_1}{t}{e^{\CLmat{}(t-s)} \BMcoeff{}}{\BM{s}},
	\end{eqnarray*}
	where $\statetwo{t} = \state{t}^{\policy_2} - \state{t}^{\policy_1}$. So, by denoting the instantaneous cost of policy $\policy_i$ at time $t$ by $\instantcost{\policy_i}{t}$, we get 
	$\statetwo{t} = \erterm{t} \state{t_1} + Z_t$,
	as well as
	\begin{eqnarray}
		\itointeg{t_1}{t_2}{\left( \instantcost{\policy_2}{t} - \instantcost{\policy_1}{t} \right)}{t}
		&=& \itointeg{t_1}{t_2}{ \left[ \left( \state{t}^{\policy_1}+\statetwo{t} \right)^{\top} M \left( \state{t}^{\policy_1}+\statetwo{t} \right) - {\state{t}^{\policy_1}}^{\top} M_\star \state{t}^{\policy_1} \right] }{t} \notag \\
		&=& \itointeg{t_1}{t_2}{ \left[ {\state{t}^{\policy_1}}^{\top} S \state{t}^{\policy_1} + 2 \statetwo{t}^{\top} M \state{t}^{\policy_1} + \statetwo{t}^{\top} M \statetwo{t} \right] }{t}. \label{LocalRegretLemProofEq1} ~~~~~~~~
		\end{eqnarray}
	
	On the other hand, for $t \geq t_2$, we have
	\begin{eqnarray}
		\itointeg{t_2}{T}{\left( \instantcost{\policy_2}{t} - \instantcost{\policy_1}{t} \right)}{t} 
		&=& \itointeg{t_2}{T}{ \left[ \left( \state{t}^{\policy_1}+\statetwo{t} \right)^{\top} M_\star \left( \state{t}^{\policy_1}+\statetwo{t} \right) - {\state{t}^{\policy_1}}^{\top} M_\star \state{t}^{\policy_1} \right] }{t} \notag \\
		&=& \itointeg{t_2}{T}{ \left[ 2 \statetwo{t}^{\top} M_\star \state{t}^{\policy_1} + \statetwo{t}^{\top} M_\star \statetwo{t} \right] }{t}. \label{LocalRegretLemProofEq2}
	\end{eqnarray}
	and  
	\begin{eqnarray}
	\state{t}^{\policy_i} &=& e^{\CLmat{\star}(t-t_2)} \state{t_2}^{\policy_i} + \itointeg{t_2}{t}{e^{\CLmat{\star}(t-s)} \BMcoeff{}}{\BM{s}}, \notag \\
	\statetwo{t} &=& e^{\CLmat{\star}(t-t_2)} \left[ \state{t_2}^{\policy_2} - \state{t_2}^{\policy_1} \right] = e^{\CLmat{\star}(t-t_2)} \left[ \erterm{t_2} \state{t_1} + Z_{t_2} \right]. \notag
	\end{eqnarray}
	Thus, putting \eqref{LocalRegretLemProofEq1} and \eqref{LocalRegretLemProofEq2} together, we obtain the desired results.
\end{proof}

\subsection{\bf Upper-bounding comparative ratios of stochastic integrals} \label{appC}
\begin{lemm} \label{SelfNormalizedLem}
	Suppose that $\statetwo{t} \in \R^{m}$ is a vector-valued stochastic process such that $\statetwo{t}$ is $\filter{t}$-measurable for the natural filtration $\filter{t} = \sigfield{ \left\{ \BM{s} \right\}_{0 \leq s \leq t}}$. Then, letting $\empiricalcovmat{t}=\itointeg{0}{t}{ \statetwo{s} \statetwo{s}^{\top} }{s}$, we have
	$$\Mnorm{ \left(I+ \empiricalcovmat{t} \right)^{-1/2} \itointeg{0}{t}{ \statetwo{s} }{\BM{s}^{\top}} }{2}^2 = \order{ m \noisedim \log \eigmax{\empiricalcovmat{t}} }.$$
\end{lemm}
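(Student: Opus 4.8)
The plan is to reduce the matrix-valued claim to $\noisedim$ scalar self-normalized bounds, one for each coordinate of the driving Brownian motion, and to prove each by the continuous-time method of mixtures. Writing the $j$-th column of the stochastic integral as $S_t^{(j)} = \itointeg{0}{t}{\statetwo{s}}{W^{(j)}_s} \in \R^{m}$, where $W^{(j)}$ is the $j$-th standard scalar coordinate of $\BM{}$, I would first dominate the operator norm by the Frobenius norm,
\[
\Mnorm{\left(I+\empiricalcovmat{t}\right)^{-1/2}\itointeg{0}{t}{\statetwo{s}}{\BM{s}^{\top}}}{2}^2 \;\le\; \sum_{j=1}^{\noisedim} \left(S_t^{(j)}\right)^{\top}\left(I+\empiricalcovmat{t}\right)^{-1} S_t^{(j)} ,
\]
since $\left(I+\empiricalcovmat{t}\right)^{-1/2}\itointeg{0}{t}{\statetwo{s}}{\BM{s}^{\top}}$ has $j$-th column $\left(I+\empiricalcovmat{t}\right)^{-1/2} S_t^{(j)}$. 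It then suffices to bound each quadratic form uniformly in $t$.

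For fixed $\lambda \in \R^{m}$ and each $j$, the process $N_t = \lambda^{\top} S_t^{(j)} = \itointeg{0}{t}{\lambda^{\top}\statetwo{s}}{W^{(j)}_s}$ is a continuous local martingale in $\filter{t}$ whose quadratic variation is $\itointeg{0}{t}{\left(\lambda^{\top}\statetwo{s}\right)^2}{s} = \lambda^{\top}\empiricalcovmat{t}\lambda$, using that $W^{(j)}$ is standard and $\statetwo{s}$ is adapted with continuous (hence locally bounded) paths. Its Doléans–Dade exponential $\exp\!\left(\lambda^{\top}S_t^{(j)} - \tfrac{1}{2}\lambda^{\top}\empiricalcovmat{t}\lambda\right)$ is therefore a nonnegative local martingale, hence a supermartingale. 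Averaging this family against a Gaussian prior $\lambda \sim \normaldist{0}{I_m}$ and evaluating the Gaussian integral (completing the square in $\lambda$) yields the mixture supermartingale
\[
M_t^{(j)} = \det\!\left(I+\empiricalcovmat{t}\right)^{-1/2} \exp\!\left(\tfrac{1}{2}\left(S_t^{(j)}\right)^{\top}\left(I+\empiricalcovmat{t}\right)^{-1}S_t^{(j)}\right),
\]
which satisfies $M_0^{(j)}=1$ because $\empiricalcovmat{0}=0$ and $S_0^{(j)}=0$, and which remains a nonnegative supermartingale by Tonelli. Applying Ville's maximal inequality, $\PP{\sup_{t\ge 0} M_t^{(j)} \ge 1/\delta} \le \delta$, and taking logarithms on the complementary event gives, for all $t\ge 0$ simultaneously,
\[
\left(S_t^{(j)}\right)^{\top}\left(I+\empiricalcovmat{t}\right)^{-1}S_t^{(j)} \le \log\det\!\left(I+\empiricalcovmat{t}\right) + 2\log(1/\delta).
\]

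To finish, I would take a union bound over $j=1,\dots,\noisedim$ (replacing $\delta$ by $\delta/\noisedim$), sum the $\noisedim$ inequalities, and use the elementary estimate $\log\det\!\left(I+\empiricalcovmat{t}\right) \le m\log\!\left(1+\eigmax{\empiricalcovmat{t}}\right)$ together with the Frobenius bound above. This yields, with probability at least $1-\delta$ and for all $t$,
\[
\Mnorm{\left(I+\empiricalcovmat{t}\right)^{-1/2}\itointeg{0}{t}{\statetwo{s}}{\BM{s}^{\top}}}{2}^2 \le \noisedim\!\left[\, m\log\!\left(1+\eigmax{\empiricalcovmat{t}}\right) + 2\log(\noisedim/\delta)\right],
\]
whose leading term as $\eigmax{\empiricalcovmat{t}}\to\infty$ is $m\,\noisedim\,\log\eigmax{\empiricalcovmat{t}}$, the claimed order; the confidence term is absorbed into $\order{\cdot}$ and is sent to zero along the episode times by a Borel–Cantelli argument exactly as in the proof of Theorem~\ref{BoundsThm}. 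The main delicacy is entirely continuous-time: confirming that the stochastic exponential is a genuine supermartingale rather than merely a local martingale (handled by nonnegativity via Fatou, so that no Novikov condition is required) and that Gaussian averaging commutes with conditioning (handled by Tonelli, since all integrands are nonnegative). A secondary subtlety is that the $\noisedim$ columns are driven by independent scalar Brownian motions yet share the common Gram matrix $\empiricalcovmat{t}$ and filtration $\filter{t}$; this causes no difficulty, because each $N_t$ above is a local martingale in the full filtration $\filter{t}$ with the stated bracket, so the per-coordinate supermartingales and the concluding union bound are both legitimate.
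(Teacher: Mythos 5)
Your proof is correct, but it takes a genuinely different route from the paper's. The paper discretizes time on an $\epsilon$-grid, forms the regularized Gram matrices $M_k=\epsilon^{-1}I+\sum_{i\le k}\statetwo{i\epsilon}\statetwo{i\epsilon}^{\top}$, and runs a discrete elliptical-potential argument: the Sherman--Morrison identity yields $\statetwo{k\epsilon}^{\top}M_k^{-1}\statetwo{k\epsilon}\le \log\det M_k-\log\det M_{k-1}$, conditioning on $\filter{k\epsilon}$ gives a telescoping bound on $\E{F_n^{\top}M_n^{-1}F_n}$ for the discretized stochastic integral $F_n$, the trace-versus-largest-eigenvalue inequality injects the factor $\noisedim$, and the almost-sure statement is extracted via Doob's martingale convergence theorem before letting $\epsilon\to 0$. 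You instead work directly in continuous time with the method of mixtures: per-column Dol\'eans--Dade exponential supermartingales, a Gaussian mixing prior, and Ville's maximal inequality, followed by a union bound over the $\noisedim$ columns. Both arguments hinge on the same determinant potential ($\log\det\left(I+\empiricalcovmat{t}\right)\le m\log\left(1+\eigmax{\empiricalcovmat{t}}\right)$ in your version, the telescoped $\log\det M_k$ increments in the paper's), but yours buys an explicit, anytime-valid, non-asymptotic high-probability bound with clean constants, and it avoids both the discretization and the two delicate passages the paper must negotiate (expectation-to-almost-sure via Doob, and the $\epsilon\to0$ limit of the bounds); the paper's argument, in exchange, stays entirely elementary---linear algebra plus conditional expectations---and never invokes stochastic exponentials or mixture supermartingales. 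Your two flagged technicalities (supermartingale property via nonnegativity and Fatou, so no Novikov condition is needed, and Tonelli to justify mixing under conditioning) are exactly the right ones, and your conversion of the fixed-$\delta$ anytime bound into the almost-sure $\order{m \noisedim \log \eigmax{\empiricalcovmat{t}}}$ statement via Borel--Cantelli is legitimate.
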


\begin{proof}
	First, fix $t>0$, and for an arbitrary $\epsilon>0$, let $n=\lfloor t/\epsilon \rfloor$. So, for $k=0,1,\cdots, n$, consider the sequence of matrices $M_k = \epsilon^{-1} I + \sum\limits_{i=0}^k \statetwo{i\epsilon} \statetwo{i\epsilon}^{\top}$.
	Then, for $k=1, \cdots,n$, consider the sequence of scalars $\ssconstant_k$ defined according to $\ssconstant_k = \statetwo{k\epsilon}^{\top} M_{k-1}^{-1} \statetwo{k\epsilon}$.   
	Using the formula for determinants of the products of matrices, we have
	$$\det M_k = \det \left[ M_{k-1} \left( I + M_{k-1}^{-1}\statetwo{k\epsilon} \statetwo{k\epsilon}^{\top} \right) \right] = \det \left(M_{k-1}\right) \det \left( I + M_{k-1}^{-1}\statetwo{k\epsilon} \statetwo{k\epsilon}^{\top} \right).$$
	 
	Since all eigenvalues of $I + M_{k-1}^{-1}\statetwo{k\epsilon} \statetwo{k\epsilon}^{\top}$ are unit, except one of them which is $1+\ssconstant_k$, we have $\left({1+\ssconstant_k}\right) {\det M_{k-1}}= {\det M_{k}}$.
	On the other hand, matrix inversion formula gives
	\begin{equation*}
	M_k^{-1} = M_{k-1}^{-1} - \frac{1}{1+\statetwo{k\epsilon}^{\top} M_{k-1}^{-1} \statetwo{k\epsilon}} M_{k-1}^{-1} \statetwo{k\epsilon} \statetwo{k\epsilon}^{\top} M_{k-1}^{-1},
	\end{equation*}
	which leads to
	$$\statetwo{k\epsilon}^{\top} M_{k}^{-1} \statetwo{k\epsilon} = \statetwo{k\epsilon}^{\top} \left( M_{k-1} + \statetwo{k\epsilon} \statetwo{k\epsilon}^{\top} \right)^{-1} \statetwo{k\epsilon} = \ssconstant_k - \frac{\ssconstant_k^2}{1+\ssconstant_k} = 1- \frac{1}{1+\ssconstant_k} = 1-\frac{\det M_{k-1}}{\det M_{k}}.$$
	
	Further, by using the inequality $1-\ssconstant \leq -\log \ssconstant$ for $\ssconstant>0$, the latter equality gives
	\begin{equation} \label{SelfNormLemProofEq1}
	\statetwo{k\epsilon}^{\top} M_{k}^{-1} \statetwo{k\epsilon} \leq \log {\det M_{k}} - \log {\det M_{k-1}}.
	\end{equation}
	
	Now, let $F_k=\sum\limits_{i=0}^{k} \statetwo{i\epsilon} \left( \BM{(i+1)\epsilon} - \BM{i\epsilon} \right)^{\top}$. Using the facts that $\statetwo{k\epsilon},F_{k-1}$, and $M_k$ all are $\filter{k\epsilon}$-measurable, the Brownian motion $\BM{t}$ has independent increments, and its covariance matrix is a multiple of identity, properties of conditional expectations give
	\begin{eqnarray*}
		&& \E{F_{k}^{\top} M_k^{-1} F_{k} } 
		= \E{\E{F_{k}^{\top} M_k^{-1} F_{k} \Big| \filter{k\epsilon} }} \notag \\
		&=& \E{\E{ \left(F_{k-1} + \statetwo{k\epsilon} \left( \BM{(k+1)\epsilon} - \BM{k\epsilon} \right)^{\top} \right)^{\top} M_k^{-1} \left(F_{k-1} + \statetwo{k\epsilon} \left( \BM{(k+1)\epsilon} - \BM{k\epsilon} \right)^{\top} \right) \Big| \filter{k\epsilon} }} \notag \\
		&=& \E{ F_{k-1}^{\top} M_k^{-1} F_{k-1} + \E{ \left( \BM{(k+1)\epsilon} - \BM{k\epsilon} \right) \statetwo{k\epsilon}^{\top} M_k^{-1} \statetwo{k\epsilon} \left( \BM{(k+1)\epsilon} - \BM{k\epsilon} \right)^{\top} \Big| \filter{k\epsilon} } }  \notag \\
		&=& \E{ F_{k-1}^{\top} M_k^{-1} F_{k-1} + \left(\statetwo{k\epsilon}^{\top} M_k^{-1} \statetwo{k\epsilon}\right) \epsilon I  }. ~~~~~\label{AuxTableEq3}
	\end{eqnarray*}
	
	So, using \eqref{SelfNormLemProofEq1} together with the fact that (as positive semidefinite matrices) the order $M_{k-1} \leq M_k$ holds, we get the telescopic relationships
	$$\eigmax{\E{F_{k}^{\top} M_k^{-1} F_{k}}} - \eigmax{\E{F_{k-1}^{\top} M_{k-1}^{-1} F_{k-1}}} \leq \epsilon\left(\log \frac{\det \left(\epsilon M_{k}\right)}{\det \left(\epsilon M_{k-1}\right)}\right).$$
	
	Since $F_{k}^{\top} \left(M_k\right)^{-1} F_{k}$ is positive semidefinite, its trace is larger than its largest eigenvalue. Hence, adding up for $k=0,1,\cdots, n$, by interchanging trace and expectation, we obtain 
	$$\E{\eigmax{F_{n}^{\top} \left(M_n\right)^{-1} F_{n}}} \leq \E{\tr{F_{n}^{\top} \left( M_n\right)^{-1} F_{n}}} \leq \noisedim \eigmax{\E{F_{n}^{\top} \left( M_n\right)^{-1} F_{n}}},$$
	
	which, by $\epsilon M_0 \geq I$, leads to 
	$$\E{\eigmax{F_{n}^{\top} \left(\epsilon M_n\right)^{-1} F_{n}}} \leq m \noisedim \log \eigmax{\epsilon M_{n}}.$$
	
	Thus, according to Doob's Martingale Convergence Theorem~\citep{oksendal2013stochastic,baldi2017stochastic}, we have
	\begin{equation*}
	\Mnorm{\left(\epsilon M_n\right)^{-1/2} F_{n}}{2}^2 = \order{ m \noisedim \log \eigmax{\epsilon M_{n}}}.
	\end{equation*}
	Finally, letting $\epsilon \to 0$, we obtain the desired result, because $\epsilon M_n, F_n$ are $\epsilon$-approximations of the corresponding integrals. 
\end{proof}

\subsection{\bf Lipschitz continuity of optimal feedback} \label{appD}
\begin{lemm} \label{LipschitzLemma}
	Using the Jordan decomposition $\CLmat{\star}=\Amat{\star}+\Bmat{\star}\Optgain{\truth}=P_\star^{-1} \Lambda_\star P_\star$, define $\mult{\star}=\mult{\CLmat{\star}}$, similar to Definition~\ref{MultDef}, and suppose that $\Learnerror{}{\estpara{}} \leq \kappa_\star$, for 
	\begin{equation*}
		\kappa_\star = \frac{1}{ 1 \vee \Mnorm{\Optgain{\truth}}{2} } \left( \frac{ \left( -\mosteig{\CLmat{\star}} \right) \wedge \left( -\mosteig{\CLmat{\star}}\right)^{\mult{\star}} }{ \mult{\star}^{1/2} \Mnorm{P_\star^{-1}}{2} \Mnorm{P_\star}{2} } \wedge \left[ 4 \itointeg{0}{\infty}{ \Mnorm{e^{\CLmat{\star}t}}{2}^2 }{t} \right]^{-1} \right). \label{LipschitzLemmaEq1}
	\end{equation*}
	Then, letting
	\begin{equation*}
		\ssconstant_\star = \frac{2\Mnorm{\RiccSol{\truth}}{2}}{\eigmin{\Rmat}} \left[1 + \frac{4 \Mnorm{\Bmat{\star}}{2} }{\eigmin{\Qmat}} \Mnorm{\RiccSol{\truth}}{2} \left( 1 \vee \frac{ 2 \left( \Mnorm{\Bmat{\star}}{2} + \kappa_\star \right) \Mnorm{\RiccSol{\truth}}{2} }{\eigmin{\Rmat}} \right) \right], \label{LipschitzLemmaEq2}
	\end{equation*}
	 we have 
	\begin{equation*}
	\Mnorm{\Optgain{\estpara{}} - \Optgain{\truth} }{} \leq \ssconstant_\star \Learnerror{}{\estpara{}}.
	\end{equation*}
	In general, without the condition $\Learnerror{}{\estpara{}} \leq \kappa_\star$, the constant $\ssconstant_\star$ is replaced with 
	\begin{eqnarray*} 
	\ssconstant &=& \frac{\Mnorm{\RiccSol{\estpara{}}}{2}}{\eigmin{\Rmat}} + \frac{2 \Mnorm{\Bmat{\star}}{2} \Mnorm{ \RiccSol{\estpara{0}} }{2}^2 }{\eigmin{\Qmat}\eigmin{\Rmat}} \left( 1 \vee \frac{ \left( \Mnorm{\Bmat{\star}}{2} + \Learnerror{}{\estpara{}} \right) \Mnorm{\RiccSol{\estpara{0}}}{2} }{\eigmin{\Rmat}} \right) , \label{LipschitzLemmaEq3}
	\end{eqnarray*} 
	for some convex combination $\left[ \estpara{0} \right] = \eta \left[ \estpara{} \right] + (1-\eta) \left[\truth\right]$, and $0 \leq \eta \leq 1$.
	
\end{lemm}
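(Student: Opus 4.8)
The plan is to control $\Mnorm{\Optgain{\estpara{}}-\Optgain{\truth}}{}$ by first splitting it into a part caused by the perturbation of the input matrix and a part caused by the perturbation of the Riccati solution, and then to estimate the latter by \emph{differentiating} the Riccati map along the straight segment joining $[\truth]$ to $[\estpara{}]$. Recalling $\Optgain{\estpara{}}=-\Rmat^{-1}\estB{}^{\top}\RiccSol{\estpara{}}$, I would write
\begin{equation*}
\Optgain{\estpara{}}-\Optgain{\truth}=-\Rmat^{-1}\left(\estB{}-\Bmat{\star}\right)^{\top}\RiccSol{\estpara{}}-\Rmat^{-1}\Bmat{\star}^{\top}\left(\RiccSol{\estpara{}}-\RiccSol{\truth}\right),
\end{equation*}
so that $\Mnorm{\Optgain{\estpara{}}-\Optgain{\truth}}{}\le \eigmin{\Rmat}^{-1}\Mnorm{\estB{}-\Bmat{\star}}{}\Mnorm{\RiccSol{\estpara{}}}{}+\eigmin{\Rmat}^{-1}\Mnorm{\Bmat{\star}}{}\Mnorm{\RiccSol{\estpara{}}-\RiccSol{\truth}}{}$. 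Since $\Mnorm{\estB{}-\Bmat{\star}}{}\le\Learnerror{}{\estpara{}}$, the first summand already contributes the leading $\eigmin{\Rmat}^{-1}\Mnorm{\RiccSol{\estpara{}}}{}$ term of the general constant $\ssconstant$ (and, after the stabilization bound below, the prefactor $2\Mnorm{\RiccSol{\truth}}{}/\eigmin{\Rmat}$ and the ``$1$'' in $\ssconstant_\star$). It remains to bound the Riccati difference.

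For the Riccati difference I would parametrize the segment $[\Amat{\eta},\Bmat{\eta}]=\eta[\estpara{}]+(1-\eta)[\truth]$, $\eta\in[0,1]$, and study $\RiccSol{\Amat{\eta},\Bmat{\eta}}$ along it. Differentiating the algebraic Riccati equation $\MatOpAve{\Amat{\eta},\Bmat{\eta}}{\RiccSol{\Amat{\eta},\Bmat{\eta}}}=0$ in $\eta$, collecting the terms that multiply the derivative $\dot P_\eta$, and using $\Optgain{\Amat{\eta},\Bmat{\eta}}=-\Rmat^{-1}\Bmat{\eta}^{\top}\RiccSol{\Amat{\eta},\Bmat{\eta}}$, one finds that $\dot P_\eta$ obeys the \emph{Lyapunov} equation
\begin{equation*}
\CLmat{\eta}^{\top}\dot P_\eta+\dot P_\eta\CLmat{\eta}=-\left(\dot A^{\top}P_\eta+P_\eta\dot A-P_\eta\dot G_\eta P_\eta\right),
\end{equation*}
where $\CLmat{\eta}=\Amat{\eta}+\Bmat{\eta}\Optgain{\Amat{\eta},\Bmat{\eta}}$ is the closed-loop matrix at $\eta$, $\dot A=\estA{}-\Amat{\star}$, $\dot B=\estB{}-\Bmat{\star}$, and $\dot G_\eta=\dot B\Rmat^{-1}\Bmat{\eta}^{\top}+\Bmat{\eta}\Rmat^{-1}\dot B^{\top}$. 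The crucial point is that, by Theorem~\ref{StabThm}, the smallness hypothesis keeps every $[\Amat{\eta},\Bmat{\eta}]$ inside the stabilizing neighborhood of $[\truth]$, so $\mosteig{\CLmat{\eta}}<0$ holds uniformly along the segment and the equation is solvable with $\dot P_\eta=\itointeg{0}{\infty}{e^{\CLmat{\eta}^{\top}t}\left(\dot A^{\top}P_\eta+P_\eta\dot A-P_\eta\dot G_\eta P_\eta\right)e^{\CLmat{\eta}t}}{t}$.

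I would then bound $\Mnorm{\dot P_\eta}{}$ by pinching the symmetric integrand between $\pm$ its operator norm and using the Lyapunov estimate $\itointeg{0}{\infty}{e^{\CLmat{\eta}^{\top}t}e^{\CLmat{\eta}t}}{t}\preceq\eigmin{\Qmat}^{-1}\RiccSol{\Amat{\eta},\Bmat{\eta}}$, which follows from \eqref{LyapInteg} applied at the point $\eta$ together with $\Qmat+\Optgain{\Amat{\eta},\Bmat{\eta}}^{\top}\Rmat\Optgain{\Amat{\eta},\Bmat{\eta}}\succeq\eigmin{\Qmat}I$. Since $\Mnorm{\dot G_\eta}{}\le 2\eigmin{\Rmat}^{-1}\Mnorm{\Bmat{\eta}}{}\Mnorm{\dot B}{}$ and $\Mnorm{\dot A}{},\Mnorm{\dot B}{}\le\Learnerror{}{\estpara{}}$, this yields $\Mnorm{\dot P_\eta}{}\le \eigmin{\Qmat}^{-1}\Mnorm{\RiccSol{\Amat{\eta},\Bmat{\eta}}}{}^{2}\bigl(2+2\eigmin{\Rmat}^{-1}(\Mnorm{\Bmat{\star}}{}+\Learnerror{}{\estpara{}})\Mnorm{\RiccSol{\Amat{\eta},\Bmat{\eta}}}{}\bigr)\Learnerror{}{\estpara{}}$, and integrating $\Mnorm{\RiccSol{\estpara{}}-\RiccSol{\truth}}{}\le\itointeg{0}{1}{\Mnorm{\dot P_\eta}{}}{\eta}$ (equivalently, the mean value theorem at a point with $[\estpara{0}]=\eta[\estpara{}]+(1-\eta)[\truth]$) produces precisely the general constant $\ssconstant$. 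The specialized $\ssconstant_\star$ then follows because $\Learnerror{}{\estpara{}}\le\kappa_\star$ forces the same inequality at every intermediate $[\Amat{\eta},\Bmat{\eta}]$, so (with $\kappa_\star=\epsilon_0$) the stabilization bound $\Mnorm{\RiccSol{\Amat{\eta},\Bmat{\eta}}}{}\le 2\Mnorm{\RiccSol{\truth}}{}$ from \eqref{StabThmProofEq2} may be substituted throughout.

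The main obstacle is establishing and then exploiting the \emph{uniform} stability of $\CLmat{\eta}$ over the whole segment: only then do the Lyapunov representation of $\dot P_\eta$ and the single integral estimate $\itointeg{0}{\infty}{e^{\CLmat{\eta}^{\top}t}e^{\CLmat{\eta}t}}{t}\preceq\eigmin{\Qmat}^{-1}\RiccSol{\Amat{\eta},\Bmat{\eta}}$ make sense with a bound itself controlled by $\Mnorm{\RiccSol{\truth}}{}$. Differentiability of $\eta\mapsto\RiccSol{\Amat{\eta},\Bmat{\eta}}$ on this neighborhood is the technical heart; I expect to obtain it from the implicit function theorem applied to the Riccati map, whose linearization is the Lyapunov operator $X\mapsto\CLmat{\eta}^{\top}X+X\CLmat{\eta}$, invertible precisely because $\CLmat{\eta}$ is stable. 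The remaining steps are norm bookkeeping to match the stated forms of $\ssconstant$ and $\ssconstant_\star$.
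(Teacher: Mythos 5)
Your proposal is correct and follows essentially the same route as the paper's own proof: the same splitting of $\Optgain{\estpara{}}-\Optgain{\truth}$ into a $\Bmat{}$-perturbation term and a Riccati-difference term, the same parametrization along the segment joining $[\truth]$ to $[\estpara{}]$, the same Lyapunov representation of the derivative of the Riccati solution (your $\dot A^{\top}P_\eta+P_\eta\dot A-P_\eta\dot G_\eta P_\eta$ is algebraically identical to the paper's $\RiccSol{\estpara{0}}M+M^{\top}\RiccSol{\estpara{0}}$ with $M=\erterm{A}+\erterm{B}\Optgain{\estpara{0}}$), the same estimate $\itointeg{0}{\infty}{e^{\CLmat{\eta}^{\top}t}e^{\CLmat{\eta}t}}{t}\preceq\eigmin{\Qmat}^{-1}\RiccSol{\Amat{\eta},\Bmat{\eta}}$, and the same invocation of \eqref{StabThmProofEq2} to replace $\Mnorm{\RiccSol{\Amat{\eta},\Bmat{\eta}}}{2}$ by $2\Mnorm{\RiccSol{\truth}}{2}$ uniformly along the segment when $\Learnerror{}{\estpara{}}\leq\kappa_\star$. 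The only difference is cosmetic: you justify differentiability of $\eta\mapsto\RiccSol{\Amat{\eta},\Bmat{\eta}}$ via the implicit function theorem, whereas the paper derives the same derivative by an explicit first-order perturbation expansion of the algebraic Riccati equation and a limit argument.
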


\begin{proof}
	Fix the matrices $\estpara{}$, and consider the matrix-valued curve
	$$\curve=\left\{ (1-\eta) \left[ \truth \right] + \eta \left[ \estpara{} \right] \right\}_{0 \leq \eta \leq 1}.$$
	
	For an arbitrary $\estpara{0} \in \curve$, we find the derivative of the matrix $\RiccSol{\estpara{0}}$ at $\estpara{0}$, assuming that the matrices $\estpara{0}$ vary along $\curve$. For this purpose, letting $\erterm{A}=\estA{}-\Amat{\star}$, $\erterm{B}=\estB{}-\Bmat{\star}$, we first calculate $\RiccSol{\estpara{1}}$ for $\estA{1}=\estA{0}+ \eta \erterm{A}, \estB{1}=\estB{0}+ \eta \erterm{B}$, and then let $\eta \to 0$. First, letting $P=\RiccSol{\estpara{1}}-\RiccSol{\estpara{0}}$, we get
	\begin{eqnarray*}
		&& \RiccSol{\estpara{0}} \estB{1} \Rmat^{-1} \estB{1}^{\top} \RiccSol{\estpara{0}} \\
		&=& \eta \RiccSol{\estpara{0}} \erterm{B} \Rmat^{-1} \estB{1}^{\top} \RiccSol{\estpara{0}} 
		+ \RiccSol{\estpara{0}} \estB{0} \Rmat^{-1} \estB{1}^{\top} \RiccSol{\estpara{0}} \notag \\
		&=& \eta^2 \RiccSol{\estpara{0}} \erterm{B} \Rmat^{-1} \erterm{B}^{\top} \RiccSol{\estpara{0}} 
		+ \eta \RiccSol{\estpara{0}} \erterm{B} \Rmat^{-1} \estB{0}^{\top} \RiccSol{\estpara{0}} \\
		&+&
		\eta \RiccSol{\estpara{0}} \estB{0} \Rmat^{-1} \erterm{B}^{\top} \RiccSol{\estpara{0}} \notag \\
		&+&
		\RiccSol{\estpara{0}} \estB{0} \Rmat^{-1} \estB{0}^{\top} \RiccSol{\estpara{0}}. \label{LipschitzLemmaEq4}
	\end{eqnarray*} 
	
	The above expression, because of
	 \begin{eqnarray*}
	 	&& \RiccSol{\estpara{1}} \estB{1} \Rmat^{-1} \estB{1}^{\top} \RiccSol{\estpara{1}} \\
	 	&=& \RiccSol{\estpara{1}} \estB{1} \Rmat^{-1} \estB{1}^{\top} P + \RiccSol{\estpara{1}} \estB{1} \Rmat^{-1} \estB{1}^{\top} \RiccSol{\estpara{0}} \notag \\
	 	&=& P \estB{1} \Rmat^{-1} \estB{1}^{\top} P + \RiccSol{\estpara{0}} \estB{1} \Rmat^{-1} \estB{1}^{\top} P \\
	 	&+& P \estB{1} \Rmat^{-1} \estB{1}^{\top} \RiccSol{\estpara{0}}	
	 	+ \RiccSol{\estpara{0}} \estB{1} \Rmat^{-1} \estB{1}^{\top} \RiccSol{\estpara{0}}, \label{LipschitzLemmaEq5}
	 \end{eqnarray*}
	 
	 implies that the followings hold true:
	 \begin{eqnarray}
	 	&& \RiccSol{\estpara{1}} \estB{1} \Rmat^{-1} \estB{1}^{\top} \RiccSol{\estpara{1}} -  \RiccSol{\estpara{0}} \estB{0} \Rmat^{-1} \estB{0}^{\top} \RiccSol{\estpara{0}} \notag \\
	 	&=& P \estB{1} \Rmat^{-1} \estB{1}^{\top} P + \RiccSol{\estpara{0}} \estB{1} \Rmat^{-1} \estB{1}^{\top} P 
	 	+ P \estB{1} \Rmat^{-1} \estB{1}^{\top} \RiccSol{\estpara{0}} \notag\\
	 	&+&	\eta^2 \RiccSol{\estpara{0}} \erterm{B} \Rmat^{-1} \erterm{B}^{\top} \RiccSol{\estpara{0}} 
	 	+ \eta \RiccSol{\estpara{0}} \erterm{B} \Rmat^{-1} \estB{0}^{\top} \RiccSol{\estpara{0}} \notag \\
	 	&+& 
	 	\eta \RiccSol{\estpara{0}} \estB{0} \Rmat^{-1} \erterm{B}^{\top} \RiccSol{\estpara{0}} . \label{LipschitzLemmaEq6}
	 \end{eqnarray}
	 
	 By plugging~\eqref{LipschitzLemmaEq6} and 
	 \begin{eqnarray*}
	 	&& \estA{1}^{\top} \RiccSol{\estpara{1}} + \RiccSol{\estpara{1}} \estA{1} 
	 	= \estA{1}^{\top} \RiccSol{\estpara{0}} + \estA{1}^{\top} P + \RiccSol{\estpara{0}} \estA{1} + P \estA{1} \notag \\
	 	&=& \estA{0}^{\top} \RiccSol{\estpara{0}} + \eta \erterm{A}^{\top} \RiccSol{\estpara{0}} + \estA{1}^{\top} P 
	 	+ \RiccSol{\estpara{0}} \estA{0} + \eta \RiccSol{\estpara{0}} \erterm{A} + P \estA{1}, \label{LipschitzLemmaEq66}
	 \end{eqnarray*}
	 
	 in $\MatOpAve{\estpara{i}}{\RiccSol{\estpara{i}}}=0$ for $i=0,1$, we obtain
		\begin{eqnarray*}
			0 &=& \left[ \estA{1}^{\top} - \RiccSol{\estpara{0}} \estB{1} \Rmat^{-1} \estB{1}^{\top} \right] P + P \left[ \estA{1} - \estB{1} \Rmat^{-1} \estB{1}^{\top} \RiccSol{\estpara{0}} \right] - P \estB{1} \Rmat^{-1} \estB{1}^{\top} P \\
			&+& \eta \erterm{A}^{\top} \RiccSol{\estpara{0}} + \eta \RiccSol{\estpara{0}} \erterm{A}  - \eta^2 \RiccSol{\estpara{0}} \erterm{B} \Rmat^{-1} \erterm{B}^{\top} \RiccSol{\estpara{0}} \\
			&-& \eta \RiccSol{\estpara{0}} \erterm{B} \Rmat^{-1} \estB{0}^{\top} \RiccSol{\estpara{0}} - 
			\eta \RiccSol{\estpara{0}} \estB{0} \Rmat^{-1} \erterm{B}^{\top} \RiccSol{\estpara{0}},
		\end{eqnarray*}
		or equivalently,
	\begin{equation} \label{LipschitzLemProofEq1}
	0 = \auxA^{\top} P + P \auxA - P \estB{1} \Rmat^{-1} \estB{1}^{\top} P + \auxQ,
	\end{equation}	
	for $\auxA = \estA{1} - \estB{1} \Rmat^{-1} \estB{1}^{\top} \RiccSol{\estpara{0}}$, and 
	\begin{eqnarray*}
		\auxQ &=& \eta \erterm{A}^{\top} \RiccSol{\estpara{0}} + \eta \RiccSol{\estpara{0}} \erterm{A}  -	\eta^2 \RiccSol{\estpara{0}} \erterm{B} \Rmat^{-1} \erterm{B}^{\top} \RiccSol{\estpara{0}} \\
		&=& \eta \RiccSol{\estpara{0}} \Big[ \erterm{A} + \erterm{B} \Optgain{\estpara{0}} \Big] +
		\eta \Big[ \Optgain{\estpara{0}}^{\top} \erterm{B}^{\top} + \erterm{A}^{\top} \Big] \RiccSol{\estpara{0}} \\
		&-& \eta^2 \RiccSol{\estpara{0}} \erterm{B} \Rmat^{-1} \erterm{B}^{\top} \RiccSol{\estpara{0}}. \label{LipschitzLemmaEq7}
	\end{eqnarray*}
	
	Suppose that $\eta$ is sufficiently small so that $\mosteig{\auxA}<0$. Note that it is possible thanks to stabilizability of $\estpara{0}$, Theorem~\ref{OptimalityProof}, and $\lim\limits_{\eta \to 0} \auxA=\estA{0}+\estB{0}\Optgain{\estpara{0}}=\estD{0}$. So, since $P\estB{1}\Rmat^{-1}\estB{1}^{\top}P$ is a positive semidefinite matrix,  \eqref{LipschitzLemProofEq1} implies that
	\begin{eqnarray*}
		P &=& \itointeg{0}{\infty}{ e^{\auxA^{\top}t} \left( -P\estB{1}\Rmat^{-1}\estB{1}^{\top}P+\auxQ \right) e^{\auxA t} }{t} \\
		&\leq& \itointeg{0}{\infty}{ e^{\auxA^{\top}t} \auxQ e^{\auxA t} }{t} \leq \left( \Mnorm{\auxQ}{2} \itointeg{0}{\infty}{ \Mnorm{e^{\auxA t}}{2}^2 }{t} \right) I_{\statedim}.
	\end{eqnarray*}
	This, because of $\lim\limits_{\eta \to 0} \auxQ=0$, leads to $\lim\limits_{\eta \to 0} P=0$. Thus, letting $M=\erterm{A} + \erterm{B} \Optgain{\estpara{0}}$ and $\erterm{\RiccSol{\estpara{0}}}=\lim\limits_{\eta \to 0} \eta^{-1} P$, \eqref{LipschitzLemProofEq1} gives the following for $\erterm{\RiccSol{\estpara{0}}}$:
	\begin{equation} \label{LipschitzLemProofEq2}
	\itointeg{0}{\infty}{ e^{\estD{0}^{\top}t} \left( \RiccSol{\estpara{0}} M + M^{\top}  \RiccSol{\estpara{0}} \right) e^{\estD{0} t} }{t}.
	\end{equation}
	By
	$$\RiccSol{\estpara{}}-\RiccSol{\truth} = \itointeg{0}{1}{\erterm{(1-\eta) \left[ \truth \right] + \eta \left[ \estpara{} \right]}}{\eta},$$
	
	\eqref{StabThmProofEq3}, \eqref{LipschitzLemProofEq2}, and the Cauchy-Schwarz inequality provide
	\begin{eqnarray*}
		&& \Mnorm{\RiccSol{\estpara{}}-\RiccSol{\truth}}{2} \\
		&\leq& \Learnerror{}{\estpara{}} \sup\limits_{\left[\estpara{0}\right] \in \curve} 2 \Mnorm{\RiccSol{\estpara{0}}}{2} \left( 1 \vee \Mnorm{\Optgain{\estpara{0}}}{2} \right) \itointeg{0}{\infty}{ \Mnorm{e^{\estD{0}t}}{2}^2 }{t} \notag \\
		&\leq& \Learnerror{}{\estpara{}} \frac{2}{\eigmin{\Qmat}} \sup\limits_{\left[\estpara{0}\right] \in \curve} \Mnorm{\RiccSol{\estpara{0}}}{2}^2 \left( 1 \vee \Mnorm{\Optgain{\estpara{0}}}{2} \right) . \label{LipschitzLemmaEq77}
	\end{eqnarray*}
	
	Next, note that $\Learnerror{}{\estpara{}} \leq \kappa_\star$, together with~\eqref{StabRadiEq1} and \eqref{StabThmProofEq2}, implies that
	\begin{eqnarray*} 
	&& \Mnorm{\RiccSol{\estpara{}}-\RiccSol{\truth}}{2} \leq \Learnerror{}{\estpara{}} \frac{8\Mnorm{\RiccSol{\truth}}{2}^2 }{\eigmin{\Qmat}} \left( 1 \vee \frac{ 2\left( \Mnorm{\Bmat{\star}}{2} + \kappa_\star \right) \Mnorm{\RiccSol{\truth}}{2} }{\eigmin{\Rmat}} \right) . \label{LipschitzLemmaEq8} 
	\end{eqnarray*} 
	
	Therefore, using \eqref{StabThmProofEq2}, and putting the above inequality together with 
	\begin{eqnarray*} 
	&& \Mnorm{\Optgain{\estpara{}}- \Optgain{\truth}}{2} \\
	&=& \Mnorm{ \Rmat^{-1} \left[ \left( \Bmat{\star} - \estB{1} \right) \RiccSol{\estpara{}} + \Bmat{\star} \left( \RiccSol{\truth} - \RiccSol{\estpara{}} \right) \right] }{2} \notag \\
	&\leq& \Mnorm{\Rmat^{-1}}{2} \left[ \Mnorm{\Bmat{\star} - \estB{1}}{2} \Mnorm{\RiccSol{\estpara{}}}{2} + \Mnorm{\Bmat{\star}}{1} \Mnorm{\RiccSol{\truth} - \RiccSol{\estpara{}}}{2} \right] , \label{LipschitzLemmaEq9}
	\end{eqnarray*} 
	we get the first desired result. To establish the second result, it suffices to let $\estpara{0}$ be the one for which the above supremum over $\curve$ is achieved. 
\end{proof}

\subsection{\bf Effects of sub-optimal linear feedback policies}
\begin{lemm} \label{LyapLemma}
	Consider a noiseless linear dynamical system with the stabilizable dynamics matrices $\estpara{}$. That is, $\diff \state{t} = \left( \estA{} \state{t} + \estB{} \action{t} \right) \diff t$, starting from $\state{0}=x$. Then, if we apply the linear feedback 
	$$\policy: ~~~\action{t}=\Gainmat{}\state{t},$$
	as long as $\mosteig{\estA{}+\estB{}\Gainmat{}}<0$, it holds that
	\begin{eqnarray*}
		&& \itointeg{0}{\infty}{\instantcost{\policy}{t} }{t} = x^{\top} \RiccSol{\estpara{}} x 
		+ \itointeg{0}{\infty}{ \norm{\Rmat^{1/2} \left( \Gainmat{} - \Optgain{\estpara{}} \right) e^{\left( \estA{}+\estB{}\Gainmat{} \right)t} x }{}^2  }{t}  .
	\end{eqnarray*}
\end{lemm}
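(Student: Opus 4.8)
The plan is to certify the identity through the Lyapunov function $\state{t}^\top\RiccSol{\estpara{}}\state{t}$ and a completing-the-square step against the algebraic Riccati equation. First I would integrate the closed-loop dynamics: under $\action{t}=\Gainmat{}\state{t}$ the noiseless system becomes $\diff\state{t}=(\estA{}+\estB{}\Gainmat{})\state{t}\,\diff t$, so $\state{t}=e^{(\estA{}+\estB{}\Gainmat{})t}x$, and since $\mosteig{\estA{}+\estB{}\Gainmat{}}<0$ this trajectory (hence also $\state{t}^\top\RiccSol{\estpara{}}\state{t}$) decays to zero exponentially. Because $\action{t}=\Gainmat{}\state{t}$, the running cost is the quadratic form $\instantcost{\policy}{t}=\state{t}^\top(\Qmat+\Gainmat{}^\top\Rmat\Gainmat{})\state{t}$, so $\itointeg{0}{\infty}{\instantcost{\policy}{t}}{t}$ is a single Lyapunov-type integral in $x$.

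Next I would differentiate the Lyapunov function along the trajectory,
\begin{equation*}
\frac{\diff}{\diff t}\left(\state{t}^\top\RiccSol{\estpara{}}\state{t}\right)=\state{t}^\top\left[(\estA{}+\estB{}\Gainmat{})^\top\RiccSol{\estpara{}}+\RiccSol{\estpara{}}(\estA{}+\estB{}\Gainmat{})\right]\state{t},
\end{equation*}
and reduce the bracketed matrix using the Riccati equation $\MatOpAve{\estpara{}}{\RiccSol{\estpara{}}}=0$, which supplies $\estA{}^\top\RiccSol{\estpara{}}+\RiccSol{\estpara{}}\estA{}=\RiccSol{\estpara{}}\estB{}\Rmat^{-1}\estB{}^\top\RiccSol{\estpara{}}-\Qmat$, together with the definition $\Optgain{\estpara{}}=-\Rmat^{-1}\estB{}^\top\RiccSol{\estpara{}}$ (equivalently $\estB{}^\top\RiccSol{\estpara{}}=-\Rmat\Optgain{\estpara{}}$). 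The crucial algebraic step is the completing-the-square identity
\begin{equation*}
(\estA{}+\estB{}\Gainmat{})^\top\RiccSol{\estpara{}}+\RiccSol{\estpara{}}(\estA{}+\estB{}\Gainmat{})=-\left(\Qmat+\Gainmat{}^\top\Rmat\Gainmat{}\right)+\left(\Gainmat{}-\Optgain{\estpara{}}\right)^\top\Rmat\left(\Gainmat{}-\Optgain{\estpara{}}\right),
\end{equation*}
which I would verify by substituting the two relations above and matching the cross terms $\Gainmat{}^\top\Rmat\Optgain{\estpara{}}$ and $\Optgain{\estpara{}}^\top\Rmat\Gainmat{}$ on each side.

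Combining these gives $\frac{\diff}{\diff t}(\state{t}^\top\RiccSol{\estpara{}}\state{t})=-\instantcost{\policy}{t}+\norm{\Rmat^{1/2}(\Gainmat{}-\Optgain{\estpara{}})\state{t}}{}^2$. Integrating over $[0,\infty)$, the left side evaluates to $\lim_{t\to\infty}\state{t}^\top\RiccSol{\estpara{}}\state{t}-x^\top\RiccSol{\estpara{}}x=-x^\top\RiccSol{\estpara{}}x$, where the limit vanishes by the exponential stability established in the first step. Rearranging and substituting $\state{t}=e^{(\estA{}+\estB{}\Gainmat{})t}x$ then yields exactly the claimed formula.

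The only genuine obstacle is the completing-the-square identity: it is the sole point where both the Riccati equation and the precise form of $\Optgain{\estpara{}}$ enter, and careful sign bookkeeping is what makes the sub-optimality term collapse to the clean squared deviation $\norm{\Rmat^{1/2}(\Gainmat{}-\Optgain{\estpara{}})\state{t}}{}^2$ rather than leaving a residual cross term. The remaining pieces — solving the linear ODE, differentiating a quadratic form, and discarding the boundary term at infinity — are routine once $\mosteig{\estA{}+\estB{}\Gainmat{}}<0$ is in hand.
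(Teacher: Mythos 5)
Your proof is correct, and it takes a genuinely (if mildly) different route from the paper's. The paper works entirely at the level of matrix equations: it writes the total cost as $x^{\top} P x$ with $P=\itointeg{0}{\infty}{e^{\CLmat{2}^{\top}t}\left(\Qmat+\Gainmat{}^{\top}\Rmat\Gainmat{}\right)e^{\CLmat{2}t}}{t}$ for $\CLmat{2}=\estA{}+\estB{}\Gainmat{}$, derives the Lyapunov equation $\CLmat{2}^{\top}P+P\CLmat{2}+\Qmat+\Gainmat{}^{\top}\Rmat\Gainmat{}=0$ by a splitting-and-limit argument, subtracts the corresponding equation satisfied by $\RiccSol{\estpara{}}$, and then solves the resulting Lyapunov equation for the difference $P-\RiccSol{\estpara{}}$; at that point the same completing-the-square algebra you isolate (the paper's equation \eqref{LyapAuxEq}) collapses the forcing term to $\left(\Gainmat{}-\Optgain{\estpara{}}\right)^{\top}\Rmat\left(\Gainmat{}-\Optgain{\estpara{}}\right)$. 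You instead differentiate the scalar Lyapunov function $\state{t}^{\top}\RiccSol{\estpara{}}\state{t}$ along the trajectory, perform the identical square-completion pointwise in $t$, and integrate, discarding the boundary term at infinity by exponential stability. The two arguments share their only nontrivial ingredient, namely the cross-term cancellation via $\estB{}^{\top}\RiccSol{\estpara{}}=-\Rmat\Optgain{\estpara{}}$ combined with the Riccati equation $\MatOpAve{\estpara{}}{\RiccSol{\estpara{}}}=0$, but your dissipation-style version is slightly more elementary: it replaces the derivation and solution of two matrix Lyapunov equations by the fundamental theorem of calculus. What the paper's formulation buys in exchange is the identity $P=\RiccSol{\estpara{}}+\itointeg{0}{\infty}{e^{\CLmat{2}^{\top}t}\left(\Gainmat{}-\Optgain{\estpara{}}\right)^{\top}\Rmat\left(\Gainmat{}-\Optgain{\estpara{}}\right)e^{\CLmat{2}t}}{t}$ stated directly at the matrix level, which is the form invoked later in the proof of Theorem~\ref{StabThm}; but since $x$ is arbitrary, your scalar identity recovers that matrix identity anyway, exactly as the paper itself observes when it applies the lemma.
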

\begin{proof}
	Denote $\estD{1}=\estA{}+\estB{}\Optgain{\estpara{}}$ and $\CLmat{2}=\estA{}+\estB{}\Gainmat{}$. So, the dynamics equation $\diff \state{t} = \left( \estA{} \state{t} + \estB{} \Gainmat{}\state{t} \right) \diff t$ implies that $\state{t}=e^{\CLmat{2}t} x$, which leads to
	\begin{eqnarray*}
		&&\itointeg{0}{\infty}{\instantcost{\policy}{t} }{t} = \itointeg{0}{\infty}{ \state{t}^{\top} \left( \Qmat + \Gainmat{}^{\top} \Rmat \Gainmat{} \right) \state{t} }{t} \\
		&=& \itointeg{0}{\infty}{ x^{\top} e^{\CLmat{2}^{\top}t} \left( \Qmat + \Gainmat{}^{\top} \Rmat \Gainmat{} \right) e^{\CLmat{2}t} x }{t} = x^{\top} P x,
	\end{eqnarray*}
	where 
	\begin{eqnarray*}
		P &=& \itointeg{0}{\infty}{ e^{\CLmat{2}^{\top}t} \left( \Qmat + \Gainmat{}^{\top} \Rmat \Gainmat{} \right) e^{\CLmat{2}t} }{t} \\
		&=& \itointeg{0}{\epsilon}{ e^{\CLmat{2}^{\top}t} \left( \Qmat + \Gainmat{}^{\top} \Rmat \Gainmat{} \right) e^{\CLmat{2}t} }{t} \\
		&+& e^{\CLmat{2}^{\top}\epsilon} \left( \itointeg{0}{\infty}{ e^{\CLmat{2}^{\top}t} \left( \Qmat + \Gainmat{}^{\top} \Rmat \Gainmat{} \right) e^{\CLmat{2}t} }{t} \right) e^{\CLmat{2}\epsilon} \\
		&=& \itointeg{0}{\epsilon}{ e^{\CLmat{2}^{\top}t} \left( \Qmat + \Gainmat{}^{\top} \Rmat \Gainmat{} \right) e^{\CLmat{2}t} }{t} + e^{\CLmat{2}^{\top}\epsilon} P e^{\CLmat{2}\epsilon},
	\end{eqnarray*}
	which yields to
	\begin{eqnarray*}
		\Qmat + \Gainmat{}^{\top} \Rmat \Gainmat{}
		&=& \lim\limits_{\epsilon \to 0} \frac{1}{\epsilon} \itointeg{0}{\epsilon}{ e^{\CLmat{2}^{\top}t} \left( \Qmat + \Gainmat{}^{\top} \Rmat \Gainmat{} \right) e^{\CLmat{2}t} }{t} \\
		&=& \lim\limits_{\epsilon \to 0} \frac{1}{\epsilon} \left[ P - e^{\CLmat{2}^{\top}\epsilon} P + e^{\CLmat{2}^{\top}\epsilon} P - e^{\CLmat{2}^{\top}\epsilon} P e^{\CLmat{2}\epsilon} \right] \\
		&=& - \CLmat{2}^{\top} P - P \CLmat{2}.
	\end{eqnarray*}
	Similar to~\eqref{LyapDifferential}, it holds that $\estD{1}^{\top} \RiccSol{\estpara{}}+ \RiccSol{\estpara{}} \estD{1} + \Qmat + \Optgain{\estpara{}}^{\top} \Rmat \Optgain{\estpara{}}$. So, subtracting the latter two equalities, we get
	\begin{eqnarray} 
	&&\left( \CLmat{2} - \estD{1} \right)^{\top} \RiccSol{\estpara{}} + \RiccSol{\estpara{}} \left( \CLmat{2} - \estD{1} \right) \label{LyapLemmaProofEq1} \\
	&+& \CLmat{2}^{\top} \left( P- \RiccSol{\estpara{}} \right) + \left( P - \RiccSol{\estpara{}} \right) \CLmat{2} + S=0, \notag
	\end{eqnarray}
	where 
	\begin{equation*}
	S = \Gainmat{}^{\top} \Rmat \Gainmat{} - \Optgain{\estpara{}}^{\top} \Rmat \Optgain{\estpara{}}.
	\end{equation*}
	Because $\mosteig{\CLmat{2}} <0$, solving~\eqref{LyapLemmaProofEq1} for $P - \RiccSol{\estpara{}}$, and using the fact $\CLmat{2} - \estD{1} = \estB{} \left[ \Gainmat{} - \Optgain{\estpara{}} \right]$, we have
	\begin{equation*}
	P - \RiccSol{\estpara{}} = \itointeg{0}{\infty}{ e^{\CLmat{2}^{\top}t} F e^{\CLmat{2}t} }{t},
	\end{equation*}
	where
	\begin{equation*}
	F=S+ \left[ \Gainmat{} - \Optgain{\estpara{}} \right]^{\top} \estB{}^{\top} \RiccSol{\estpara{}} + \RiccSol{\estpara{}} \estB{} \left[ \Gainmat{} - \Optgain{\estpara{}} \right].
	\end{equation*}
	Then, using $\estB{}^{\top} \RiccSol{\estpara{}} = - \Rmat \Optgain{\estpara{}}$, after doing some algebra we obtain
	\begin{eqnarray} 
	S &+& \left[ \Gainmat{} - \Optgain{\estpara{}} \right]^{\top} \estB{}^{\top} \RiccSol{\estpara{}} + \RiccSol{\estpara{}} \estB{} \left[ \Gainmat{} - \Optgain{\estpara{}} \right] \notag\\
	&=& \left[ \Gainmat{} - \Optgain{\estpara{}} \right]^{\top} \Rmat \left[ \Gainmat{} - \Optgain{\estpara{}} \right]. \label{LyapAuxEq}
	\end{eqnarray}
	Thus, $P - \RiccSol{\estpara{}}$ is
	\begin{equation*}
	\itointeg{0}{\infty}{ e^{\CLmat{2}^{\top}t} \left[ \Gainmat{} - \Optgain{\estpara{}} \right]^{\top} \Rmat \left[ \Gainmat{} - \Optgain{\estpara{}} \right] e^{\CLmat{2}t} }{t} ,
	\end{equation*}
	which implies the desired result.
\end{proof}

\subsection{\bf Convergence of empirical covariance matrix of the state vectors}
\begin{lemm} \label{EmpCovLemma}
	Suppose that for $t \geq \episodetime{}$, the linear feedback $\Gainmat{}$ is applied to the system~\eqref{dynamics} such that $\mosteig{\CLmat{}}<0$, where $\CLmat{}=\Amat{\star}+\Bmat{\star}\Gainmat{}$. Then, we have
	\begin{eqnarray*}
		\lim\limits_{T \to \infty} \frac{1}{T} \itointeg{\episodetime{}}{\episodetime{}+T}{\state{t}\state{t}^{\top}}{t} = \itointeg{0}{\infty}{ e^{\CLmat{}s} \BMcoeff{}\BMcoeff{}^{\top} e^{\CLmat{}^{\top}s} }{s}.
	\end{eqnarray*}
\end{lemm}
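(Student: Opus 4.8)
The plan is to recognize the target matrix $\itointeg{0}{\infty}{e^{\CLmat{}s}\BMcoeff{}\BMcoeff{}^{\top}e^{\CLmat{}^{\top}s}}{s}$ as the unique solution of the Lyapunov equation $\CLmat{}\Sigma+\Sigma\CLmat{}^{\top}+\BMcoeff{}\BMcoeff{}^{\top}=0$, and then to force the time-average onto this equation via Ito calculus. Under the frozen feedback the state obeys $\diff\state{t}=\CLmat{}\state{t}\diff t+\BMcoeff{}\diff\BM{t}$, so I would write $\covmat{T}=\frac1T\itointeg{\episodetime{}}{\episodetime{}+T}{\state{t}\state{t}^{\top}}{t}$ and apply Ito's Lemma to the matrix-valued process $\state{t}\state{t}^{\top}$. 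Using $\diff\BM{t}\diff\BM{t}^{\top}=I_{\noisedim}\diff t$, this yields $\diff\left(\state{t}\state{t}^{\top}\right)=\left(\CLmat{}\state{t}\state{t}^{\top}+\state{t}\state{t}^{\top}\CLmat{}^{\top}+\BMcoeff{}\BMcoeff{}^{\top}\right)\diff t+\BMcoeff{}\diff\BM{t}\,\state{t}^{\top}+\state{t}\,\diff\BM{t}^{\top}\BMcoeff{}^{\top}$.

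Integrating over $[\episodetime{},\episodetime{}+T]$ and dividing by $T$, I obtain the identity $\mathcal{G}(\covmat{T})=-\BMcoeff{}\BMcoeff{}^{\top}+\frac{\state{\episodetime{}+T}\state{\episodetime{}+T}^{\top}-\state{\episodetime{}}\state{\episodetime{}}^{\top}}{T}-\frac{1}{T}M_T$, where $\mathcal{G}(\Sigma)=\CLmat{}\Sigma+\Sigma\CLmat{}^{\top}$ is the Lyapunov operator and $M_T$ denotes the continuous martingale obtained by integrating the stochastic differential $\BMcoeff{}\diff\BM{t}\,\state{t}^{\top}+\state{t}\,\diff\BM{t}^{\top}\BMcoeff{}^{\top}$ over $[\episodetime{},\episodetime{}+T]$. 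Because $\mosteig{\CLmat{}}<0$, the operator $\mathcal{G}$ is a linear bijection (in vectorized form it is $I\otimes\CLmat{}+\CLmat{}\otimes I$, with eigenvalues $\lambda_i+\lambda_j$ of negative real part), and its inverse is continuous; hence it suffices to show that the two remainder terms on the right vanish as $T\to\infty$, after which $\covmat{T}=\mathcal{G}^{-1}\!\left(\mathcal{G}(\covmat{T})\right)\to\mathcal{G}^{-1}\!\left(-\BMcoeff{}\BMcoeff{}^{\top}\right)$, which is exactly the claimed integral.

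To dispatch the remainders I would first record the uniform second-moment bound $\sup_{t}\E{\norm{\state{t}}{2}^{2}}<\infty$, which follows from the representation \eqref{stateevol} and Ito Isometry once $\mosteig{\CLmat{}}<0$. This gives $\E{\itointeg{\episodetime{}}{\episodetime{}+T}{\norm{\state{t}}{2}^{2}}{t}}=\order{T}$, and a Borel--Cantelli argument along integer times (with monotonicity of the integral) upgrades it to the almost-sure energy bound $\limsup_{T}\frac1T\itointeg{\episodetime{}}{\episodetime{}+T}{\norm{\state{t}}{2}^{2}}{t}<\infty$. The quadratic variation of $M_T$ is controlled by this energy, so $\langle M\rangle_T=\order{T}$ almost surely; the strong law for continuous martingales (with the trivial case $\langle M\rangle_\infty<\infty$ handled by martingale convergence) then gives $M_T/T\to0$ almost surely. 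The boundary term is treated the same way: the moment and Borel--Cantelli estimates yield $\norm{\state{\episodetime{}+T}}{2}^{2}=o(T)$ almost surely, so $\state{\episodetime{}+T}\state{\episodetime{}+T}^{\top}/T\to0$.

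I expect the main obstacle to be precisely this almost-sure control of the boundary and martingale remainders: the algebraic skeleton (the Ito identity together with inversion of the stable Lyapunov operator) is routine, whereas converting the $L^{1}$ energy estimate into almost-sure statements and invoking the martingale strong law with the correct growth rate of $\langle M\rangle_T$ is where the care lies. A reassuring feature is that the argument never uses Gaussianity or any particular law for $\state{\episodetime{}}$ — only $\E{\norm{\state{\episodetime{}}}{2}^{2}}<\infty$ and stability of $\CLmat{}$ — so it applies verbatim to the trajectories produced by Algorithm~\ref{algo1}, whose initial state on each window is the non-Gaussian state inherited from the previous episode.
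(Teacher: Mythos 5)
Your proposal is correct and follows essentially the same route as the paper's proof: apply Ito's formula to $\state{t}\state{t}^{\top}$, integrate and divide by $T$, invert the stable Lyapunov operator (which the paper writes explicitly as the map $N \mapsto \itointeg{0}{\infty}{e^{\CLmat{}s} N e^{\CLmat{}^{\top}s}}{s}$ rather than abstractly via the Kronecker-sum bijection), and show that the boundary and martingale remainders vanish. The only difference is bookkeeping on those remainders: the paper dispatches the martingale term by citing Doob's martingale convergence theorem and the boundary term by a one-line stability claim, whereas you invoke the martingale strong law together with moment bounds and Borel--Cantelli, which is an equally standard and, if anything, slightly more explicit treatment of the same steps.
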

\begin{proof}
	First, denote 
	\begin{eqnarray*}
		\empiricalcovmat{T} = \frac{1}{T} \itointeg{\episodetime{}}{\episodetime{}+T}{\state{t}\state{t}^{\top}}{t}.
	\end{eqnarray*}
	Then, define the matrix $\statetwo{t}=\state{t}\state{t}^{\top}$, and apply Ito's Formula~\citep{baldi2017stochastic} to find $\diff \statetwo{t}$:
	\begin{eqnarray*}
		\diff \statetwo{t} = \diff \state{t}\state{t}^{\top} + \state{t} \diff \state{t}^{\top} + \diff \state{t} \diff \state{t}^{\top}.
	\end{eqnarray*}
	Plugging in for $\diff \state{t}$ from~\eqref{dynamics}, we obtain
	\begin{eqnarray*}
		\diff \statetwo{t} &=& \left( \CLmat{} \state{t} \diff t + \BMcoeff{} \diff \BM{t} \right) \state{t}^{\top} \\
		&+& \state{t} \left( \CLmat{} \state{t} \diff t + \BMcoeff{} \diff \BM{t} \right)^{\top} + \BMcoeff{} \BMcoeff{}^{\top} \diff t,
	\end{eqnarray*}
	where we used the facts $\diff t \diff t =0$, $\diff\BM{t} \diff t=0$, and Ito Isometry $\diff \BM{t} \diff \BM{t}^{\top}=\diff t I_{\noisedim}$ \citep{baldi2017stochastic}. Thus, we have
	\begin{eqnarray*}
		\statetwo{\episodetime{}+T} - \statetwo{\episodetime{}} = \itointeg{\episodetime{}}{\episodetime{}+T}{ \diff \statetwo{t} }{t} 
		&=& \itointeg{\episodetime{}}{\episodetime{}+T}{ \left(\CLmat{} \state{t} \state{t}^{\top} + \state{t} \state{t}^{\top} \CLmat{}^{\top} + \BMcoeff{} \BMcoeff{}^{\top}\right) } {t} + T M_{\episodetime{},T} ,
	\end{eqnarray*}
	where 
	\begin{eqnarray*}
		M_{\episodetime{},T} = \frac{1}{T} \itointeg{\episodetime{}}{\episodetime{}+T}{ \state{t} }{\BM{t}^{\top}}  \BMcoeff{}^{\top} + \frac{1}{T} \left( \itointeg{\episodetime{}}{\episodetime{}+T}{ \state{t} }{\BM{t}^{\top}}  \BMcoeff{}^{\top} \right)^{\top}.
	\end{eqnarray*}
	This can equivalently be written as 
	\begin{eqnarray*}
		\frac{1}{T} \left(\state{\episodetime{}+T} \state{\episodetime{}+T}^{\top} - \state{\episodetime{}} \state{\episodetime{}}^{\top}\right)
		&=& \CLmat{} \empiricalcovmat{T} + \empiricalcovmat{T} \CLmat{}^{\top} + \BMcoeff{}\BMcoeff{}^{\top} + M_{\episodetime{},T}.
	\end{eqnarray*}
	Since $\mosteig{\CLmat{}}<0$, the latter equality implies that $\empiricalcovmat{T} $ is
	{\small \begin{eqnarray*}
			\itointeg{0}{\infty}{ e^{\CLmat{}s} \left( \BMcoeff{}\BMcoeff{}^{\top} + M_{\episodetime{},T} +  \frac{1}{T}\state{\episodetime{}} \state{\episodetime{}}^{\top}  -  \frac{1}{T} \state{\episodetime{}+T} \state{\episodetime{}+T}^{\top}  \right)e^{\CLmat{}^{\top}s} }{s}.
	\end{eqnarray*}}
	Now, according to the following statements, the above leads to the desired result, because the terms corresponding to $M_{\episodetime{},T}, \state{\episodetime{}}, \state{\episodetime{}+T}$ vanish as $T$ grows. 	
	\begin{enumerate}
		\item 
		Clearly, it holds that $\lim\limits_{T \to \infty} T^{-1/2} \norm{\state{\episodetime{}}}{2} =0$.
		\item 
		Since $\mosteig{\CLmat{}}<0$, the expression 
		\begin{eqnarray*}
			\state{\episodetime{}+T} = e^{\CLmat{}T} \state{\episodetime{}} + \itointeg{\episodetime{}}{\episodetime{}+T}{e^{\CLmat{}(\episodetime{}+T-s)} \BMcoeff{}}{\BM{s}}
		\end{eqnarray*}
		implies that $\lim\limits_{T \to \infty} T^{-1/2} \norm{\state{\episodetime{}+T}}{2} =0$.
		\item 
		Putting $\mosteig{\CLmat{}}<0$ together with Doob's Martingale Convergence Theorem~\citep{oksendal2013stochastic,baldi2017stochastic}, we get
		$\lim\limits_{T \to \infty} M_{\episodetime{},T} =0$.
	\end{enumerate}
\end{proof}

\subsection{\bf Manifolds of dynamical systems with equal optimal feedback matrices} \label{appF6}
\begin{lemm} \label{OptManifoldLemma}
	Consider the set of dynamics matrices $\estpara{}$ that share optimal feedback with $\estpara{0}$:
	\begin{eqnarray*}
		\manifold{M}_0 = \left\{ \left[ \estpara{} \right] \in \R^{\statedim \times \left(\statedim + \controldim\right)} : \Optgain{\estpara{}}=\Optgain{\estpara{0}} \right\}.
	\end{eqnarray*} 
	Then, $\manifold{M}_0$ is a manifold of dimension $\statedim^2$.
\end{lemm}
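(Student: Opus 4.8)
The plan is to exhibit $\manifold{M}_0$ as the image of an explicit global parametrization by an open subset of $\R^{\statedim^2}$, and then to check that this parametrization is a diffeomorphism. Write $L_0=\Optgain{\estpara{0}}$, a fixed $\controldim\times\statedim$ matrix. By the definition of $\Optgain{\cdot}$ before \eqref{OptimalPolicy} together with \eqref{LyapInteg}, a stabilizable pair $\left[\estpara{}\right]$ belongs to $\manifold{M}_0$ if and only if its Riccati solution $M=\RiccSol{\estpara{}}\succ0$ satisfies
\begin{equation} \label{ManifProofEq1}
\estB{}^{\top} M = -\Rmat L_0, \qquad \estA{}^{\top} M + M \estA{} - M \estB{} \Rmat^{-1} \estB{}^{\top} M + \Qmat = 0,
\end{equation}
where the first identity is equivalent to $\Optgain{\estpara{}}=-\Rmat^{-1}\estB{}^{\top}M=L_0$ and the second is $\MatOpAve{\estpara{}}{M}=0$. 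The idea is to use $M$ and the skew-symmetric part of $M\estA{}$ as free coordinates.

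First I would solve \eqref{ManifProofEq1} for the parameters in terms of $M$. Since $M\succ0$ is invertible, transposing the gain identity gives
\begin{equation} \label{ManifProofEq2}
\estB{} = -M^{-1} L_0^{\top} \Rmat,
\end{equation}
so $\estB{}$ is fully determined by $M$. Using $M\estB{}\Rmat^{-1}\estB{}^{\top}M=L_0^{\top}\Rmat L_0$, the algebraic Riccati equation collapses to the linear Sylvester relation
\begin{equation} \label{ManifProofEq3}
\estA{}^{\top} M + M \estA{} = L_0^{\top}\Rmat L_0 - \Qmat.
\end{equation}
Writing $N=M\estA{}$, equation \eqref{ManifProofEq3} fixes the symmetric part $\tfrac12(N+N^{\top})=\tfrac12\left(L_0^{\top}\Rmat L_0 - \Qmat\right)$ and leaves the skew-symmetric part $K=\tfrac12(N-N^{\top})$ entirely free, whence
\begin{equation} \label{ManifProofEq4}
\estA{} = M^{-1}\left( \tfrac12\left(L_0^{\top}\Rmat L_0 - \Qmat\right) + K \right).
\end{equation}
Thus every member of $\manifold{M}_0$ arises from a pair $(M,K)$, with $M$ symmetric positive definite and $K$ skew-symmetric, via \eqref{ManifProofEq2} and \eqref{ManifProofEq4}.

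Next I would verify the converse, so that the correspondence is a bijection. Given $M\succ0$ and skew $K$, define $\estB{},\estA{}$ by \eqref{ManifProofEq2} and \eqref{ManifProofEq4}, and set $\estD{}=\estA{}+\estB{}L_0$. A short computation using $M\estB{}=-L_0^{\top}\Rmat$ and \eqref{ManifProofEq3} yields $\estD{}^{\top}M+M\estD{}=-\left(\Qmat+L_0^{\top}\Rmat L_0\right)$. Because $\Qmat\succ0$ the right-hand side is negative definite, so the Lyapunov theorem forces $\mosteig{\estD{}}<0$; hence $L_0$ stabilizes $\left[\estpara{}\right]$, the matrix $M$ is the unique stabilizing solution of $\MatOpAve{\estpara{}}{M}=0$, i.e. $M=\RiccSol{\estpara{}}$, and $\Optgain{\estpara{}}=-\Rmat^{-1}\estB{}^{\top}M=L_0$. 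The coordinate space $\{(M,K)\}$ is the product of the open cone of symmetric positive-definite matrices (dimension $\statedim(\statedim+1)/2$) with the skew-symmetric matrices (dimension $\statedim(\statedim-1)/2$), whose total dimension is exactly $\statedim^2$.

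Finally I would promote this bijection to a diffeomorphism, which is where the only real subtlety lies. The forward map $(M,K)\mapsto\left[\estpara{}\right]$ is manifestly smooth, being rational in the entries of $M$. Its inverse sends $\left[\estpara{}\right]$ to $\left(\RiccSol{\estpara{}},\tfrac12\left(M\estA{}-\estA{}^{\top}M\right)\right)$, and the single nontrivial point is the smooth (indeed real-analytic) dependence of the stabilizing solution $\RiccSol{\estpara{}}$ on the coefficients. This follows from the implicit function theorem applied to $\MatOpAve{\estpara{}}{M}=0$: the differential of $M\mapsto\MatOpAve{\estpara{}}{M}$ is the Lyapunov operator $X\mapsto\estD{}^{\top}X+X\estD{}$, which is invertible precisely because $\mosteig{\estD{}}<0$ — the same nondegeneracy already used in the policy-differentiation computation of Lemma~\ref{LipschitzLemma} (cf. \eqref{LipschitzLemProofEq2}). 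Hence the parametrization is a diffeomorphism onto $\manifold{M}_0$, exhibiting it as a manifold of dimension $\statedim^2$.
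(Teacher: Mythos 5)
Your proof is correct, and it takes a genuinely different route from the paper's. The paper works infinitesimally, in the style of its policy-differentiation technique: it perturbs $\left[\estpara{0}\right]$ along a direction $\left[M,N\right]$, computes the directional derivative $\erterm{}$ of the Riccati solution as a Lyapunov integral, derives from constancy of the gain the linear tangent-space equation $N^{\top}\RiccSol{\estpara{0}}+\estB{0}^{\top}\erterm{}=0$, and shows the solutions $(M,N)$ are in bijection with the $\statedim\times\statedim$ matrices $P=M+N\Optgain{\estpara{0}}$, so each tangent space has dimension $\statedim^{2}$. You instead solve the defining constraints exactly and globally: with $M=\RiccSol{\estpara{}}\succ0$, the gain identity determines $\estB{}=-M^{-1}\Optgain{\estpara{0}}^{\top}\Rmat$, and the algebraic Riccati equation then collapses to a Sylvester relation that fixes the symmetric part of $M\estA{}$ and leaves its skew part $K$ free, producing a global chart $(M,K)\mapsto\left[\estpara{}\right]$ over the product of the positive-definite cone and the skew-symmetric matrices, of dimension $\statedim(\statedim+1)/2+\statedim(\statedim-1)/2=\statedim^{2}$. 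Your approach buys a stronger conclusion: a global identification of $\manifold{M}_0$ with a convex open cone times a vector space (hence, e.g., connectedness), an explicit converse in which the Lyapunov argument certifies that every chart point is stabilizable with stabilizing Riccati solution exactly $M$ (uniqueness from Theorem~\ref{OptimalityProof} then gives $M=\RiccSol{\estpara{}}$, as you note), and a manifold structure established without any implicit regular-value or constant-rank step, the only analytic input being smoothness of $\RiccSol{\cdot}$, which your implicit-function-theorem argument (invertibility of the Lyapunov operator $X\mapsto\estD{}^{\top}X+X\estD{}$ when $\mosteig{\estD{}}<0$) correctly supplies. The paper's computation, in exchange, is shorter, stays entirely within the Lyapunov-integral calculus it has already developed, and directly exhibits the tangent space; since the downstream use in the proof of Theorem~\ref{BoundsThm} needs only the dimension count, either argument serves that purpose.
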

\begin{proof}
	Suppose that for the matrix $\left[ \estpara{} \right] = \left[ \estpara{0} \right] + \epsilon \left[M,N\right]$, it holds that $\Optgain{\estpara{}}=\Optgain{\estpara{0}}$. We find the derivative of $\Optgain{\estpara{0}}$ along the direction $\left[M,N\right]$. First, using the expressions in \eqref{LyapDifferential} for $\estpara{}$ and for $\estpara{0}$, we get
	\begin{eqnarray*}
		&&\left(\CLmat{0}+\epsilon M+\epsilon N\Optgain{\estpara{0}}\right)^{\top} \RiccSol{\estpara{}} \\
		&+& \RiccSol{\estpara{}} \left(\CLmat{0}+\epsilon M+\epsilon N\Optgain{\estpara{0}}\right) \\
		&=& - \Qmat - \Optgain{\estpara{0}}^{\top} \Rmat \Optgain{\estpara{0}} \\
		&=& \CLmat{0}^{\top} \RiccSol{\estpara{0}} + \RiccSol{\estpara{0}} \CLmat{0} ,
	\end{eqnarray*}
	where $\CLmat{0}=\estA{0}+\estB{0}\Optgain{\estpara{0}}$. Simplifying the above expressions and letting $\epsilon \to 0$, for the matrix
	\begin{eqnarray*}
		\erterm{}=\lim\limits_{\epsilon \to 0} \epsilon^{-1} \left( \RiccSol{\estpara{}} - \RiccSol{\estpara{0}} \right),
	\end{eqnarray*}
	we have
	\begin{eqnarray*}
		&& \CLmat{0}^{\top} \erterm{} + \erterm{} \CLmat{0} + \left(M+ N\Optgain{\estpara{0}}\right)^{\top} \RiccSol{\estpara{0}} \\
		&+& \RiccSol{\estpara{0}} \left(M+ N\Optgain{\estpara{0}}\right) =0.
	\end{eqnarray*}
	Thus, since according to Theorem~\ref{OptimalityProof}, $\mosteig{\CLmat{0}}<0$, it yields to
	\begin{eqnarray*}
		\erterm{} = \itointeg{0}{\infty}{ e^{\CLmat{0}^{\top}t} F e^{\CLmat{0}t} }{t},
	\end{eqnarray*}
	where
	\begin{eqnarray*}
		F &=& \left(M+ N\Optgain{\estpara{0}}\right)^{\top} \RiccSol{\estpara{0}} \\
		&+& \RiccSol{\estpara{0}} \left(M+ N\Optgain{\estpara{0}}\right).
	\end{eqnarray*}
	On the other hand, $\Optgain{\estpara{}}=-\Rmat^{-1} \estB{}^{\top} \RiccSol{\estpara{}}$ gives 
	\begin{eqnarray*}
		0 &=& \lim\limits_{\epsilon \to 0} \frac{1}{\epsilon} \left(\estB{}^{\top} \RiccSol{\estpara{}} - \estB{0}^{\top} \RiccSol{\estpara{0}}\right) \\
		&=& \lim\limits_{\epsilon \to 0} \frac{1}{\epsilon} \Bigg[ \left(\estB{}^{\top} \right. - \left. \estB{0}^{\top}\right) \RiccSol{\estpara{}}  \\
		&-& \estB{0}^{\top} \left(\RiccSol{\estpara{0}} - \RiccSol{\estpara{}}\right)\Bigg] \\
		&=& N^{\top} \RiccSol{\estpara{0}} + \estB{0}^{\top} \erterm{}.
	\end{eqnarray*}
	So, $\manifold{M}_0$ is a manifold, and its tangent space consists of matrices $M,N$ satisfying the above equation. To find the dimension, select a $\statedim \times \statedim$ matrix $P$ arbitrarily, and let $N$ be 
	\begin{eqnarray} \label{SuppAuxEq1}
		N = -\RiccSol{\estpara{0}}^{-1} \itointeg{0}{\infty}{ e^{\CLmat{0}^{\top}t} \left[ P^{\top} \RiccSol{\estpara{0}} + \RiccSol{\estpara{0}} P  \right] e^{\CLmat{0}t} \estB{0} }{t} =0.
	\end{eqnarray}
	
	Note that since $\eigmin{\Qmat}>0$, the inverse $\RiccSol{\estpara{0}}^{-1}$ exists. Then, solve for $M$ according to $M+N \Optgain{\estpara{0}}=P$. Therefore, the matrices $M,N$ satisfy in $N^{\top} \RiccSol{\estpara{0}} + \estB{0}^{\top} \erterm{}=0$, and so correspond to a member of $\manifold{M}_0$. Conversely, every matrices $M,N$ in the tangent space of $\manifold{M}_0$ provide a $\statedim \times \statedim$ matrix $P=M+N \Optgain{\estpara{0}}$ such that $N^{\top} \RiccSol{\estpara{0}} + \estB{0}^{\top} \erterm{}=0$. Thus, $\manifold{M}_0$ is of dimension $\statedim^2$, which is the desired result.
\end{proof}

\end{document}